\newif\ifpictures
\newif\ifcomment
\newif\if@restonecol
\def \cV { {\mathcal V} }
\def \pV { {\mathcal V} }
\def \xV { {\mathcal V} }
\def \cM { {\mathcal M} }
\def \cW { {\mathcal V} }
\def \cS { {\mathcal S} }
\def \dL { {\mathcal L} }
\def \dD { {\Omega} }
\def \coeff {{\rm coeff}}
\def \lcoeff {{\rm lcoeff}}
\def \lpp {{\rm lm}}
\newcommand{\LX}{{{\mathcal L_M}}}
\newcommand{\DD}{{{\mathcal D_M}}}
\newcommand{\su}{{{\mathcal S}\left(\Vector{u}\right)}}
\newcommand{\h}{{E_{\Vector{f}}}}
\newcommand{\B}{R}
\newcommand{\F}{E_{\Vector{F}}}
\DeclareMathOperator{\SmallAmbientVarietySU}{\widehat \cV(\cS)}
\DeclareMathOperator{\BigAmbientVarietySU}{\tilde \cV(\cS)}
\DeclareMathOperator{\proj}{proj}
\author{Xiaoxian Tang}
\address{Xiaoxian Tang, Department of Mathematics, Texas A\&M University,  College Station, TX 77843-3368, USA}
\email{xiaoxian@math.tamu.edu}
\author{Timo de Wolff}
\address{Timo de Wolff, Technische Universit\"at Berlin, Institut f\"ur Mathematik, Sekr. MA 6-2, Stra{\ss}e des 17.~Juni 136, 10623 Berlin,
 Germany\medskip}
 \email{dewolff@math.tu-berlin.de}
\author{Rukai Zhao}
\address{Rukai Zhao, Department of Computer Science \& Engineering, Texas A\&M University,  College Station, TX 77843, USA}
\email{zhaorukai@tamu.edu}
\subjclass[2010]{Primary: 13P15, 68W30; secondary: 13P05, 13P10, 14Q20, 62H05 \textit{ACM Subject Classification:} G.1.1, G.3, I.1.2}
\keywords{Maximum likelihood estimation, Likelihood equation, Real root classification, Discriminant, Elimination ideal.}
\title[]{Computing Elimination Ideals  and Discriminants of Likelihood Equations}
\begin{document}

\begin{abstract}
We develop a probabilistic algorithm for computing elimination ideals of likelihood equations, which is for larger models by far more efficient than 
directly computing Gr\"obner bases or the interpolation method proposed in \cite{RT2015, Tang2017}. 
The efficiency is improved by a theoretical result showing that the sum of data variables appears in most coefficients of the generator polynomial of elimination ideal. 
Furthermore, applying the known structures of Newton polytopes of discriminants, we can also efficiently deduce discriminants of the elimination ideals. For instance, the discriminants  of $3\times 3$ matrix model (Model \ref{ex:l6}) and one Jukes-Cantor model (Model \ref{ex:l9}) in phylogenetics (with sizes over $30$ GB and $8$ GB text files, respectively) can be computed by our methods. 
% % % XIAOXIAN'S ABSTRACT
% % % In this paper,  we propose a probabilistic algorithm for computing elimination ideals of likelihood equations, which are more efficient than directly computing Gr\"obner Bases or the interpolation method proposed in \cite{Tang2017} for the statistical models with maximum-likelihood-degrees greater than $5$. 
% % % The efficiency is improved by the structure that the sum of data appears in most coefficients of the generator polynomial of elimination ideal. 
% % %  Furthermore, applying the known structures of Newton polytopes of discriminants, we can also efficiently deduce discriminants of the elimination ideals.  
% % %  For instance, we have computed the discriminants for $3\times 3$ matrix model and one Jukes-Cantor model in phylogenetics with total degrees $342$ and $176$, respectively.
  \end{abstract}

 \maketitle

%%%%%%%%%%%%%%%%%%%%%%%%%%%%
\section{Introduction}\label{sec:intro}

%\timo{One more general comment: There are multiple defined labels in some of the algorithms. In fact, to my understanding, these labels using the nllabel command are not referenced later. Thus, they should be removed entirely. However, I leave this to you since you wrote the code for the algorithms.}

This work is motivated by the \struc{\textit{maximum likelihood estimation}} problem in statistics: 
\begin{quote}
	\textit{Which probability distribution describes a given data set optimally for a chosen statistical model?}
\end{quote}
A standard way to answer this question is to determine a point in the model that maximizes a \struc{\textit{likelihood function}}; see \eqref{eq:mle}. 
When the model is algebraic, see Definition \ref{definition:StatisticalModel}, and the data is discrete (i.e., a list of non-negative integers), then all critical points of the likelihood function can be found by solving a system of \struc{\textit{likelihood equations}} \eqref{eq:lle} via applying Lagrange multipliers. This motivates an important branch in algebraic statistics \cite{SAB2005, CHKS2006, BHR2007, HS2010,Uhler2012, GDP2012, EJ2014, HS2014, HRS,Rod14, ABBGHHNRS2017}. 
 
 Likelihood equations form an algebraic system in probability variables $p_0, \ldots, p_n$, Lagrange multipliers $\lambda_1, \ldots, \lambda_{s+1}$, and parameters $u_0, \ldots, u_n$ representing the data obtained from statistical experiments:
 \[f_0(u_0, \ldots, u_n; p_0, \ldots, p_n, \lambda_1, \ldots, \lambda_{s+1})= \cdots = f_{n+s+1}(u_0, \ldots, u_n; p_0, \ldots, p_n, \lambda_1, \ldots, \lambda_{s+1})=0.\] 
Given such a system with generically chosen data vector $(u_0, \ldots, u_n)$, the number of complex solutions is a finite non-negative constant, called the \struc{\textit{maximum-likelihood-degree (ML-degree)}}; see Definition \ref{def:MLDegree} and \cite{SAB2005, Huh13,  HS2014, BW15}. 

Since the variables $p_i$ represent probabilities, one is especially interested in a \struc{\textit{real solution classification}} \cite{BP2001, DV2005, CDMMX2010} of likelihood equations.
Unfortunately, this classification is very challenging, since it is a specific \struc{\textit{real quantifier elimination}} problem \cite{tarski1951, collins1975, arnon1988, mccallum1988, mccallum1999, grig88, hong1990a, hong1992, ch1991, renegar1992a, renegar1992b, renegar1992c, BPR1996, BPRRoadmap, BPRBook, brown2001a, brown2001b, brown2003, SS2003, SS2004, brown2012, HD2012, brown2013}, which is a fundamental problem in computational real algebraic geometry .  
 
 The number of real solutions only changes when the parameters (data) pass a set called the \struc{\textit {discriminant variety}}; see \citep[Definition 1]{DV2005} and Theorem \ref{th:th2xxt}. Hence, the discriminant varieties of likelihood equations, which is generated by homogenous polynomials \cite[Proposition 2]{Tang2017}, plays a core rule in real solution classification. We summarize the entire challenge in Figure \ref{fig:moti}. 

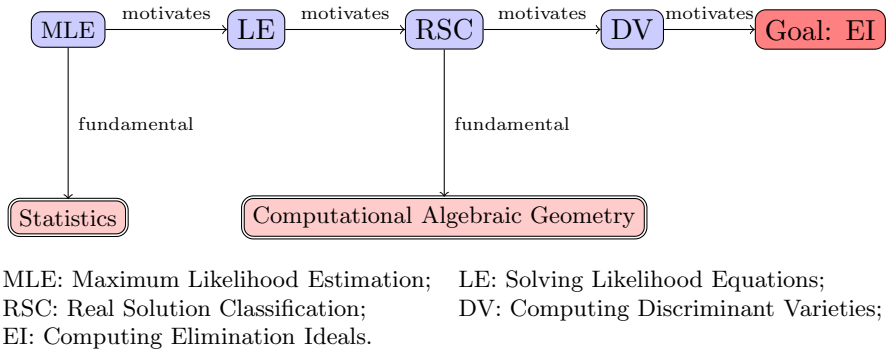
\begin{figure}[t]
\begin{tikzpicture}[node distance=2.5cm]
  \node [fill=blue!20,draw,  rounded corners]  (C) {\footnotesize MLE};
  \node [fill=red!20,draw, double, rounded corners] (F) [below of =C]{\footnotesize Statistics};
    \draw[->] (C) to node [right]{{\tiny fundamental}} (F);
  \node [fill=blue!20,draw,  rounded corners] (E) [right of = C]{LE};
  \node [fill=blue!20,draw,  rounded corners] (Bi) [right of=E] {RSC};
   \node [fill=red!20,draw, double, rounded corners] (H) [below of =Bi]{\footnotesize Computational Algebraic Geometry};
    \draw[->] (Bi) to node [right]{{\tiny fundamental}} (H);
  \node [fill=blue!20,draw,  rounded corners](Ai)[right of=Bi]{DV};
  \node [fill=red!50,draw,  rounded corners](Di)[right of =Ai]{Goal: EI};
  \draw[->] (C) to node [above]{{\tiny motivates}} (E);
    \draw[->] (E) to node [above]{{\tiny motivates}} (Bi);
   \draw[->] (Bi) to node [above]{{\tiny motivates}} (Ai);
    \draw[->] (Ai) to node [above]{{\tiny motivates}} (Di);
\end{tikzpicture}	
\\
\bigskip
{\footnotesize
\begin{tabular}{ll}
MLE: Maximum Likelihood Estimation; &
LE: Solving Likelihood Equations; \\
RSC: Real Solution Classification; &
DV: Computing Discriminant Varieties; \\
EI: Computing Elimination Ideals. &
%$\longrightarrow$: Motivates
\end{tabular}
}
\caption{An vizualization of the motivation for this work.}\label{fig:moti}
\end{figure}

In \cite{Tang2017}, Rodriguez and the first author studied how to compute discriminant variety, whose generator polynomial is called a \struc{\textit{data-discriminant}} in \cite[Definition 5]{Tang2017}, for a likelihood equation system efficiently. 
Experiments \cite[Tables 2--3]{Tang2017} suggest that the standard method \cite[section 3]{DV2005}, that is to compute Gr\"obner bases \cite{BB1965,faugere1993, faugere1999}, is not directly applicable for larger models. In \cite[Algorithm 2]{Tang2017} Rodriguez and the first author propose an probabilistic algorithm based on evaluation/interpolation techniques, which in theory works for arbitrary systems, in practice, however, are limited to small models with ML-degrees not greater than $6$.

\smallskip

The key idea of this article is to determine special structures of likelihood equations that help to improve the computational efficiency. In fact, likelihood equation systems are specific \struc{\textit{general zero-dimensional systems}}, see Definition \ref{def:gzd}, which are widely applied to problems in e.g., robotics \cite{CMABW2012, CJRM2014, CLO2015},  cellular differentiation\cite{cd2002, cd2005, cp2007, HTX2015}, and chemical reaction networks \cite{CF2005, CHW2008, HFC2013, JS2015, MFRCSD2016, CFMW2017, DMST2018}.

% Implied by the Shape Lemma \cite{BMMT1994, GHMMP1998, GLS2000, DMST2018},
The solution set of a general zero-dimensional system of likelihood equations $f_0, \ldots, f_{n+s+1}$ can be represented by a triangular set $\{T_0, \ldots, T_{n+s+1}\}$ for generic choice of parameters $u_0, \ldots, u_n$; see Proposition \ref{shape}.
% % % :
% % % \[T_0(u_0, \ldots, u_n; p_0)= T_1(u_0, \ldots, u_n; p_0, p_1)=\cdots = T_{n+s+1}(u_0, \ldots, u_n; p_0, \ldots, p_n, \lambda_1, \ldots, \lambda_{s+1})=0.\] 
% The first polynomial $T_0$ in this triangular system is a univariate polynomial in terms of only one probability variable $p_0$. \textcolor{red}{Under some general enough assumptions}, this  univariate polynomial generates the (radical) elimination ideal of likelihood equations: $\sqrt{\langle f_0, \ldots, f_{n+s+1}\rangle\cap \Q[u_0, \ldots, u_n, p_0]}$, and 
The discriminant variety of likelihood equations is a component of the discriminant locus of the univariate polynomial $T_0\in {\Q}(u_0, \ldots, u_n)[p_0]$ in this triangular system; see \cite[Lemma 3]{Tang2017}. Generally, it is sufficient to classify real roots of the univariate polynomial $T_0$ with respect to $p_0$ if we want to classify real solutions of the multivariate likelihood equations $f_0, \ldots, f_{n+s+1}$ with respect to $p_0, \ldots, p_{n}, \lambda_1, \ldots, \lambda_{s+1}$. 
 So 
 %rather than computing a discriminant variety directly, 
 %an outline for 
 computing $T_0$ is crucial for real solution classification. In a general case, 
 $T_0$ generates the elimination ideal 
 \[\sqrt{\langle f_0, \ldots, f_{n+s+1}\rangle\cap {\mathbb Q}[u_0,\ldots, u_n, p_0]}.\]
Therefore, although one expects computing $T_0$ to be easier than computing a discriminant variety directly, it still means to compute a Gr\"obner basis of likelihood equations with respect to a  lexicographic  monomial order. Experiments show that it is a non-trivial task; see column ``Standard" in Table \ref{literatureOld}. 
 
Motivated by the discussion above, summarized in Figure \ref{fig:moti}, 
the {\bf goal} of this paper is to  efficiently compute the elimination ideal with respect to all parameters (data) and one variable for
a given system of likelihood equations by studying special structures of likelihood equations. More precisely, we have the {\bf problem statement}:
\\
{\bf ~~Input:} Likelihood equations $f_0, \ldots, f_{n+s+1}\in [u_0,\ldots, u_n, p_0, \ldots, p_{n}, \lambda_1, \ldots, \lambda_{s+1}]$; \\
{\bf ~~Output:} A generator of $\sqrt{\langle f_0, \ldots, f_{n+s+1}\rangle\cap {\mathbb Q}[u_0,\ldots, u_n, p_0]}$.

In this article, we achieve the following {\bf main contributions}:
\begin{enumerate}
\item We explore the structure of likelihood equations and prove the following main theorem: under some general enough hypotheses,  the sum of data $u_0 + \cdots + u_n$ appears as a factor with a particular power in the coefficients  of the generator polynomial of the elimination ideal of Lagrange likelihood equations; see Theorem \ref{th:main}. As a consequence, the sum of data is a factor of the discriminants of this generator polynomial; see Corollary\ref{cry:discr}.
\item Applying the main theorem to the interpolation method \cite[Algorithm 2]{Tang2017}, we develope a probabilistic algorithm, Algorithm \ref{interpolation}, for computing 
elimination ideals of Lagrange likelihood equations. Our experiments, which are summarized in Table \ref{literatureOld}, show that 
Algorithm \ref{interpolation} is significantly more efficient than the standard approach of directly computing Gr\"obner bases, or the evaluation/interpolation in \cite{Tang2017} for statistical models beyond very small size, see Table \ref{literatureOld} and Section \ref{sec:implementation} for further details. 

Applying the elimination ideals computed by Algorithm \ref{interpolation} and Corollary \ref{cry:discr}, we are in particular able to compute the discriminants of  $3\times 3$ matrix model (\ref{ex:l6}) and one Jukes-Cantor model (\ref{ex:l9}) in phylogenetics \cite[Chapter 15]{seth2018}, see Table \ref{comparediscr}, which was impossible before. We point out that these are gigantic polynomials, whose total degrees are $342$ and $176$, respectively, and which take several GB memory when stored in a text file; see Table \ref{comparediscr} for further details.
\end{enumerate}

%\timo{I am not sure whether the blue is not rather confusing than helpful.}

\begin{table}[h]
\small
\centering
\label{literatureOld}
\begin{tabular}{|c|c|c|c|c|c|} \hline
\multirow{2}{*}{Models}&\multirow{2}{*}{$\# p_i$}&\multirow{2}{*}{ML-Degree}&
\multicolumn{3}{|c|}{Timings}\\
\cline{4-6} 
 &&&Standard& Interpolation &Algorithm \ref{interpolation}\\ \hline
  Model \ref{ex:l1} &4&3  &{\bf 0.046} s & 1.831 s &0.525 s\\\hline
  Model \ref{ex:l2} &6&2&{\bf 0.524} s &24.983 s & 2.310 s\\\hline
  Model \ref{ex:l3} &6&4& {\bf 3.211} s& 282.425 s& 16.174 s\\ \hline
  Model \ref{ex:l4} &6&6 & \textcolor{red}{$\infty$}&7933.230 s& {\bf 782.676} s\\ \hline
    Model \ref{ex:l5} &5&12 & \textcolor{red}{$\infty$} &10726.268  s& {\bf 761.257} s\\ \hline
     \cellcolor{blue!25}Model \ref{ex:l6} & \cellcolor{blue!25}9& \cellcolor{blue!25}10 & \cellcolor{blue!25}\textcolor{red}{$\infty$}  & \cellcolor{blue!25}\textcolor{red}{$>$ {\it 1583 d}} & \cellcolor{blue!25}{\bf 14} d\\ \hline
        Model \ref{ex:l7} &5&23 &\textcolor{red}{$\infty$}  &9919.260 s& {\bf 4624.575} s\\ \hline
          Model \ref{ex:l8} &8&14 &\textcolor{red}{$\infty$} & \textcolor{red}{$>$ {\it  4667 d}} & {\bf \textcolor{red}{$>$  {\it 15 d}}} \\ \hline
             \cellcolor{blue!25}Model \ref{ex:l9} & \cellcolor{blue!25}8& \cellcolor{blue!25}9 & \cellcolor{blue!25}\textcolor{red}{$\infty$} & \cellcolor{blue!25}\textcolor{red}{$>$ {\it 39 d}} &  \cellcolor{blue!25}{\bf 2} d\\ \hline
\end{tabular}
\smallskip 
\caption{Runtimes for computing elimination ideals (s: seconds; d: days).
The column ``standard'' constains the runtimes via a regular \texttt{FGb} Gr\"obner basis computation, the column ``Interpolation'' contains the runtimes for \cite[Algorithm 2]{Tang2017}, and the last column contains the runtimes for our Algorithm \ref{interpolation}.}
%"interpolation" is run by InterCoeffMainTest in xiaoxianNaive
%"algorithm 1" is run by AllCoeffMainTest in xiaoxian
\end{table}

The article is organized as follows. Section \ref{sec:PrincipalEliminationIdealsliminaries} are preliminaries. We introduce the necessary notions and results from commutative algebra, on elimination ideals, algebraic statistics, and from the first author's previous paper \cite{Tang2017} on computing discriminant varieties  of likelihood equations. 
In Section \ref{sec:PrincipalEliminationIdeals}, we review/discuss the specialization properties of Gr\"obner bases, (radical) elimination ideals and multivariate factorization, and introduce  general zero-dimensional systems.
In Section \ref{sec:structure}, we prove the main results, Theorem \ref{th:main}, and Corollary \ref{cry:discr}.
%prove our main theoretical result: the sum of data will appear as a factor in the coefficients  of the generator polynomial of elimination ideal 
%of Likelihood equations.  As a corollary, the sum of data will be a factor of the discriminants of this generator polynomial under some hypothesis. 
In Section \ref{sec:alg}, based on the main theorem, we present and explain Algorithm \ref{interpolation} with a list of sub-algorithms for computing elimination ideals of likelihood equations. 
In Section \ref{sec:implementation}, we explain the implementation details and compare the efficiency of our code with existing tools. 
Also, we show how to compute discriminants more efficiently by the elimination ideals we have computed, and summarize the computational results for larger algebraic statistic models.

\subsection*{Acknowledgments}
We thank David A. Cox, Hoon Hong, Anne Shiu, and Frank Sottile for their support and advice. 
TdW was partially supported by the DFG grant WO 2206/1-1.

% % % TO ADD LATER:
% % % \timo{Might want to add: We dedicate this work to the DFG special research area SFB-TRR 195 ``Symbolic Tools in Mathematics and their Application''.}

\section{Preliminaries}
\label{sec:PrincipalEliminationIdealsliminaries}

We assume that the reader is familiar with the fundamental concepts of computational algebraic geometry such as Gr\"obner bases and elimination ideals as well as related concepts in commutative algebra. For a general overview, we refer the reader to \cite{CLO2015} and \cite{Sturmfels:Book:SolvingSystemsofPolynomialEquations}.
%\xiaoxian{for consistent, all $:=$ replaced with $=$;  \\
%all ``zariski closed set" changed into ``affine variety"}
\subsection{Notation}\label{sec:notation}
Throughout the paper, we use bold letters for vectors or a finite set of polynomials, e.g., $\struc{\Vector{z}}=(z_1,\ldots,z_n)$ and $\struc{\Vector{h}}=\{h_1,\ldots, h_m\}$. In any given vector space we denote the zero vector by $\struc{\Vector{0}}$.    For $h \in \Q[\Vector{z}]$ we denote the \struc{\textit{total degree}} of $h$ by $\struc{\deg(h)}$ and the degree of $f$ with respect to a particular variable $z_j$ as $\struc{\deg(h,z_j)}$.
We denote by $\struc{\coeff(h, z_j^i)}$  the \struc{\textit{coefficient}} of $h$ with respect to the monomial $z_j^i$.
If $N=\deg(h, z_j)$, then simply denote $\struc{\coeff(h, z_j^N)}$ by $\struc{\lcoeff(h, z_j)}$. 
For $\struc{\Vector{h}}\subseteq {\mathbb Q}[\Vector{z}]$, we denote by $\struc{\langle \Vector{h}\rangle}$ the \struc{\textit{ideal}} generated by $\Vector{h}$ in ${\mathbb Q}[\Vector{z}]$, and by
$\struc{{\mathcal V}(\Vector{h})}$ the \struc{\textit{affine variety}}
\[\struc{{\mathcal V}(\Vector{h})}~=~\{\Vector{z}\in {\mathbb C}^{n} \ | \ h(\Vector{z})=0,\; \forall h\in \Vector{h}\}.\]
For any ideal $\mathcal{I}\subset  {\mathbb Q}[\Vector{z}]$, we denote  by $\struc{\sqrt{\mathcal{I}}}$ the \struc{\textit{radical ideal}} of $\mathcal{I}$, and denote by $\struc{\mathcal{V}(\mathcal{I})}$ the affine variety defined by the generator polynomials of $\mathcal{I}$.
For any subset $\mathcal{S}\subseteq \C^{n}$, we denote by $\struc{{\mathcal I}(\mathcal{S})}$ the ideal generated by the polynomials vanishing on $\mathcal{S}$
\[\struc{{\mathcal I}(\mathcal{S})}~=~\{h\in {\mathbb Q}[\Vector{z}] \ | \ h(\Vector{z}^*)=0 \text{ for all } \Vector{z}^*\in \mathcal{S}\},\]
 and denote the \struc{\textit{Zariski closure}}  $\mathcal{V}(\mathcal{I}(\mathcal{S}))$ of ${\mathcal S}$ in $\C^{n}$ by $\struc{\overline{{\mathcal S}}}$. 
%where $\struc{{\mathbb P}}$ denotes the \struc{\textit{projective closure}} of the field of  complex numbers ${\mathbb C}$.
For a positive integer $n$, and for any $1\leq i \leq n$, we denote the \struc{\textit{canonical projection}} by 
\begin{align*}
	\struc{\proj_i}: {\mathbb C}^{n} ~\to~ {\mathbb C}^{i}, \quad (z_1, \ldots, z_{n}) \mapsto (z_1,\ldots,z_i).
\end{align*}
% e.g., $\struc{\proj_i}\left(\left(x_1, \ldots, x_{n}\right)\right)=(x_1,\ldots,x_i)$.

\subsection{Elimination Theory}
\label{subsec:EliminationTheory}

We recall two fundamental results from elimination theory, which will be frequently used in this article.
We denote by $\struc{\mathbb K}$ a field. % Here, ${\mathbb K}=\Q$ or $\C$. 

\begin{proposition}\cite[page 121, Theorem 2]{CLO2015}\label{pro:elim}
Given $\Vector{h}\subseteq {\mathbb K}[z_1, \ldots, z_n]$, if 
$\mathcal{G}$ is a
Gr\"obner basis of $ \langle \Vector{h}\rangle$
with respect to the lexicographic order 
$z_{1}<\cdots <z_{n}$, then for any $1\leq i\leq n$, 
$\mathcal{G}\cap {\mathbb K}[z_1, \ldots, z_i]$ is a Gr\"obner basis of 
the elimination ideal $ \langle \Vector{h}\rangle \cap {\mathbb K}[z_1, \ldots, z_i]$.
\end{proposition}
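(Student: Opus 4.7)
The plan is to verify the two defining conditions for $\mathcal{G}_i := \mathcal{G}\cap\mathbb{K}[z_1,\ldots,z_i]$ to be a Gröbner basis of the elimination ideal $\mathcal{I}_i := \langle\Vector{h}\rangle\cap\mathbb{K}[z_1,\ldots,z_i]$. The first, namely $\mathcal{G}_i\subseteq \mathcal{I}_i$, is immediate from the inclusions $\mathcal{G}\subseteq\langle\Vector{h}\rangle$ and $\mathcal{G}_i\subseteq\mathbb{K}[z_1,\ldots,z_i]$. The real content is the second condition: every nonzero $f\in \mathcal{I}_i$ has its leading monomial divisible by the leading monomial of some element of $\mathcal{G}_i$.

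The key ingredient is a structural observation about the lexicographic order with $z_1<\cdots<z_n$: if a nonzero $g\in\mathbb{K}[z_1,\ldots,z_n]$ has its leading monomial inside $\mathbb{K}[z_1,\ldots,z_i]$, then $g$ itself lies in $\mathbb{K}[z_1,\ldots,z_i]$. Indeed, under this order any monomial involving some $z_j$ with $j>i$ strictly exceeds every monomial in $\mathbb{K}[z_1,\ldots,z_i]$; thus, if $g$ had a term involving a variable of index greater than $i$, that term would dominate in the leading monomial, contradicting the hypothesis.

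Granted this, the second condition is routine. Take a nonzero $f\in \mathcal{I}_i$. Since $f\in\langle\Vector{h}\rangle$ and $\mathcal{G}$ is a Gröbner basis of this ideal, the leading monomial of $f$ is divisible by the leading monomial of some $g\in\mathcal{G}$. But the leading monomial of $f$ is a monomial purely in $z_1,\ldots,z_i$, hence the leading monomial of $g$ must be as well; by the observation above, $g\in\mathcal{G}_i$. Together with the first condition, this shows that $\mathcal{G}_i$ is a Gröbner basis of $\mathcal{I}_i$.

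The only genuinely non-trivial step in this plan is the lex-order observation, which is a property of monomial orderings rather than of the specific ideal $\langle\Vector{h}\rangle$; once stated, the rest of the argument is mere bookkeeping. I would therefore structure the proof as a one-line verification of inclusion, followed by the lex-order lemma, followed by the divisibility argument above.
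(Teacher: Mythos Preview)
Your proof is correct and is essentially the standard argument for the Elimination Theorem. Note that the paper does not supply its own proof of this proposition: it is quoted directly from \cite[page 121, Theorem 2]{CLO2015} as a known preliminary, and your write-up reproduces the textbook argument faithfully.
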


\begin{proposition}\cite[page 131, Theorem 3]{CLO2015}\label{pro:clos}
Given $\Vector{h}\subseteq {\mathbb K}[z_1, \ldots, z_n]$, for any $1\leq i\leq n$, we have 
$\overline{\proj_i\left({\mathcal V}\left(\Vector{h}\right)\right)}~=~{\mathcal V}\left(\langle \Vector{h}\rangle \cap {\mathbb K}[z_1, \ldots, z_i]\right)$.
\end{proposition}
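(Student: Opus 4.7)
The plan is to establish the two inclusions $\overline{\proj_i(\mathcal{V}(\Vector{h}))} \subseteq \mathcal{V}(\langle \Vector{h}\rangle \cap \mathbb{K}[z_1,\ldots,z_i])$ and the reverse, separately. The first one is essentially formal from the definition of the elimination ideal, while the second one will require Hilbert's Nullstellensatz, which I expect to be the only nontrivial ingredient.

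For the easy direction, I would pick any $(a_1,\ldots,a_i) \in \proj_i(\mathcal{V}(\Vector{h}))$ together with a lift $(a_1,\ldots,a_n) \in \mathcal{V}(\Vector{h})$. Every $g \in \langle \Vector{h}\rangle \cap \mathbb{K}[z_1,\ldots,z_i]$ belongs to $\langle \Vector{h}\rangle$, hence vanishes at the lifted point; but $g$ only involves the first $i$ variables, so it already vanishes at $(a_1,\ldots,a_i)$. This shows $\proj_i(\mathcal{V}(\Vector{h})) \subseteq \mathcal{V}(\langle \Vector{h}\rangle \cap \mathbb{K}[z_1,\ldots,z_i])$, and since the right-hand side is Zariski-closed by construction, the closure inclusion follows at once.

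For the reverse inclusion, I would work at the level of vanishing ideals. Given any $g \in \mathcal{I}(\proj_i(\mathcal{V}(\Vector{h})))$, which lies in $\mathbb{K}[z_1,\ldots,z_i]$ and vanishes at every projected point, I would observe that viewed as an element of $\mathbb{K}[z_1,\ldots,z_n]$ it vanishes on all of $\mathcal{V}(\Vector{h}) \subseteq \mathbb{C}^n$. By the Nullstellensatz applied to the ambient algebraically closed field $\mathbb{C}$, this forces $g \in \sqrt{\langle \Vector{h}\rangle}$, so some power $g^m$ lies in $\langle \Vector{h}\rangle$. Because $g \in \mathbb{K}[z_1,\ldots,z_i]$, the power $g^m$ lies in $\langle \Vector{h}\rangle \cap \mathbb{K}[z_1,\ldots,z_i]$, so $g$ vanishes on $\mathcal{V}(\langle \Vector{h}\rangle \cap \mathbb{K}[z_1,\ldots,z_i])$. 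Thus $\mathcal{I}(\overline{\proj_i(\mathcal{V}(\Vector{h}))}) \subseteq \mathcal{I}(\mathcal{V}(\langle \Vector{h}\rangle \cap \mathbb{K}[z_1,\ldots,z_i]))$, and taking $\mathcal{V}(\cdot)$ inverts the inclusion to give precisely what is needed.

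The main obstacle, and the only place where something non-formal happens, is the Nullstellensatz step: without algebraic closure one only gets that generic fibers of the projection are non-empty, not that the Zariski closure of the image equals $\mathcal{V}(I_i)$. Since the notation fixed in Section~\ref{sec:notation} has $\mathcal{V}(\Vector{h}) \subseteq \mathbb{C}^n$ and the projection lands in $\mathbb{C}^i$, the Nullstellensatz applies directly and the proof closes. The remaining manipulations are just the Galois correspondence $\mathcal{I}(\mathcal{V}(J)) = \sqrt{J}$ and the fact that Zariski closure equals $\mathcal{V}(\mathcal{I}(\cdot))$.
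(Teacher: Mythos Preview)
Your argument is correct and is essentially the standard proof of the Closure Theorem as given in \cite[Chapter~4, \S4, Theorem~3]{CLO2015}: one inclusion is formal, and the other is a direct application of the Nullstellensatz to pass from $g \in \mathcal{I}(\mathcal{V}(\Vector{h}))$ to $g^m \in \langle \Vector{h}\rangle$, then intersect with the smaller polynomial ring. Note, however, that the paper does not supply its own proof of this proposition; it is quoted as a preliminary result from \cite{CLO2015}, so there is nothing in the paper to compare against beyond the citation itself.

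One small point worth tightening: you invoke the Nullstellensatz for an ideal generated in $\mathbb{K}[z_1,\ldots,z_n]$ while taking zeros in $\mathbb{C}^n$. This is fine in the paper's setting (where $\mathbb{K}=\mathbb{Q}$ and varieties are taken over $\mathbb{C}$, per Section~\ref{sec:notation}), since for a subfield $\mathbb{K}\subseteq\mathbb{C}$ and $I\subseteq\mathbb{K}[\Vector{z}]$ one has $\mathcal{I}_{\mathbb{K}}(\mathcal{V}_{\mathbb{C}}(I))=\sqrt{I}$; but strictly speaking the statement as written allows an arbitrary $\mathbb{K}$, and the Closure Theorem requires the ambient field to be algebraically closed. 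This is a harmless ambiguity inherited from the paper's phrasing rather than a flaw in your reasoning.
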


\subsection{Algebraic Statistics}
\label{subsec:AlgebraicStatistics}

In this section we recall the basic notions from algebraic statistics, which we need in this article.

\begin{definition}[\bf Probability Simplex]
\label{definition:ProbabilitySimplex}
We define the {\em \struc{$n$-dimensional probability simplex}} as 
%\begin{center}
$\struc{\Delta_{n}}~=~\{(p_0, \ldots, p_n)\in {\mathbb R}^{n+1}|p_0>0, \ldots,p_n>0, p_0+\cdots+p_n=1\}$. 
%\end{center}
\end{definition}

With the probability simplex we define a fundamental object in algebraic statistics, the algebraic statistical model.

\begin{definition}[\bf Algebraic Statistical Model and Model Invariant]\label{definition:StatisticalModel}
Given homogenous polynomials $g_1,\ldots,g_s \in  {\mathbb Q}[p_0, \ldots,p_n]$ such that ${\mathcal V}(g_1, \ldots, g_s)\subsetneq{\mathbb C}^{n+1}$ is irreducible and generically reduced, we define an \struc{\textit{algebraic statistical
model}} as 
\begin{align*}
    \struc{\cM}~=~{\mathcal V}(g_1,\ldots,g_s)\cap \Delta_{n}.
\end{align*}
Each $g_i$ is called a \struc{\textit{model invariant}} of~$\cM$. 
If ${\mathcal V}(g_1, \ldots, g_s)$ has codimension $s$, then we say $\{g_1, \ldots, g_s\}$ is a set of \struc{\textit{independent model invariants}}. 
\end{definition}

\medskip

Given an algebraic statistical model $\cM$ and a \struc{\textit{data vector}} $\struc{\Vector{u}} = (u_0, \ldots, u_n)\in {\mathbb R}_{\geq 0}^{n+1}$, the \struc{\textit{maximum likelihood estimation} (MLE)} problem is the optimization problem
\begin{align}\label{eq:mle}
    \begin{aligned}
    	\max \ \Pi_{k=0}^np_k^{u_k} \ \text{ subject to } \Vector{p} \in \cM
    \end{aligned}
\end{align}
 which is fundamental in statistics \cite[Chapter 2]{DSS2009}. 
 In many sources, e.g., \cite{SAB2005, EJ2014, Tang2017}, an algebraic statistical model is defined by a projective variety generated by the model invariants $g_1, \ldots, g_s$. Since we are interested in the real critical points when solving the MLE problem, we prefer to work affinely and thus consider the affine cone over these projective varieties. 
 %\timo{Timo has to double check this with the literature. UPDATE: Consistent with Hartshorne, Exercise 2.10 (affine cones of projective varieties). Comment can be deleted.}
One way to solve MLE problem is to solve a system of likelihood equations \citep*{SAB2005} formulated by Lagrange multiplier method.
We give the explicit formulation of such a system in what follows.
\begin{definition}[\bf Lagrange Likelihood Equations]\label{definition:LikelihoodEquations}
%{\bf Definition 2. (Lagrange Likelihood Equations and Correspondence)}
Given an algebraic statistical model $\cM$ with a set of independent model invariants $\{g_1, \ldots, g_s\} \subseteq  {\mathbb Q}[p_0, \ldots,p_n]$, the polynomial set $\Vector{f}=\{f_0, \ldots, f_{n+s+1}\}$ below is said to be the system of \struc{\textit{Lagrange likelihood equations}} of $\cM$ when set to zeros:  
%\begin{equation}\label{LLE}
\begin{align}\label{eq:lle}
\begin{array}{rl}
\struc{f_{0}(\Vector{u}, \Vector{p}, \Vector{\lambda})}~=&p_0(\lambda_1+\frac{\partial g_1}{\partial p_0}\lambda_2+\cdots+\frac{\partial g_s}{\partial p_0}\lambda_{s+1})-u_0,\\
%\end{align*}
%\begin{align*}
&\quad\vdots\\
\struc{f_{n}(\Vector{u},\Vector{p}, \Vector{\lambda})}~=&p_n(\lambda_1+\frac{\partial g_1}{\partial p_n}\lambda_2+\cdots+\frac{\partial g_s}{\partial
p_n}\lambda_{s+1})-u_n,\\
\struc{f_{n+1}(\Vector{u},\Vector{p},\Vector{\lambda})}~=&g_1(p_0, \ldots,p_n),\\
%\end{align*}
%\begin{align*}
&\quad\vdots\\
%\end{align*}
%\begin{align*}
\struc{f_{n+s}(\Vector{u},\Vector{p},\Vector{\lambda})} ~=&g_s(p_0,\ldots,p_n),\\
%\end{align*}
%\begin{align*}
\struc{f_{n+s+1}(\Vector{u},\Vector{p},\Vector{\lambda})}~=&p_0+\cdots+p_n-1,
\end{array}
\end{align}
%\end{equation}
%\begin{itemize}
%\item 
where $\struc{\Vector{u}} = (u_0, \ldots, u_n)$, $\struc{\Vector{p}} = (p_0, \ldots,p_n)$, and $\struc{\Vector{\lambda}} = (\lambda_1,\ldots,\lambda_{s+1})$ are indeterminates.
More specifically,   $u_0,\ldots,u_n$ are parameters, and $p_0,\ldots,p_n, \lambda_1,\ldots,\lambda_{s+1}$ are variables.
     % \end{itemize}
%\end{itemize}
\begin{comment}
\noindent
${\mathcal V}(f_0, \ldots, f_{n+s+1})$, namely the zero set  %{\color{red} Do we need a closure for the set below??}
\[\{({u}, {p}, {\lambda})\in {\mathbb C}^{n+1}\times {\mathbb C}^{n+1}\times {\mathbb C}^{s+1}|f_0(\Vector{u}, \Vector{p}, \Vector{\lambda})=\cdots=f_{n+s+1}(u, p, 
\lambda)=0\},\]
%$({\bf u}, {\bf p}, {\Lambda})$ in ${\mathbb C}^{2n+s+3}$ such that $F_1=0, \ldots, F_{n+s+1}=0$,
 is said to be the {\em Lagrange likelihood correspondence} of $X$ and 
denoted by $\LX$. 
\end{comment}
\end{definition}
\begin{theorem}\citep*{SAB2005}\label{th:mld}
%{\bf Theorem 1.}
Given a system of  Lagrange likelihood
equations $f_0, \ldots, f_{n+s+1}$ defined in \eqref{eq:lle}, 
 there exist an affine variety $V\subsetneq {\mathbb C}^{n+1}$ and a non-negative integer $N$  such that for any $\Vector{b}\in {\mathbb
C}^{n+1}\backslash V$,  the equations $f_0(\Vector{b}, \Vector{p},\Vector{ \lambda})=\cdots=f_{n+s+1}(\Vector{b}, \Vector{p},\Vector{ \lambda})=0$ have $N$  common complex solutions in $\C^{n+1}\times \C^{s+1}$.
 \end{theorem}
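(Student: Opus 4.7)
The plan is to prove Theorem \ref{th:mld} via a generic fiber argument on the incidence variety of the Lagrange likelihood system. Define the \emph{Lagrange likelihood correspondence}
\[\mathcal{L} \;=\; \mathcal{V}(f_0, \ldots, f_{n+s+1}) \;\subseteq\; \mathbb{C}^{n+1} \times \mathbb{C}^{n+1} \times \mathbb{C}^{s+1}\]
in the coordinates $(\Vector{u}, \Vector{p}, \Vector{\lambda})$, and let $\pi \colon \mathcal{L} \to \mathbb{C}^{n+1}$ denote projection onto the data coordinates $\Vector{u}$. The fiber $\pi^{-1}(\Vector{b})$ is exactly the set of common complex solutions of the specialized system, so it suffices to exhibit a proper affine subvariety $V \subsetneq \mathbb{C}^{n+1}$ off which the cardinality of $\pi^{-1}(\Vector{b})$ is a constant $N$.

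First I would show that $\pi$ is generically finite. Equations $f_0, \ldots, f_n$ express each $u_i$ polynomially in $(\Vector{p}, \Vector{\lambda})$, so $\mathcal{L}$ is the graph of the polynomial map
\[\Phi \colon \mathcal{M}_{\mathrm{aff}} \times \mathbb{C}^{s+1} \to \mathbb{C}^{n+1}, \quad (\Vector{p}, \Vector{\lambda}) \mapsto \Bigl( p_i\bigl(\lambda_1 + \textstyle\sum_{j=1}^{s} \tfrac{\partial g_j}{\partial p_i}\, \lambda_{j+1}\bigr) \Bigr)_{i=0}^{n},\]
where $\mathcal{M}_{\mathrm{aff}} = \mathcal{V}(g_1, \ldots, g_s, p_0 + \cdots + p_n - 1)$ has dimension $n-s$ by the independence hypothesis on the invariants (Definition \ref{definition:StatisticalModel}). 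Hence $\dim \mathcal{L} = \dim (\mathcal{M}_{\mathrm{aff}} \times \mathbb{C}^{s+1}) = n+1 = \dim \mathbb{C}^{n+1}$, so any irreducible component of $\mathcal{L}$ that dominates the base has generically zero-dimensional (thus finite) fibers by the fiber dimension theorem.

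Next I would invoke constancy of fiber cardinality on a Zariski open set. Concretely, compute a lexicographic Gr\"obner basis $\mathcal{G}$ of $\langle f_0, \ldots, f_{n+s+1}\rangle$ over the field $\mathbb{Q}(\Vector{u})$, eliminating the $\Vector{p}, \Vector{\lambda}$ variables. Generic finiteness ensures $\mathcal{G}$ defines a zero-dimensional system over $\mathbb{Q}(\Vector{u})$; its radical has some fixed number $N \geq 0$ of solutions over the algebraic closure. Define $V$ as the union of the $\Vector{u}$-loci where (i) a leading coefficient of some element of $\mathcal{G}$ vanishes, (ii) the discriminant of a univariate factor appearing in the triangular form vanishes, or (iii) a denominator appearing in $\mathcal{G}$ vanishes. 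For $\Vector{b} \notin V$, specialization commutes with Gr\"obner reduction (the specialization properties reviewed in Section \ref{sec:PrincipalEliminationIdeals}), so the specialized system has exactly $N$ distinct common zeros in $\mathbb{C}^{n+1} \times \mathbb{C}^{s+1}$.

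The main obstacle is ensuring $V \subsetneq \mathbb{C}^{n+1}$ is proper when $\mathcal{L}$ is reducible: non-dominant components of $\pi$ contribute extra solutions over a subvariety, which must be absorbed into $V$. If no component of $\mathcal{L}$ dominates $\mathbb{C}^{n+1}$, then the Zariski closure $\overline{\pi(\mathcal{L})}$ is already a proper closed subvariety by Proposition \ref{pro:clos}, so we may take $V = \overline{\pi(\mathcal{L})}$ and $N = 0$; otherwise each additional locus described above is cut out by a nonzero polynomial in $\Vector{u}$, so $V$ is a finite union of proper hypersurfaces and remains a proper subvariety of $\mathbb{C}^{n+1}$.
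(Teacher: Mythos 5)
The paper offers no proof of this theorem; it is quoted from \cite{SAB2005}, so there is nothing internal to compare against, and your argument is essentially the standard one behind the cited result. The key observation — that $f_0,\ldots,f_n$ solve for $\Vector{u}$ polynomially in $(\Vector{p},\Vector{\lambda})$, so that $\cV(\Vector{f})$ is (after permuting coordinates) the graph of a polynomial map on $\mathcal{M}_{\mathrm{aff}}\times\C^{s+1}$ and hence is pure of dimension $(n-s)+(s+1)=n+1$ — is exactly the right way to obtain generic finiteness of $\proj_{n+1}$, and the passage to a constant fiber count over a Zariski-open set via Gr\"obner-basis specialization (Proposition \ref{pro:noncomp}) is sound, including your treatment of the degenerate case where no component dominates. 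Two points deserve tightening. First, in the specialization step you should say explicitly why, for $\Vector{b}\notin V$, the solution set of $\langle\Vector{f}(\Vector{b})\rangle$ is \emph{exactly} the specialization of the $N$ generic solutions: discarding the images $\overline{\proj_{n+1}(Z)}$ of non-dominant components $Z$ (which you do) removes excess solutions, the preserved staircase of the specialized Gr\"obner basis controls the count with multiplicity, and the discriminant condition you impose converts this to a count of distinct points; as written this chain of reductions is only gestured at. Second, $\overline{\proj_{n+1}(Z)}$ for a non-dominant component need not be a hypersurface, only a proper closed subvariety — but a finite union of proper closed subvarieties is still a proper affine variety $V\subsetneq\C^{n+1}$, which is all the theorem requires, so this is a slip of phrasing rather than a gap.
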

 
The previous theorem motivates the following definition of the maximum-likelihood-degree.

\begin{definition}[\bf Maximum-Likelihod-Degree]\citep*{SAB2005}
\label{def:MLDegree}Given an algebraic statistical model $\cM$ with a system of Lagrange likelihood equations defined in \eqref{eq:lle}, the non-negative integer $\struc{N}$ stated in Theorem \ref{th:mld}
is called the \struc{\textit{maximum-likelihood-degree}}, short \struc{\textit{ML-degree}}, of $\cM$. 
\end{definition}

\begin{definition}\cite[Definition 4]{Tang2017}\label{def:nddv}
Given an algebraic statistical model $\cM$ with a system of Lagrange likelihood equations $\Vector{f}=\{f_0, \ldots, f_{n+s+1}\}$ defined in \eqref{eq:lle},
we define the following:
\begin{enumerate}
\item~$\struc{\LX_{J}}$ denotes
$\overline{\proj_{n+1}(\cV(\Vector{f})\cap {\mathcal V}(\struc{J}))}$, where 
$\struc{J}$ denotes 
 the determinant of Jacobian matrix of $\Vector{f}$ with respect to $(\Vector{p}, 
 \Vector{\lambda})$:
 {\footnotesize\[\det \left[
\begin{matrix}
\frac{\partial f_0}{\partial p_0} & \cdots & \frac{\partial
f_0}{\partial
p_n} & \frac{\partial f_{0}}{\partial \lambda_1} & \cdots & \frac{\partial f_{0}}{\partial \lambda_{s+1}}\\
\vdots & \ddots & \vdots &\vdots & \ddots & \vdots \\
\frac{\partial f_{n+s+1}}{\partial p_0} & \cdots & \frac{\partial
f_{n+s+1}}{\partial
p_n}& \frac{\partial f_{n+s+1}}{\partial \lambda_1} & \cdots & \frac{\partial f_{n+s+1}}{\partial \lambda_{s+1}}
\end{matrix}
\right].
\]}
	\item $\struc{\LX_{\infty}}$ denotes the set of the $\Vector{u}\in \overline{\proj_{n+1}(\cV(\Vector{f}))}$ such that there does not exist 
a   compact neighborhood $U$ of $\Vector{u}$ where
$\proj_{n+1}^{-1}(U)\cap \cV(\Vector{f})$ is  compact.% \timo{Is some part missing here?}

\end{enumerate}
\end{definition}

%\begin{remark}
Both 
$\LX_{\infty}$ and $\LX_{J}$ are components of \struc{\textit{discriminant variety}}  \cite{DV2005} of Lagrange likelihood equations. 
Here, we interpret their geometry meanings, and roughly introduce how to compute them:
\begin{enumerate}
\item Geometrically, $\LX_{J}$ is the closure of the union of the projection of the singular
locus of $\cV(\Vector{f})$ and the set of
critical values of the restriction of $\proj_{n+1}$ to the regular locus of $\cV(\Vector{f})$ \cite[Definition 2]{DV2005}.
By Definition \ref{def:nddv} and Proposition \ref{pro:clos}, $\LX_{J}$ can be computed by computing the elimination ideal
$\langle \Vector{f}, J\rangle\cap \Q[\Vector{u}]$.

\item Geometrically,   $\LX_{\infty}$ is the set of parameters $\Vector{u}$ such that the Lagrange likelihood equations have some solution $(\Vector{p}, \Vector{\lambda})$ with coordinates tending to infinity. 
Also, $\LX_{\infty}$ is the closure of the set of non-properness of $\proj_{n+1}$ restricted on $\cV(\Vector{f})$ as defined in \cite[page 1]{Jelonek1999} and \cite[page 3]{SS2004}. By \cite[Lemma 2 and Theorem 2]{DV2005}, $\LX_{\infty}$ is an algebraically closed set and can be computed by Gr\"obner bases.  
\end{enumerate}
%{\color{magenta} Is this Bernd's definition of discriminant for likelihood equations in Likelihood geometry?}. 
%The Lagrange likelihood equations define an affine variety.  As we continuously deform the parameters $u_i$,  coordinates of a solution can tend to infinity. 
%  \end{remark}
%\end{remark}
%\begin{remark}
%{\bf Remark 1.}
%It is well known \citep*{M1976, CLO2007, DV2005} that $\pi(\LX^{{\mathbb P}}\cap {\mathbb H}_{\infty})=\overline{\pi(\LX^{{\mathbb P}}\cap {\mathbb H}_{\infty})}$. 
%\end{remark}
While the ML-degree captures the number of complex solutions of likelihood equations, %\timo{What follows is suboptimal: Don't use notation before it is defined.}
$\LX_{\infty},$ and  $\LX_{J}$ (Definition \ref{def:nddv}) define open connected components such that the number of real solutions is uniform over each open connected component, see Theorem \ref{th:th2xxt}. 
In a more general setting, Theorem \ref{th:th2xxt} is a corollary of Ehresmann's theorem for which there exists semi-algebraic statements since 1992 \citep*{CS1992}.

\begin{theorem}\cite[Theorem 2]{Tang2017}\label{th:th2xxt}
%\citep*{M1976, BPR2006}
Given an algebraic statistical model $\cM$ with a system of Lagrange likelihood equations $f_0, \ldots, f_{n+s+1}$ defined in \eqref{eq:lle},
if ${\mathcal O}$ is an open connected component~of 
${\mathbb R}^{n+1}\backslash (\LX_{J}\cup \LX_{\infty})$,
then for any  $\Vector{b}\in
{\mathcal O}$, 
the number of distinct real solutions of 
$f_0(\Vector{b}, \Vector{p},\Vector{ \lambda})=\cdots=f_{n+s+1}(\Vector{b}, \Vector{p},\Vector{ \lambda})=0$ in $\R^{n+1}\times \R^{s+1}$ is a constant.
%%%%%%positive
\end{theorem}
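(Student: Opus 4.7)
The plan is to exhibit the real projection as a finite covering map onto $\mathcal{O}$: since $\mathcal{O}$ is connected, the fiber cardinality of a covering map is constant, which is precisely the number of real solutions to the specialized system. More concretely, let $\pi$ denote the restriction of $\proj_{n+1}$ to the real locus $\mathcal{V}(\Vector{f}) \cap (\mathbb{R}^{n+1}\times\mathbb{R}^{n+1}\times\mathbb{R}^{s+1})$. The aim is to show that over $\mathcal{O}$ the map $\pi$ is both a local diffeomorphism and proper, hence a finite covering map.

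For the local diffeomorphism property, the hypothesis $\Vector{b} \notin \LX_J$ is the key. By Definition \ref{def:nddv}(1) together with Proposition \ref{pro:clos}, $\LX_J = \overline{\proj_{n+1}(\mathcal{V}(\Vector{f}) \cap \mathcal{V}(J))}$; hence whenever a real point $(\Vector{b}, \Vector{p}^*, \Vector{\lambda}^*) \in \mathcal{V}(\Vector{f})$ projects to $\Vector{b} \in \mathcal{O}$, the Jacobian determinant $J(\Vector{b}, \Vector{p}^*, \Vector{\lambda}^*)$ is nonzero. The implicit function theorem then provides a smooth local section of $\pi$ through $(\Vector{b}, \Vector{p}^*, \Vector{\lambda}^*)$, so $\pi$ is a local diffeomorphism on a neighborhood of every such real preimage. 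For properness, the condition $\Vector{b} \notin \LX_{\infty}$ supplies, via Definition \ref{def:nddv}(2), a compact complex neighborhood $U$ of $\Vector{b}$ such that $\proj_{n+1}^{-1}(U) \cap \mathcal{V}(\Vector{f})$ is compact; intersecting with the real locus preserves compactness, so $\pi$ is proper over a real neighborhood of $\Vector{b}$.

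Combining these two properties, $\pi$ restricted to a sufficiently small open neighborhood of $\Vector{b}$ inside $\mathcal{O}$ is a proper local diffeomorphism and hence a finite covering map, so the fiber cardinality is locally constant at $\Vector{b}$. A standard clopen argument on the connected set $\mathcal{O}$ then upgrades local constancy to global constancy, which is the claim. The main obstacle I anticipate is verifying that the complex-analytic properness furnished by exclusion from $\LX_{\infty}$ restricts cleanly to the real setting used for the covering map argument; this should follow from $\mathbb{R}^{n+1}$ being closed in $\mathbb{C}^{n+1}$ and the real locus of $\mathcal{V}(\Vector{f})$ being closed in its complex locus, so that compactness is inherited. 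A related subtlety is ruling out real solutions escaping to infinity as $\Vector{u}$ varies within $\mathcal{O}$, but this is exactly what exclusion from $\LX_{\infty}$ forbids, so the argument closes. Alternatively one may invoke a semi-algebraic version of Ehresmann's theorem, as hinted at in the statement.
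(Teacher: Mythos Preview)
The paper does not give its own proof of this theorem: it is quoted from \cite[Theorem 2]{Tang2017}, and the surrounding text only remarks that in a more general setting it is a corollary of (semi-algebraic) Ehresmann's theorem \cite{CS1992}. Your proposal is therefore not being compared against a proof in this paper, but it is exactly the kind of direct argument that underlies the cited result: exclusion from $\LX_J$ gives nonvanishing Jacobian at every preimage, hence local diffeomorphism via the implicit function theorem; exclusion from $\LX_\infty$ supplies local properness; a proper local diffeomorphism is a finite covering, and fiber cardinality is locally constant, hence constant on the connected $\mathcal{O}$. This is essentially Ehresmann's fibration argument specialized to the zero-dimensional-fiber case, which is precisely what the paper points to. Your treatment of the passage from complex compactness (in the definition of $\LX_\infty$) to real properness is correct, since the real locus is closed in the complex one.
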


Assume both ${\mathcal I}({\LX_J})$ and ${\mathcal I}({\LX_\infty})$ are principal, 
denote by $\struc{\DD_J}$ and $\struc{\DD_\infty}$  the generator polynomials of 
${\mathcal I}({\LX_J})$
and ${\mathcal I}({\LX_\infty})$, respectively. 
Notice that both $\DD_J$ and $\DD_\infty$ are
homogenous polynomials in $\Q[\Vector{u}]$, due to 
the structure of Lagrange likelihood equations \cite[Proposition 2]{Tang2017}. 

 According to Theorem \ref{th:th2xxt},  the product polynomial $\DD_{\infty}\cdot\DD_{J}$ plays a core rule in the real solution classification of Lagrange likelihood equations,  whose relations to 
 relevant concepts: \struc{\textit{border polynomial}} \citep*{BP2001}, and \struc{\textit{discriminant variety}}  \citep*{DV2005} are discussed in \cite[Remark 2]{Tang2017}.
 By the standard real solution classification method \cite{BP2001, CDMMX2010} used in {\tt Maple[RealRootClassification]}, we can obtain the parameter condition 
 under which the Lagrange likelihood equations have a certain number of common real solutions by two steps below. 
 \begin{enumerate}
\item[]{\bf Step 1.} Compute  $\DD_{\infty}\cdot\DD_{J}$.
\item[]{\bf Step 2.} Apply partial cylindrical algebraic decomposition to $\DD_{\infty}\cdot\DD_{J}$, and compute semi-algebraic descriptions of open connected components of 
 $\R^{n+1}\backslash (\LX_\infty\cup \LX_J)$. 
\end{enumerate}
So 
 computing $\DD_{\infty}\cdot\DD_{J}$ is crucial for classifying real solutions.  In practice, $\DD_{J}$ is much larger and more complicated to compute than $\DD_{\infty}$, because 
 $\DD_{\infty}$ can be obtained from a Gr\"obner base of $\langle \Vector{f}\rangle$ with respect to a graded monomial order, while  $\DD_{J}$ is computed from
 a Gr\"obner base of $\langle \Vector{f}, J\rangle$ with respect to a lexicographic order, where the determinant $J$ of Jacobian matrix defined as in Definition \ref{def:nddv} (2) can be  huge. 
 So here, we focus on computing $\DD_{J}$. 
 
In \cite{Tang2017}, the current methods with different strategies for computing $\DD_J$ are discussed in details (see \cite[Algorithms 1--2]{Tang2017}).  
Experiments \cite[Tables 2--3]{Tang2017} show that the largest model can be defeated so far is Model \ref{ex:l4} with ML-degree $6$. 
According to \cite[Algorithm 2, Strategy 3]{Tang2017},  one way to improve the efficiency for computing $\DD_{J}$ is to first compute the elimination ideal $\langle \Vector{f} \rangle\cap \Q[\Vector{u}, p_0]$,  instead of computing an elimination ideal $\langle \Vector{f}, J\rangle\cap \Q[\Vector{u}]$.
If  $\sqrt{\langle \Vector{f} \rangle\cap \Q[\Vector{u}, p_0]}=\langle \h\rangle$, 
 then it is well known that $\DD_{J}$ is a factor of 
the discriminant of $\h$ with respect to $p_0$ (for instance, one proof is  \cite[Lemma 3]{Tang2017}). However, $\h$ is not easy to obtained by directly computing Gr\"obner bases (see the column ``standard" in Table \ref{literatureOld}).  The goal of the rest of paper is to compute $\h$ more efficiently.

%\timo{I have a few problems / questions here: Definition 4, is Definition \ref{def:gzd}, right? Why is this an assumption without loss of generality? Moreover, the definition of a general zero-dimensional system should come first, then the MLE definition (never use and undefined term and reference forward). Finally, I am confused since we define everything projectively in the beginning, but from Theorem \ref{th:mld} on everything is affine.--The above paragraph is modified a bit--Xiaoxian}
\section{Specialization Properties and General Zero-dimensional Systems}\label{sec:PrincipalEliminationIdeals}

In this section we discuss a selection of specialization properties and general zero-dimensional systems, which are both necessary for our theoretical results in Section \ref{sec:structure} and Algorithm \ref{interpolation} in Section \ref{sec:alg}.

\subsection{Specialization Properties}
\label{subsec:PrincipalEliminationIdeals}

In what follows, we consider polynomial rings with at least two variables, i.e., ${\mathbb Q}[z_1, \ldots,  z_n]$ with $n\geq 2$. Given $h\in {\mathbb Q}[z_1, \ldots,  z_n]$, we denote for every
$1\leq i<  n$, 
by  $\struc{\lpp_i(h)}$ and $\struc{{\lcoeff}_i(h)}$ the \struc{\textit{leading monomial}} and \struc{\textit{leading coefficient}} of $h$ \struc{\textit{with respect to} $z_{i+1}, \ldots, z_{n}$},  when $h$ is considered in $\mathbb{Q}(z_{1}, \ldots, z_{i})[ z_{i+1}, \ldots,  z_n]$  with the lexicographic order $z_{i+1}<\cdots<z_{n}$. For every $\Vector{b}=(b_1, \ldots, b_i)\in \C^i$, we define the polynomial
\begin{align*}
	\struc{h(\Vector{b})} \ = \ h|_{z_1=b_1, \ldots, z_i=b_i}\in \C[z_{i+1}, \ldots, z_n].
\end{align*}
For every polynomial set $\Vector{h}\subseteq {\mathbb Q}[z_1, \ldots,  z_n]$, we define
\begin{align*}
	\struc{\Vector{h}(\Vector{b})} \ = \ \{h(\Vector{b})\in \C[z_{i+1}, \ldots, z_n] \ | \ h\in \Vector{h}\}.
\end{align*}
 
 \begin{definition}\cite[Definition 4.1]{KSW2010}\label{def:noncomp}
Given  $\Vector{h}\subseteq {\mathbb Q}[z_1, \ldots,  z_n]$, for any 
$1\leq i< n$,   a subset $\Vector{g}$ of $\Vector{h}$ is a
\struc{\em noncomparable subset} of $\Vector{h}$  \struc{\textit{with respect to} $z_{i+1}, \ldots, z_{n}$} if 
\begin{enumerate}
	\item for every $h\in \Vector{h}$, there exists a $g\in \Vector{g}$ such that $\lpp_i(h)$ is a multiple of $\lpp_i(g)$, and
	\item for every $g_1, g_2\in \Vector{g}$, with $g_1 \neq g_2$, the leading monomial 
	$\lpp_i(g_1)$ is not a multiple of $\lpp_i(g_2)$, and 
$\lpp_i(g_2)$ is not a multiple of $\lpp_i(g_1)$. 
	%	 neither ${\rm lpp}_x(h_1)$ is a multiple of ${\rm lpp}_x(h_2)$ nor 
%${\rm lpp}_x(h_2)$ is a multiple of ${\rm lpp}_x(h_1)$. 
\end{enumerate}
\end{definition}

\begin{proposition}\label{pro:noncomp}\cite[Theorem 4.3]{KSW2010}
Given  $\Vector{h}\subseteq {\mathbb Q}[z_1, \ldots,  z_n]$, let ${\mathcal G}$ be a Gr\"obner basis of 
 $\langle \Vector{h}\rangle$ with respect to the lexicographic order $z_{1}<\cdots<z_{n}$.
 Fix an integer $i$ with $1\leq i< n$. 
Let ${\mathcal G}_{i}={\mathcal G}\cap {\mathbb Q}[z_1, \ldots, z_i]$, 
and let ${\mathcal N}$ be a noncomparable subset of ${\mathcal G}\backslash {\mathcal G}_{i}$ with respect to $z_{i+1}, \ldots, z_{n}$. Then for any $\Vector{b}\in \cV\left({\mathcal G}_{i}(\Vector{b})\right)\backslash \cV\left(\Pi_{g \in {\mathcal N}}{\lcoeff}_{i}(g)\right)$,  ${\mathcal N}(\Vector{b})$ is a Gr\"obner basis of $\langle \Vector{h}(\Vector{b})\rangle$  with respect to the lexicographic order $z_{i+1}<\cdots<z_{n}$ in ${\mathbb C}[z_{i+1}, \ldots, z_{n}]$.
\end{proposition}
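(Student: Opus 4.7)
The plan is to prove the two defining properties of a Gr\"obner basis for $\mathcal{N}(\Vector{b})$: that it generates $\langle \Vector{h}(\Vector{b}) \rangle$ in $\mathbb{C}[z_{i+1}, \ldots, z_n]$, and that its leading monomials generate the initial ideal of $\langle \Vector{h}(\Vector{b}) \rangle$ under lex $z_{i+1}<\cdots<z_{n}$. A preliminary observation is that the hypothesis $\lcoeff_{i}(\nu)(\Vector{b})\neq 0$ for every $\nu\in\mathcal{N}$ ensures that specialization preserves the leading monomial of each element of $\mathcal{N}$, so $\lpp_{i}(\nu(\Vector{b}))=\lpp_{i}(\nu)$.

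For the generation claim, the inclusion $\langle \mathcal{N}(\Vector{b})\rangle\subseteq\langle \Vector{h}(\Vector{b})\rangle$ is immediate from $\mathcal{N}\subseteq\langle \Vector{h}\rangle$ via the evaluation homomorphism. For the reverse inclusion, every $h\in \Vector{h}$ is a $\mathcal{G}$-combination, and since $g(\Vector{b})=0$ for all $g\in\mathcal{G}_{i}$ by the hypothesis on $\Vector{b}$, only the terms indexed by $\mathcal{G}\setminus\mathcal{G}_{i}$ survive specialization. It therefore suffices to show $g(\Vector{b})\in\langle\mathcal{N}(\Vector{b})\rangle$ for every $g\in\mathcal{G}\setminus\mathcal{G}_{i}$, and the plan is to establish this by induction on $\lpp_{i}(g)$ under the lex order. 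If $g\in\mathcal{N}$ the claim is trivial; otherwise the noncomparable property yields $\nu\in\mathcal{N}$ with $\lpp_{i}(\nu)\mid\lpp_{i}(g)$, and, setting $M=\lpp_{i}(g)/\lpp_{i}(\nu)$, the combination
$$\tilde g \;=\; \lcoeff_{i}(\nu)\, g \;-\; \lcoeff_{i}(g)\, M\, \nu$$
lies in $\langle\mathcal{G}\rangle$ with $\lpp_{i}(\tilde g)\prec \lpp_{i}(g)$. Reducing $\tilde g$ to zero via the Gr\"obner basis $\mathcal{G}$ in the full lex order $z_{1}<\cdots<z_{n}$ expresses $\tilde g$ as a combination of elements of $\mathcal{G}$ whose $\lpp_{i}$-leading monomials are all strictly smaller than $\lpp_{i}(g)$; the induction hypothesis yields $\tilde g(\Vector{b})\in\langle\mathcal{N}(\Vector{b})\rangle$, and division by the nonzero scalar $\lcoeff_{i}(\nu)(\Vector{b})$ produces $g(\Vector{b})\in\langle\mathcal{N}(\Vector{b})\rangle$.

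The Gr\"obner basis property then follows by comparing initial ideals: the preliminary observation gives the inclusion $\langle\lpp_{i}(\nu):\nu\in\mathcal{N}\rangle\subseteq\operatorname{in}(\langle \Vector{h}(\Vector{b})\rangle)$, while for the reverse inclusion a minimum-leading-monomial argument on representations $f=\sum_{\nu\in\mathcal{N}} c_\nu\, \nu(\Vector{b})$ shows that $\lpp_{i}(f)$ is realized as $\lpp_{i}(c_\nu)\cdot\lpp_{i}(\nu)$ for some $\nu$, and hence is a multiple of some $\lpp_{i}(\nu)$.

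The main obstacle will be the bookkeeping in the inductive reduction. One must verify that when $\tilde g$ is expressed via the Gr\"obner basis $\mathcal{G}$ in the full lex order $z_{1}<\cdots<z_{n}$, every summand $q_{g'}g'$ appearing satisfies $\lpp_{i}(q_{g'}g')\preceq\lpp_{i}(\tilde g)$. This relies on the compatibility of the two lex orders---the full order compares monomials first by their projection onto $z_{i+1},\ldots,z_{n}$, so a decrease in the full order forces a weak decrease in the partial order on these variables---but requires careful tracking through the multivariate division, especially to account for the $g'\in\mathcal{G}_{i}$ that have $\lpp_{i}(g')=1$ while vanishing at $\Vector{b}$.
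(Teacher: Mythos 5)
You should first note that the paper offers no proof of this proposition: it is imported verbatim from \cite[Theorem 4.3]{KSW2010}, so your attempt has to be measured against the argument in that reference (which goes back to Kalkbrener's work on the stability of Gr\"obner bases under specialization). Your first half, the generation claim $\langle {\mathcal N}(\Vector{b})\rangle=\langle \Vector{h}(\Vector{b})\rangle$, is sound: the pseudo-reduction $\tilde g=\lcoeff_i(\nu)\,g-\lcoeff_i(g)\,M\,\nu$, the Noetherian induction on $\lpp_i(g)$, and the compatibility of the two orders (the full lexicographic order $z_1<\cdots<z_n$ is the block order in which $z_{i+1},\ldots,z_n$ dominate, so a standard representation of $\tilde g$ with respect to ${\mathcal G}$ projects to the bound $\lpp_i(q_{g'}g')\preceq\lpp_i(\tilde g)\prec\lpp_i(g)$ you need) all check out; the bookkeeping you flag as the main obstacle is routine.

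The genuine gap is in the second half, which is in fact the hard part of the theorem. The ``minimum-leading-monomial argument'' you invoke for the reverse inclusion of initial ideals is exactly the assertion that every element of $\langle{\mathcal N}(\Vector{b})\rangle$ admits a standard representation with respect to ${\mathcal N}(\Vector{b})$ --- which is equivalent to ${\mathcal N}(\Vector{b})$ being a Gr\"obner basis, i.e., the conclusion itself. For a general generating set the representation minimizing $\delta=\max_\nu\lpp(c_\nu\,\nu(\Vector{b}))$ can still satisfy $\delta\succ\lpp(f)$; descending below such a $\delta$ requires knowing that the S-polynomials of pairs in ${\mathcal N}(\Vector{b})$ reduce to zero, and you never verify this. (Compare $\{x^2,\,xy+1\}$ in $\C[x,y]$ with $x<y$: it generates its ideal, its leading monomials are $x^2$ and $xy$, yet $-x$ lies in the ideal, so ``generates the ideal and the leading monomials are preserved'' does not imply ``Gr\"obner basis''.) To close the gap one must strengthen your induction to a degree-controlled generation statement --- each $g(\Vector{b})$ with $g\in{\mathcal G}\setminus{\mathcal G}_i$ has a representation $\sum_\nu c_\nu\,\nu(\Vector{b})$ with every $\lpp(c_\nu\,\nu(\Vector{b}))\preceq\lpp_i(g)$ --- and then verify Buchberger's criterion for ${\mathcal N}(\Vector{b})$ by lifting each S-polynomial to the pseudo-S-polynomial $\lcoeff_i(\nu_2)\frac{L}{\lpp_i(\nu_1)}\nu_1-\lcoeff_i(\nu_1)\frac{L}{\lpp_i(\nu_2)}\nu_2\in\langle{\mathcal G}\rangle$ (with $L$ the lcm of the two leading monomials), reducing it by ${\mathcal G}$ in the full order, and specializing. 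This is the content of Kalkbrener's key lemma and the genuinely nontrivial step; as written, your proof omits it.
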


\begin{comment}
\begin{lemma}\label{lm:sfree}
Given $g\in  {\mathbb Q}[z_1, \ldots, z_n]$,
if $\langle g \rangle$ is radical, then for any $1\leq i< n$, there exists an affine variety
 $V\subsetneq {\mathbb C}^{i}$ such that for any 
$\Vector{b}\in {\mathbb C}^{i}\backslash V$, $\langle g(\Vector{b}) \rangle$ is radical. 
\end{lemma}
\begin{proof}
By \cite[page 187, Proposition 12]{CLO2015}, $g$ is squarefree, i.e., $g=\Pi_{k=1}^rg_k$, where each $g_k$ is irreducible and, 
$g_j\neq g_k$ for $j\neq k$.  
By Proposition \ref{lm:sfactor}, 
there exists $V\subsetneq {\mathbb C}^{i}$ such that
 for any 
$\Vector{b}\in {\mathbb C}^{i}\backslash V$, $g(\Vector{b})=\Pi_{k=1}^rg_k(\Vector{b})$, where each $g_k(\Vector{b})$ is irreducible and, 
$g_j(\Vector{b})\neq g_k(\Vector{b})$ for $j\neq k$. Hence,  by \cite[page 187, Proposition 12]{CLO2015}, 
$\langle g(\Vector{b}) \rangle$ is radical. 
\end{proof}
\end{comment}

We can now make a first important observation regarding the structure of elimination ideals.

\begin{proposition}\label{lm:sqsg}
Given 
 $\Vector{h}\subseteq {\mathbb Q}[z_1, \ldots, z_n]$, for any $1\leq i < n$, 
 if the elimination ideal 
 $$\langle \Vector{h}\rangle\cap {\mathbb Q}[z_1, \ldots,  z_i, z_{i+1}]=\langle q \rangle \; \text{with}\; \deg(q, z_{i+1})>0,$$
 and
 if $\sqrt{\langle q \rangle}=\langle g \rangle$, 
 then 
 \begin{enumerate}
\item  there exists an affine variety $V\subsetneq {\mathbb C}^{i}$ such that for any 
$\Vector{b}\in {\mathbb C}^{i}\backslash V$,
\begin{align}\label{eq:lsq}
\langle \Vector{h}(\Vector{b})\rangle\cap {\mathbb C}[z_{i+1}]~=~\langle q(\Vector{b})\rangle, \;\; \text{and}
\end{align}
\item
there exists an affine variety $W\subsetneq {\mathbb C}^{i}$ such that for any 
$\Vector{b}\in {\mathbb C}^{i}\backslash W$,
\begin{align}\label{eq:lsg}
\sqrt{\langle \Vector{h}(\Vector{b}) \rangle\cap {\mathbb C}[z_{i+1}]}~=~\langle g(\Vector{b})\rangle. \;\;
\end{align}
\end{enumerate}
\end{proposition}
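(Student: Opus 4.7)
The plan is to reduce everything to the specialization behavior of Gröbner bases (Proposition~\ref{pro:noncomp}) combined with a standard specialization-of-factorization fact.

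For~\eqref{eq:lsq}, I first compute the reduced Gröbner basis $\mathcal{G}$ of $\langle\Vector{h}\rangle$ with respect to the lexicographic order $z_1<\cdots<z_n$. The hypothesis $\deg(q,z_{i+1})>0$ forces $\langle\Vector{h}\rangle\cap\mathbb{Q}[z_1,\ldots,z_i]=\{0\}$, since any nonzero element of this ideal would lie in $\langle q\rangle$ and hence be divisible by $q$, which is impossible for a polynomial of degree $0$ in $z_{i+1}$. Therefore $\mathcal{G}_i:=\mathcal{G}\cap\mathbb{Q}[z_1,\ldots,z_i]=\emptyset$, and Proposition~\ref{pro:elim} applied at level $i+1$ shows that $\mathcal{G}\cap\mathbb{Q}[z_1,\ldots,z_{i+1}]$ is the reduced Gröbner basis of $\langle q\rangle$, so it equals $\{q\}$ up to a scalar. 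Any $g\in\mathcal{G}\setminus\{q\}$ must involve some variable $z_j$ with $j>i+1$, and because the lex order satisfies $z_{i+1}<\cdots<z_n$, the leading monomial $\lpp_i(g)$ is divisible by some such $z_j$. Hence $q$ is the unique element of $\mathcal{G}$ whose $\lpp_i$ is a pure power of $z_{i+1}$, and by condition~(1) of Definition~\ref{def:noncomp} every noncomparable subset $\mathcal{N}\subseteq\mathcal{G}$ must contain $q$. Setting $V:=\mathcal{V}\bigl(\prod_{g\in\mathcal{N}}\lcoeff_i(g)\bigr)$ and applying Proposition~\ref{pro:noncomp}, for every $\Vector{b}\in\mathbb{C}^i\setminus V$ the substituted set $\mathcal{N}(\Vector{b})$ is a Gröbner basis of $\langle\Vector{h}(\Vector{b})\rangle$ with respect to the lex order $z_{i+1}<\cdots<z_n$. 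Since the condition $\lcoeff_i(g)(\Vector{b})\neq 0$ preserves the leading monomial of every $g\in\mathcal{N}$ under specialization, $q(\Vector{b})$ is the only element of $\mathcal{N}(\Vector{b})$ lying in $\mathbb{C}[z_{i+1}]$, and Proposition~\ref{pro:elim} once more delivers~\eqref{eq:lsq}.

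For~\eqref{eq:lsg}, I combine~\eqref{eq:lsq} with the standard specialization-of-factorization principle. Write $q=c\cdot g_1^{e_1}\cdots g_r^{e_r}$ in the UFD $\mathbb{Q}[z_1,\ldots,z_{i+1}]$ with pairwise distinct irreducible factors $g_j$, so that the squarefree part of $q$ is $g_1\cdots g_r$, which coincides with $g$ up to a nonzero scalar. Because each $g_j$ is irreducible in characteristic zero and the $g_j$ are pairwise distinct, the discriminants $\mathrm{disc}(g_j,z_{i+1})$ and pairwise resultants $\mathrm{Res}(g_j,g_k,z_{i+1})$ are nonzero in $\mathbb{Q}[z_1,\ldots,z_i]$; together with $V$ and the vanishing loci of $\lcoeff(g_j,z_{i+1})$ they cut out a proper subvariety $W\subsetneq\mathbb{C}^i$. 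For $\Vector{b}\in\mathbb{C}^i\setminus W$, each $g_j(\Vector{b})$ is squarefree of the expected degree and the $g_j(\Vector{b})$ are pairwise coprime in $\mathbb{C}[z_{i+1}]$, so the squarefree part of $q(\Vector{b})=c\cdot g_1(\Vector{b})^{e_1}\cdots g_r(\Vector{b})^{e_r}$ equals $g_1(\Vector{b})\cdots g_r(\Vector{b})=g(\Vector{b})$ up to a scalar. Combining with~\eqref{eq:lsq} yields $\sqrt{\langle\Vector{h}(\Vector{b})\rangle\cap\mathbb{C}[z_{i+1}]}=\sqrt{\langle q(\Vector{b})\rangle}=\langle g(\Vector{b})\rangle$, which is~\eqref{eq:lsg}.

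The main obstacle I anticipate is justifying cleanly that for every $g\in\mathcal{G}\setminus\{q\}$ the monomial $\lpp_i(g)$ is divisible by some $z_j$ with $j>i+1$, and ensuring that the two lex orderings used---$z_1<\cdots<z_n$ for the ambient Gröbner basis and $z_{i+1}<\cdots<z_n$ for the specialized basis---are coherently matched with the hypotheses of Proposition~\ref{pro:noncomp}. The remainder reduces to routine applications of the cited propositions and to standard discriminant/resultant reasoning.
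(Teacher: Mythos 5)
Your proposal is correct and follows essentially the same route as the paper's proof: part (1) is obtained by combining the specialization of Gr\"obner bases (Proposition \ref{pro:noncomp}, using that $\mathcal{G}_i=\emptyset$ and that the unique element of $\mathcal{G}\cap\mathbb{Q}[z_1,\ldots,z_{i+1}]$ is $q$ up to a scalar) with Proposition \ref{pro:elim}, and part (2) by combining \eqref{eq:lsq} with the generic preservation of radicality of $\langle g(\Vector{b})\rangle$ under specialization. The only difference is one of detail: where the paper cites the latter as ``a basic fact,'' you substantiate it with an explicit discriminant/resultant argument, which is a welcome but not structurally different addition.
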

\begin{proof}
Let $\mathcal{G}$ be a
Gr\"obner basis of $ \langle \Vector{h}\rangle$
with respect to the lexicographic order 
$z_{1}<\cdots <z_{n}$. For any $1\leq i < n$, 
let $\mathcal{N}$ be a noncomparable set of $\mathcal{G}$ with respect to $z_{i+1}, \ldots, z_n$.

Part (1): 
 If $\langle \Vector{h}\rangle\cap {\mathbb Q}[z_1, \ldots,  z_{i+1}]=\langle q \rangle$, then
by Proposition \ref{pro:elim}, ${\mathcal G}\cap \Q[z_1, \ldots, z_{i+1}]$ is a 
Gr\"obner basis of $\langle q \rangle$. So ${\mathcal G}\cap \Q[z_1, \ldots, z_{i+1}]$ contains only one element, say $h$, and hence
$h=c\cdot q$ where $c\in \Q$.
Also, $\struc{{\mathcal G}_i}={\mathcal G}\cap \Q[z_1, \ldots, z_i]=\emptyset$ since $\deg(h, z_{i+1})=\deg(q, z_{i+1})>0$, and hence, 
$\cV\left({\mathcal G}_i\right)=\C^i$.
By Proposition \ref{pro:noncomp}, 
there exists $V\subsetneq {\mathbb C}^{i}$ such that for any 
$\Vector{b}\in {\mathbb C}^{i}\backslash V$, 
$\mathcal{N}(\Vector{b})$ is a 
Gr\"obner basis of 
$\langle \Vector{h}(\Vector{b})\rangle$. 
By Proposition \ref{pro:elim}, $\mathcal{N}(\Vector{b})\cap \C[z_{i+1}]$ is a 
Gr\"obner basis of 
$\langle \Vector{h}(\Vector{b})\rangle \cap \C[z_{i+1}]$. 
Notice that $\mathcal{N}(\Vector{b})\cap \C[z_{i+1}]=\{h(\Vector{b})\}$. 
So we have 
\[\mathcal{N}(\Vector{b})\cap \C[z_{i+1}]=\langle h(\Vector{b})\rangle =\langle q(\Vector{b})\rangle. \]

Part (2): If $\sqrt{\langle q \rangle}=\langle g \rangle$, then $\cV(q)=\cV(g)$. So for any 
$\Vector{b}\in {\mathbb C}^{i}$, 
$\cV(q(\Vector{b}))=\cV(g(\Vector{b}))$.
And hence, 
$\sqrt{\langle q(\Vector{b})\rangle}=\sqrt{\langle g(\Vector{b})\rangle}$.
Note $\langle g \rangle$ is a radical ideal.
Then it is a basic fact that there exists an affine variety
$V_1\subsetneq {\mathbb C}^{i}$ such that for any 
$\Vector{b}\in {\mathbb C}^{i}\backslash V_1$, $\langle g(\Vector{b})\rangle$ is still radical, and hence $\sqrt{\langle q(\Vector{b})\rangle}=\sqrt{\langle g(\Vector{b})\rangle}=\langle g(\Vector{b})\rangle$.
By part (1), there exists an affine variety
$V_2\subsetneq {\mathbb C}^{i}$ such that for any 
$\Vector{b}\in {\mathbb C}^{i}\backslash V_2$, 
we have the equality \eqref{eq:lsq}. Let $W=V_1\cup V_2$. Then for any 
$\Vector{b}\in {\mathbb C}^{i}\backslash W$, we have 
 the equality \eqref{eq:lsg}. 
\end{proof}

\begin{proposition}\label{lm:sfactor}
Let $g\in  {\mathbb Q}[z_1, \ldots, z_n]$. If $g=\Pi_{k=1}^rg_k^{m_k}$, where every $g_k$ is irreducible in ${\mathbb Q}[z_1, \ldots, z_n]$, and $g_j\neq g_k$ for any $j\neq k$, 
then for any  $1\leq i< n$, there exists an infinite subset $\Gamma \subseteq \Q^i$ such that for any $\Vector{b}\in \Gamma$, we have $g(\Vector{b})=\Pi_{k=1}^rg_k(\Vector{b})^{m_k}$, where $g_k(\Vector{b})$ is irreducible in ${\mathbb Q}[z_{i+1}, \ldots, z_n]$, and $g_j(\Vector{b})\neq g_k(\Vector{b})$ for any $j\neq k$. 
\end{proposition}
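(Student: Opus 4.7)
The plan is to combine Gauss's lemma with Hilbert's irreducibility theorem, together with an elementary open-density argument for pairwise distinctness. First, since each $g_k$ is irreducible in $\mathbb{Q}[z_1,\ldots,z_n]=\mathbb{Q}[z_1,\ldots,z_i][z_{i+1},\ldots,z_n]$, viewing $g_k$ as a polynomial in $z_{i+1},\ldots,z_n$ with coefficients in $\mathbb{Q}[z_1,\ldots,z_i]$, its content must be a unit (a nonzero rational), for otherwise a non-unit content would give a nontrivial factorization contradicting irreducibility. By Gauss's lemma, $g_k$ is therefore irreducible in $\mathbb{Q}(z_1,\ldots,z_i)[z_{i+1},\ldots,z_n]$ as well.

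Next, I would invoke Hilbert's irreducibility theorem applied to each $g_k$, viewed as a polynomial over the rational function field $\mathbb{Q}(z_1,\ldots,z_i)$ in the variables $z_{i+1},\ldots,z_n$. This yields, for each $k$, a Hilbert (thick, hence infinite) subset $\Gamma_k \subseteq \mathbb{Q}^i$ such that for every $\mathbf{b}\in\Gamma_k$ the specialization $g_k(\mathbf{b},z_{i+1},\ldots,z_n)$ remains irreducible in $\mathbb{Q}[z_{i+1},\ldots,z_n]$. To enforce pairwise distinctness of the specialized factors, for each pair $j\neq k$ pick a coefficient $c_{jk}\in\mathbb{Q}[z_1,\ldots,z_i]$ of $g_j-g_k$ (expanded in the variables $z_{i+1},\ldots,z_n$) which is not the zero polynomial; such a $c_{jk}$ exists because $g_j\neq g_k$. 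Its non-vanishing locus $U_{jk}\subseteq\mathbb{Q}^i$ is Zariski-open and dense, and on $U_{jk}$ we have $g_j(\mathbf{b})\neq g_k(\mathbf{b})$ in $\mathbb{Q}[z_{i+1},\ldots,z_n]$.

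Setting $\Gamma=\bigcap_{k=1}^{r}\Gamma_k\cap\bigcap_{j\neq k}U_{jk}$, thickness is preserved under finite intersections and is not destroyed by removing the $\mathbb{Q}$-points of a proper Zariski-closed subset, so $\Gamma$ remains infinite. For any $\mathbf{b}\in\Gamma$, commutativity of evaluation with products gives $g(\mathbf{b})=\prod_{k=1}^{r}g_k(\mathbf{b})^{m_k}$, and irreducibility and pairwise distinctness of the $g_k(\mathbf{b})$ hold by construction.

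The main obstacle is an edge case: if some $g_k$ lies entirely in $\mathbb{Q}[z_1,\ldots,z_i]$, the specialization $g_k(\mathbf{b})$ is a nonzero rational, hence a unit rather than an irreducible element of $\mathbb{Q}[z_{i+1},\ldots,z_n]$. The cleanest way to dispose of this is to read ``irreducible'' modulo units, which is the natural convention in a unique factorization domain, or to note that in every use made of this lemma in the paper (see for instance the application in the proof of Proposition~\ref{lm:sqsg}, where $\deg(q,z_{i+1})>0$), every factor must genuinely involve one of $z_{i+1},\ldots,z_n$. A secondary technical point is that Hilbert's irreducibility theorem is usually stated for a single irreducible polynomial; the multivariate parameter version over $\mathbb{Q}$ is standard, but citing the right reference and verifying that thick sets are infinite (not merely nonempty) in our setting deserves a brief remark.
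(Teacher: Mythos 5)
Your proof is correct and follows essentially the same route as the paper's: invoke Hilbert's irreducibility theorem to obtain an infinite set of rational specializations of $(z_1,\ldots,z_i)$ preserving irreducibility of every $g_k$, then enforce pairwise distinctness by removing the proper Zariski-closed loci where $g_j(\Vector{b})=g_k(\Vector{b})$. If anything, your write-up is the more careful one: the paper omits the Gauss's-lemma reduction to $\mathbb{Q}(z_1,\ldots,z_i)[z_{i+1},\ldots,z_n]$ and the edge case $g_k\in\mathbb{Q}[z_1,\ldots,z_i]$ that you flag, and its final displayed definition of $\Gamma$ mistakenly uses a union where an intersection of $\Theta$ with the sets $\Q^i\setminus W_{j,k}$ is intended.
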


%\timo{We need to double check / discuss whether $b^*$ is in $\Q^i$ or in $\C^i$.}

\begin{proof}
Given that $g_1,\ldots,g_r$ are irreducible, by Hilbert's irreducibility theorem, see e.g., \cite[Theorem 1]{VGR2018}, for any $1\leq i< n$, there exists an infinite subset $\Theta \subseteq \Q^i$ such that for any $\Vector{b}\in W$,  $g_1(\Vector{b}),\ldots,g_r(\Vector{b})$ are irreducible in ${\mathbb Q}[z_{i+1}, \ldots, z_n]$. Now consider an arbitrary pair $g_j,g_k$ with $j \neq k$ and thus $g_j \neq g_k$. Without loss of generality,  let
\begin{align*}
	\struc{W_{j,k}} \ = \ \left\{ \Vector{b} \in \C^{i} \ | \ g_j(\Vector{b})- g_k(\Vector{b})=0\right\}. 
\end{align*}
%In fact, we have
%\begin{align*}
%	W_{j, k} = \left\{ \Vector{b} \in \C^{i} \ | \ \text{ for all } \Vector{b}^* \in \C^{n-i}, g_j(\Vector{b},\Vector{b}^*) = g_k(\Vector{b},\Vector{b}^*)\right\}.
%\end{align*}
Obviously, $W_{j,k}$ is an affine variety, which does not equal $\C^{i}$. Then let
\[\Gamma \ = \  \Theta \cup \cup_{k = 1}^r \cup_{j = 1}^{k-1}  \left(\Q^{i}\backslash W_{j,k}\right),\] and we are done.
\end{proof}

\subsection{General Zero-dimensional Systems}\label{sec:gzds}
In practice, a system of Lagrange likelihood equations is usually a general zero-dimensional system; see Definition \ref{def:gzd}. For instance, all models in Appendix \ref{appendix} have general zero-dimensional systems of Lagrange likelihood equations. 
%Remark that by Theorem \ref{th:mld}, Lagrange likelihood equations satisfy the condition (i) in Definition 4. 
%Theoretically, conditions (ii)(iii) might not be satisfied by some Lagrange likelihood equations.   
So, throughout the rest of the paper, we always assume that a system of Lagrange likelihood equations is general zero-dimensional.
A general zero-dimensional system has a nice structure, see Proposition \ref{shape}, which leads us to analyze their elimination ideals furhter.

In what follows, we consider a polynomial ring ${\mathbb Q}[a_1, \ldots a_k, y_1, \dots, y_m]$ with two kinds of indeterminates: parameters $a_1, \ldots, a_k$ and variables $y_1, \ldots, y_m$.  
Also note a polynomial set $\Vector{h}=\{h_1,\ldots, h_m\}$ we are interested in has the same number of polynomials with the number of variables. 

\begin{definition}[\bf General Zero-Dimensional System]\label{def:gzd}
A finite polynomial set \[\Vector{h}=\{h_1,\ldots, h_m\}
	~\subseteq~ {\mathbb Q}[a_1, \ldots a_k, y_1, \dots, y_m]\]
is called a \struc{\textit{general zero-dimensional system}} if there exists an affine variety $V\subsetneq {\mathbb C}^k$ such that for any 
$\Vector{b}=(b_1, \ldots, b_k)\in {\mathbb C}^k\backslash V$, 
the equations $h_1(\Vector{b})=\cdots=h_m(\Vector{b})=0$  satisfy:
% the hypotheses below:
%
\begin{enumerate}
\item the number of complex solutions is a constant $N>0$;

\item all complex solutions are distinct;

\item every pair of distinct complex solutions $\Vector{y}^*=(y^*_1, \ldots, y^*_m)$ and $\Vector{z}^*=(z^*_1, \ldots, z^*_m)$ have distinct first coordinates, i.e.,  $y^*_1\neq z^*_1$.

%the ideal $ \langle h_1|_{\Vector{a}=\Vector{b}},\ldots, h_m|_{\Vector{a}=\Vector{b}} \rangle\subset {\mathbb C}[\Vector{y}]$ is radical. 
\end{enumerate}
Also, we define the number $N$ stated in (1) as $\struc{N(\Vector{h})}$, i.e., 
$\struc{N(\Vector{h})}=\#{\mathcal V}(\Vector{h}(\Vector{b}))$,  where $\Vector{b}\in {\mathbb C}^k\backslash V$.
\end{definition}

%\timo{For the next version: It makes sense to me to exchange conditions (2) and (3).}

%\xiaoxian{Added a new notation $\struc{{\mathcal N}(\Vector{h})}$. Shall we use $\Vector{h}$, $\Vector{a}$, $\Vector{y}$ in the rest of this section?}

\begin{comment}
Given a general zero-dimensional system $\{h_1, \ldots, h_m\}$, by Definition \ref{def:gzd} (i),   
 the number of complex solutions to the equations $h_1=\cdots=h_m=0$ is a constant $N>0$ when 
$a_1, \ldots, a_k$ are specialized with  generic complex values. In the rest of this section, we denote this number $N$ corresponding to 
the system $\{h_1, \ldots, h_m\}$ by $\struc{{\mathcal N}(\Vector{h})}$. More precisely, 
 \[\struc{{\mathcal N}(\Vector{h})}~=~\#{\mathcal V}(h_1|_{\Vector{a}=\Vector{b}},\ldots, h_m|_{\Vector{a}=\Vector{b}}), \;\text{where}\; \Vector{b}\in {\mathbb C}^k\backslash V,\;\text{and}\;V\;\text{is stated in Definition \ref{def:gzd}}.\]
In the rest of this section,  when the context is clear, we use $\struc{\Vector{a}}$ and $\struc{\Vector{y}}$ to denote the vectors 
$(a_1, \ldots a_k)\in \C^k$ and  $(y_1, \dots, y_m)\in \C^m$, respectively, and also denote by
$\struc{\Vector{h}}$  a finite set of polynomials $\{h_1, \ldots, h_m\}$. 
\end{comment}
%\timo{This should imho become a proposition}

\begin{proposition}%[\bf Specialization of Gr\"obner Basis]
\citep[Theorem 6.10]{DMST2018}\label{shape}
Given a general zero-dimensional system
 \[\Vector{h}\subset {\mathbb Q}[a_1, \ldots, a_k, y_1, \ldots, y_m],\] 
%and assume $N$ is the number of complex solutions to the equations $h_1=\cdots=h_m=0$ when 
%$a_1, \ldots, a_k$ are specialized with  generic complex values. 
%\timo{It is not clear a priori that this is well defined. Why is $N$ constant for different (generic) fixed $a_i$?}
if $\mathcal{G}$ is a
Gr\"obner basis of $ \langle \Vector{h}\rangle$
with respect to the lexicographic order 
$a_1<\cdots<a_k<y_{1}<\cdots <y_{m}$,
then there exist $m$ polynomials $T_1, \ldots, T_m\in \mathcal{G}$ such that:
%$\langle G_1, G_{2}, \ldots, G_{s}\rangle$
\begin{enumerate}
\item  $T_1, T_2, \ldots, T_m$ have the following form: 
\begin{equation}\label{eq:triangular}
\begin{aligned}
         T_{1}~&=~C_{N}~y_1^N +C_{N-1}~y_1^{N-1} + \ldots +C_{1}~y_1 + C_{0}~, \\
	 T_{2}~ &= ~ L_{2}~ y_{2} + R_{2}~,\\
	               &~~\vdots\\
	  T_{m}~ &= ~ L_{m}~ y_{m} + R_{m}~,
\end{aligned}
\end{equation}
where 
$N~=~N(\Vector{h})$, and for each 
$i = 0,\dots, N$, $j =2,\dots, m$, we have
\begin{align*}
C_{i} \in {\mathbb Q}[a_1, \ldots, a_k], \quad %&\quad \text{for each}~
%i = 0,\dots, N, \\
L_{j} \in {\mathbb Q}[a_1,  \ldots, a_k], \quad 
%{\rm and } \quad
R_{j}\in {\mathbb Q}[a_1, \ldots, a_k, y_{1}, \ldots, y_{j-1}]. %&\quad \text{for each}~j =2,\dots, m.
\end{align*}
%and $\deg_{y_{1}}(R_{i})<\deg_{y_{1}}(G_1)$
\item Let $\Vector{b}\in {\mathbb C}^k$. If $C_{N}L_2L_3\cdots L_{m}(\Vector{b})\neq 0$, then
%	V(Q_{s, N}\Pi_{i=2}^N Q_i)$, 
the set $\{T_1(\Vector{b}), \ldots, T_m(\Vector{b}) \}$ is a Gr\"obner basis of	$\langle \Vector{h}(\Vector{b}) \rangle$
with respect to the lexicographic ordering $y_{1}<\cdots<y_{m}$ in ${\mathbb C}[y_1,  \ldots,  y_m]$.
\end{enumerate}
\end{proposition}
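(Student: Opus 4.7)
The plan is to pass to the fraction field $K := \mathbb{Q}(a_1,\ldots,a_k)$, apply the classical Shape Lemma to the extended ideal $I := \langle \Vector{h}\rangle\,K[y_1,\ldots,y_m]$, and then clear denominators to land back in $\mathbb{Q}[\Vector{a},\Vector{y}]$. First I would argue that $I$ is zero-dimensional of vector-space dimension exactly $N = N(\Vector{h})$ and is radical: the degree count comes from Definition \ref{def:gzd}(1) since a generic specialization produces $N$ solutions, while radicality follows from Definition \ref{def:gzd}(2), as any embedded or multiple primary component would force every sufficiently generic specialization to have coincident solutions. Because $y_1$ separates points of the generic fibre (Definition \ref{def:gzd}(3)), the ideal $I$ is in shape position with respect to $y_1$, so the classical Shape Lemma yields a reduced lex Gr\"obner basis of $I$ of the form
\begin{align*}
y_1^N + (\text{lower order in } y_1), \qquad y_2 - r_2(y_1), \qquad \ldots, \qquad y_m - r_m(y_1),
\end{align*}
with $r_j \in K[y_1]$. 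Clearing denominators produces polynomials $T_1,\ldots,T_m\in \mathbb{Q}[\Vector{a},\Vector{y}]$ of the shape \eqref{eq:triangular}; an elementary content argument confirms that, up to multiplication by elements of $\mathbb{Q}[\Vector{a}]$, these $T_i$ must appear in any lex Gr\"obner basis $\mathcal{G}$ of $\langle \Vector{h}\rangle$ for the order $a_1<\cdots<a_k<y_1<\cdots<y_m$.

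For Part (2), I would invoke Proposition \ref{pro:noncomp} with index $i = k$. The key observation is that $\mathcal{G} \cap \mathbb{Q}[\Vector{a}] = \emptyset$: any nonzero polynomial in this intersection would vanish on $\overline{\proj_k(\cV(\Vector{h}))}$, but Definition \ref{def:gzd}(1) guarantees generic $\Vector{b}\in \mathbb{C}^k$ admits solutions, so the closure equals $\mathbb{C}^k$. Thus the hypothesis $\Vector{b}\in \cV(\mathcal{G}_k(\Vector{b}))$ of Proposition \ref{pro:noncomp} is vacuously satisfied. Moreover, the monomials $y_1^N, y_2, \ldots, y_m$ are pairwise non-divisible and, by the Shape Lemma above, cover the leading monomials of every element of $\mathcal{G}$ in the variables $y_1,\ldots,y_m$; hence $\{T_1,\ldots,T_m\}$ is a noncomparable subset of $\mathcal{G}$ whose leading coefficients with respect to $y_1,\ldots,y_m$ are precisely $C_N, L_2, \ldots, L_m$. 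Proposition \ref{pro:noncomp} then delivers the specialization claim whenever $C_N L_2\cdots L_m(\Vector{b})\neq 0$.

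The principal obstacle I anticipate is the ``completeness'' half of the noncomparable-subset verification, namely that every leading monomial appearing in $\mathcal{G}\setminus (\mathcal{G}\cap \mathbb{Q}[\Vector{a}])$ is divisible by one of $y_1^N, y_2, \ldots, y_m$. This reduces to knowing that the reduced lex Gr\"obner basis of $I$ over $K$ has leading-monomial set exactly $\{y_1^N, y_2, \ldots, y_m\}$, which is the substantive content of the Shape Lemma in shape position. The delicate point is the passage between Gr\"obner bases in $K[\Vector{y}]$ and in $\mathbb{Q}[\Vector{a},\Vector{y}]$: I would need to confirm that clearing denominators introduces no spurious leading terms and that elements of $\mathcal{G}$ with nontrivial content in $\mathbb{Q}[\Vector{a}]$ are correctly absorbed into the $T_i$.
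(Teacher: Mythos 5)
The paper itself contains no proof of this proposition --- it is imported verbatim from \cite[Theorem 6.10]{DMST2018} --- so your argument can only be judged on its own terms. Your overall strategy (extend to $K=\mathbb{Q}(a_1,\ldots,a_k)$, apply the classical Shape Lemma to the extended ideal $I$, transfer back to $\mathbb{Q}[\Vector{a},\Vector{y}]$, and feed the resulting noncomparable set into Proposition \ref{pro:noncomp} with $i=k$ for part (2)) is the natural one and can be made to work. Your reading of Definition \ref{def:gzd}(1)--(2) as radicality of the generic specialization, hence of $I$, matches the paper's intended meaning, and your observation that $\mathcal{G}\cap\mathbb{Q}[\Vector{a}]=\emptyset$ (so the hypothesis of Proposition \ref{pro:noncomp} is vacuous) is correct.

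The genuine gap sits exactly where you suspect it, and the patch you propose --- ``an elementary content argument confirms that, up to multiplication by elements of $\mathbb{Q}[\Vector{a}]$, these $T_i$ must appear in any lex Gr\"obner basis $\mathcal{G}$'' --- is false as stated. An element of $\mathcal{G}$ with $\lpp_k$ equal to $y_j$ need not be a $\mathbb{Q}[\Vector{a}]$-multiple of the denominator-cleared reduced element $y_j-r_j(y_1)$: it only agrees with a $K$-multiple of it modulo $I\cap K[y_1,\ldots,y_{j-1}]$, so its tail $R_j$ may carry arbitrary extra multiples of $T_1$, for instance. The correct mechanism is the standard extension lemma for block orders: since lex with $a_1<\cdots<a_k<y_1<\cdots<y_m$ makes every monomial involving a $y$ larger than every monomial in the $a$'s alone, $\mathcal{G}$ is automatically a Gr\"obner basis of the extended ideal $I\subseteq K[\Vector{y}]$ for lex $y_1<\cdots<y_m$. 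Combined with the Shape Lemma's computation $\mathrm{LT}(I)=\langle y_1^N,y_2,\ldots,y_m\rangle$, this forces $\mathcal{G}$ to contain elements $T_1,\ldots,T_m$ with $\lpp_k(T_1)=y_1^N$ and $\lpp_k(T_j)=y_j$ (a leading monomial dividing $y_j$ cannot be $1$ because $I$ is proper, and one dividing $y_1^N$ and lying in $\mathrm{LT}(I)$ must equal $y_1^N$); the shape \eqref{eq:triangular} then follows from the leading-monomial constraints alone, with $R_j$ permitted to have arbitrary $y_1$-degree. The same lemma supplies the ``completeness half'' of the noncomparability check in part (2): every $\lpp_k(g)$ for $g\in\mathcal{G}$ lies in $\mathrm{LT}(I)$ and is therefore divisible by one of $y_1^N,y_2,\ldots,y_m$. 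With that single ingredient substituted for the content argument, your proof goes through.
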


%\textcolor{red}{$T_1$ is unique? Yes. We can even prove this? Choose $T_1$ from the non-comparable set. We can add one more theorem here.}

Here, we explain the relation between the Shape Lemma \cite{BMMT1994} and Proposition \ref{shape}. The original Shape Lemma \cite{BMMT1994} describes a radical zero-dimensional ideal generated by polynomials containing no parameters.
Later, a version for the system involving parameters is described by geometric resolutions \cite{GHMMP1998, GLS2000}. 
%A nice survey for both cases is the lecture note from
%Professor Agnes Szanto
%(https://pdfs.semanticscholar.org/ea26/de09d32d1224ba4eead41d14d9678092a9a8.pdf).
A self-contained proof of Proposition \ref{shape} in the first author's other work \cite[Theorem 6.10]{DMST2018} shows that a triangular system \eqref{eq:triangular} representing the solution set  can be selected from a Gr\"obner basis of a given general zero-dimensional ideal.  As a consequence we obtain the following theorem. %\timo{Maybe this is rather a proposition?}

%\timo{If I see it correctly then here the systems start with $T_1$, but in the introduction we start with $T_0$, right? Can you adjust if I am not overseeing something obvious?}

\begin{proposition}\label{cry:generalei}
Consider a general zero-dimensional system
 \[\Vector{h}\subset {\mathbb Q}[a_1, \ldots, a_k, y_1, \ldots, y_m].\]
 If the elimination ideal $\langle \Vector{h}\rangle\cap {\mathbb Q}[a_1, \ldots, a_k, y_1]$ is principal, 
 %generates  a codimensional-$1$ and pure dimensional algebraic set 
 %\timo{I know, we discussed this already, but is it not sufficient to assume that $\cV(h_1,\ldots,h_m)$ is a hypersurface?--I changed it into "principal"--Xiaoxian}, 
then its radical ideal is  generated by a polynomial $g\in {\mathbb Q}[a_1, \ldots, a_k, y_1]$ such that 
$\deg(g, y_1)~=~ N(\Vector{h})$.

%\xiaoxian{I am not sure if I need (ii) for my algorithm. I will leave it here for now.}
\end{proposition}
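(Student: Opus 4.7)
The plan is to combine the triangular structure furnished by Proposition \ref{shape} with the specialization principle in Proposition \ref{lm:sqsg}, pinching $\deg(g, y_1)$ between $N(\Vector{h})$ from above and below.

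First I would set the stage. Let $\mathcal{G}$ be a lex Gr\"obner basis of $\langle \Vector{h} \rangle$ with $a_1 < \cdots < a_k < y_1 < \cdots < y_m$. Proposition \ref{shape} supplies the distinguished element $T_1 \in \mathcal{G} \cap \Q[\Vector{a}, y_1]$ with $\deg(T_1, y_1) = N(\Vector{h})$. By hypothesis, $\langle \Vector{h} \rangle \cap \Q[\Vector{a}, y_1] = \langle q \rangle$ for some $q \in \Q[\Vector{a}, y_1]$, and I let $g$ be a generator of $\sqrt{\langle q \rangle}$.

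Next I would establish the upper bound. Because $T_1 \in \langle q \rangle$ we have $q \mid T_1$, and therefore $\deg(q, y_1) \leq \deg(T_1, y_1) = N(\Vector{h})$. The radical generator $g$ can be taken as the squarefree part of $q$, so
\begin{equation*}
\deg(g, y_1) \ \leq \ \deg(q, y_1) \ \leq \ N(\Vector{h}).
\end{equation*}

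For the matching lower bound I would run a specialization argument. By Definition \ref{def:gzd}, for $\Vector{b}$ outside some proper affine variety $V \subsetneq \C^k$ the system $\Vector{h}(\Vector{b})$ has exactly $N(\Vector{h})$ distinct solutions whose first coordinates are pairwise distinct; hence $\proj_1(\cV(\Vector{h}(\Vector{b})))$ is a set of exactly $N(\Vector{h})$ points in $\C$. By Proposition \ref{pro:clos}, this set equals $\cV(\langle \Vector{h}(\Vector{b}) \rangle \cap \C[y_1])$, so its vanishing ideal is generated by a squarefree univariate polynomial of degree $N(\Vector{h})$. Invoking Proposition \ref{lm:sqsg}(2), shrinking $V$ if necessary, I get $\langle g(\Vector{b}) \rangle = \sqrt{\langle \Vector{h}(\Vector{b}) \rangle \cap \C[y_1]}$, so $\deg(g(\Vector{b}), y_1) = N(\Vector{h})$. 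Since specialization can only weaken the $y_1$-degree, this forces $\deg(g, y_1) \geq N(\Vector{h})$, and the two bounds together yield the claim.

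The main obstacle is the compatibility of taking the radical with specialization: one must avoid the situation where $g(\Vector{b})$ accidentally acquires repeated factors or drops degree on the chosen parameter slice. This is exactly the content bought by Proposition \ref{lm:sqsg}(2), whose proper subvariety $W$ is absorbed into the finitely many exceptional loci we already remove in the general zero-dimensional setup; no additional analysis is required beyond intersecting these bad loci.
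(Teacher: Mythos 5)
Your proposal is correct and follows essentially the same route as the paper: the upper bound comes from the triangular polynomial $T_1$ of Proposition \ref{shape} lying in the (principal) elimination ideal, and the lower bound comes from specializing via Definition \ref{def:gzd} and Proposition \ref{lm:sqsg}(2) at a generic parameter point. The only cosmetic difference is that the paper identifies the generator $q$ with $T_1$ outright (the lex Gr\"obner basis of a principal elimination ideal has a single element, which must be $T_1$), whereas you argue by divisibility $q \mid T_1$; both yield the same bound.
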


\begin{proof}
Let $\mathcal{G}$ be a
Gr\"obner basis of $ \langle \Vector{h}\rangle$
with respect to the lexicographic order 
$a_1<\cdots<a_k<y_{1}<\cdots <y_{m}$. 
Since $\Vector{h}$ is general zero-dimensional, assume $T_1$ is the polynomial in $\mathcal{G}$ stated in Proposition \ref{shape} , which is the first polynomial in the triangular system  \eqref{eq:triangular}. 
%Since $\langle h_1, \ldots, h_m\rangle\cap {\mathbb Q}[a_1, \ldots, a_k, y_1]$ generates  a codimensional-$1$ and pure dimensional algebraic set, 
%the elimination ideal is principal. 
%Assume the generator  is $\hat g(a_1, \ldots, a_k, y_1)$. In order to show
%$\hat g$ has the form \eqref{eq:gh}, it is sufficient to show 
%$\hat g=c\cdot T_1$, where $c\in {\mathbb Q}$.
By Proposition \ref{pro:elim}, ${\mathcal G}\cap {\mathbb Q}[a_1, \ldots, a_k, y_1]$ is a Gr\"obner basis of $\langle \Vector{h}\rangle\cap {\mathbb Q}[a_1, \ldots, a_k, y_1]$. By the hypothesis that $\langle \Vector{h}\rangle\cap {\mathbb Q}[a_1, \ldots, a_k, y_1]$ is principal,  
${\mathcal G}\cap {\mathbb Q}[a_1, \ldots, a_k, y_1]$ contains only one element.  
%\timo{We also need $T_i \notin {\mathcal G}\cap {\mathbb Q}[a, y_1]$ for $i \geq 2$, right? This is, of course, the case.}\xiaoxian{why would we need $T_i \notin {\mathcal G}\cap {\mathbb Q}[a, y_1]$ for $i \geq 2$?}, 
So we know $\{T_1\} = {\mathcal G}\cap {\mathbb Q}[a_1, \ldots, a_k, y_1]$ since 
$T_1\in {\mathcal G}\cap {\mathbb Q}[a_1, \ldots, a_k, y_1]$, and hence, $\langle \Vector{h}\rangle\cap {\mathbb Q}[a_1, \ldots, a_k, y_1] = \langle T_1 \rangle$.  By \cite[page 187, Proposition 12]{CLO2015}, %$\sqrt{\langle T_1 \rangle}$ is also principal and, 
$\sqrt{\langle T_1 \rangle}=\langle g \rangle$, where
\[g=\frac{T_1}{\struc{\gcd}(T_1, \frac{\partial T_1}{\partial a_1}, \ldots,  \frac{\partial T_1}{\partial a_k},  \frac{\partial T_1}{\partial y_1})}.\]
Here, ``$\gcd$" means the greatest common divisor. 
Hence, $\deg(g, y_1)\leq \deg(T_1, y_1)=N(\Vector{h})$. Below, we prove $\deg(g, y_1)\geq N(\Vector{h})$. 

By Definition \ref{def:gzd}, there exists an affine variety $V_1\subsetneq {\mathbb C}^k$ such that for any $\Vector{b}\in \C^{k}\backslash V_1$, 
${\mathcal V}(\Vector{h}(\Vector{b}))$
has $N(\Vector{h})$ distinct complex points with distinct $y_1$-coordinates. 
By Proposition \ref{lm:sqsg} (2), 
 there exists $V_2\subsetneq {\mathbb C}^k$ such that for any $\Vector{b}\in \C^{k}\backslash V_2$,
  \begin{align*}
\sqrt{\langle \Vector{h}(\Vector{b})\rangle\cap {\mathbb Q}[y_1]}~=~\langle g(\Vector{b})\rangle. \;\;
\end{align*}
Let $\Vector{b}\in \C^{k}\backslash (V_1\cup V_2)$. Then 
$g(\Vector{b})=0$ has $N(\Vector{h})$ distinct complex solutions, which are the $y_1$-coordinates of points in ${\mathcal V}(\Vector{h}(\Vector{b}))$. 
So $\deg(g, y_1)\geq \deg(g(\Vector{b}), y_1) \geq N(\Vector{h})$. 
\end{proof}

\subsection{Radical Elimination Ideals of Lagrange likelihood equations}

Now we apply the results of the previous sections to the special case of Lagrange likelihood equations.

 \begin{lemma}\label{lm:Nf}
 Let $\cM$ be an algebraic statistical model. If its system of Lagrange likelihood equations $\Vector{f}=\{f_0, \ldots, f_{n+s+1}\}$ \eqref{eq:lle} is  general zero-dimensional, then $N(\Vector{f})$ is equal to the ML-degree of $\cM$. 
 \end{lemma}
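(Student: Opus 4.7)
The plan is to observe that both quantities, namely $N(\Vector{f})$ from Definition \ref{def:gzd} and the ML-degree from Definition \ref{def:MLDegree}, are defined as the generic number of complex solutions of the specialized system $\Vector{f}(\Vector{b}) = 0$ when the data vector $\Vector{b} = (b_0,\ldots,b_n) \in \C^{n+1}$ is chosen outside a suitable proper affine variety. The proof therefore amounts to comparing the two exceptional loci and picking a common generic specialization.

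More concretely, I would proceed as follows. First, invoke Theorem \ref{th:mld}: there exist a proper affine variety $V_1 \subsetneq \C^{n+1}$ and a non-negative integer $N_{\mathrm{ML}}$ (the ML-degree of $\cM$) such that for every $\Vector{b} \in \C^{n+1} \setminus V_1$, the specialized system $f_0(\Vector{b},\Vector{p},\Vector{\lambda}) = \cdots = f_{n+s+1}(\Vector{b},\Vector{p},\Vector{\lambda}) = 0$ has exactly $N_{\mathrm{ML}}$ complex solutions. Second, invoke Definition \ref{def:gzd} applied to $\Vector{f}$, which is general zero-dimensional by hypothesis: there exists a proper affine variety $V_2 \subsetneq \C^{n+1}$ such that for every $\Vector{b} \in \C^{n+1} \setminus V_2$, the specialized system has exactly $N(\Vector{f})$ complex solutions (which are moreover all distinct and have distinct first coordinates, though this stronger conclusion is not needed here).

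Finally, since $V_1 \cup V_2$ is still a proper affine subvariety of $\C^{n+1}$, its complement is nonempty, so we may choose $\Vector{b}^* \in \C^{n+1} \setminus (V_1 \cup V_2)$. For this single choice of $\Vector{b}^*$, the number of complex solutions of $\Vector{f}(\Vector{b}^*) = 0$ equals both $N_{\mathrm{ML}}$ and $N(\Vector{f})$, whence $N(\Vector{f}) = N_{\mathrm{ML}}$, which is precisely the ML-degree of $\cM$. There is no substantial obstacle in this argument; the only thing to be careful about is bookkeeping the two exceptional loci and noting that the ``distinctness'' condition in Definition \ref{def:gzd} is compatible with (rather than stronger than) the counting in Theorem \ref{th:mld}, since distinct solutions are indeed counted without multiplicity in both places.
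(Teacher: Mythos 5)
Your proposal is correct and is essentially identical to the paper's own proof: both invoke Theorem \ref{th:mld} for a variety $V_1$, Definition \ref{def:gzd} for a variety $V_2$, and then evaluate at a common generic point $\Vector{b}\in\C^{n+1}\setminus(V_1\cup V_2)$ to identify the two counts. Nothing is missing.
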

 
 \begin{proof}
 Let the ML-degree be $N$. By Theorem \ref{th:mld},  
 there exists an affine variety $V_1\subsetneq {\mathbb C}^{n+1}$ such that for any $\Vector{b}\in \C^{n+1}\backslash V_1$, 
$\#{\mathcal V}(\Vector{f}(\Vector{b}))=N$. 
By Definition \ref{def:gzd}, 
there exists $V_2\subsetneq {\mathbb C}^{n+1}$ such that for any $\Vector{b}\in \C^{n+1}\backslash V_2$, 
$\#{\mathcal V}(\Vector{f}(\Vector{b}))=N(\Vector{f})$. 
Pick $\Vector{b}\in \C^{n+1}\backslash (V_1\cup V_2)$. Then we have 
$N(\Vector{f})=\#{\mathcal V}(\Vector{f}(\Vector{b}))=N$.
 \end{proof}

\begin{corollary}\label{cry:ei}
Let $\cM$ be an algebraic statistical model with ML-degree $N$. Assume that its system of Lagrange likelihood equations \eqref{eq:lle}
\[\Vector{f}=\{f_0, \ldots, f_{n+s+1}\}\subseteq \Q[u_0, \ldots, u_n, p_0, \ldots, p_n, \lambda_1, \ldots, \lambda_{s+1}]=\Q[\Vector{u}, \Vector{p}, \Vector{\lambda}]\] 
 is general zero-dimensional. 
If the elimination ideal $\langle \Vector{f}\rangle\cap {\mathbb Q}[\Vector{u}, p_0]$ is principal,  
then $\sqrt{\langle \Vector{f}\rangle\cap {\mathbb Q}[\Vector{u}, p_0]}$
 is generated by a polynomial in the form 
\begin{align}\label{eq:h}
g(\Vector{u}, p_0)~=~
A_{N}(\Vector{u})~p_0^N + \cdots +A_{1}(\Vector{u})~p_0 + A_{0}(\Vector{u})
\; \;\left(A_N(\Vector{u})\neq 0,\;A_i(\Vector{u})\in  {\mathbb Q}[\Vector{u}]\right),
\end{align}
%where  $A_i(\Vector{u})\in  {\mathbb Q}[\Vector{u}]$ and $A_N(\Vector{u})\neq 0$. 
and 
\begin{enumerate}
%\begin{align}\label{eq:gh}
%g(a_1, \ldots, a_k, y_1)~=~
%C_{N}(\Vector{a})~y_1^N +C_{N-1}(\Vector{a})~y_1^{N-1} + \ldots +C_{1}(\Vector{a})~y_1 + C_{0}(\Vector{a}),
%\end{align}
%where $N = {\mathcal N}(\Vector{h})$,  $C_i(\Vector{a})\in  {\mathbb Q}[a_1,\ldots, a_k]$ and $C_N(\Vector{a})\neq 0$. 
%Furthermore, for a generic $a^*\in {\mathbb C}^{k}$, $\langle g|_{a=a^*}\rangle\subset {\mathbb C}[y_1]$ is a radical ideal. 
\item there exists an affine variety $V\subsetneq {\mathbb C}^{n+1}$ such that for any 
$\Vector{b}\in {\mathbb C}^{n+1}\backslash V$,  
\begin{align}\label{eq:sf1}
\sqrt{\langle \Vector{f}(\Vector{b})\rangle\cap {\mathbb Q}[p_0]}~=~\langle g(\Vector{b}) \rangle;
\end{align}
\item for each $k=0, \ldots, n$, 
there exists an affine variety $V\subsetneq {\mathbb C}^{n}$ such that for any 
$\Vector{b}\in {\mathbb C}^{n}\backslash V$,  
\begin{align}\label{eq:sf2}
\sqrt{\langle f_0|_{\Vector{u}^{(k)}=\Vector{b}}, \ldots, f_{n+s+1}|_{\Vector{u}^{(k)}=\Vector{b}}\rangle\cap {\mathbb Q}[u_k, p_0]}~=~\langle g|_{\Vector{u}^{(k)}=\Vector{b}}\rangle, \;\;
\end{align}
where $\struc{\Vector{u}^{(k)}}=(u_1, \ldots, u_{k-1}, u_{k+1}, \ldots, u_n)$.
\end{enumerate}
\end{corollary}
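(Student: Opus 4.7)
The plan is to identify the polynomial ring $\mathbb{Q}[\Vector{u},\Vector{p},\Vector{\lambda}]$ with the setting of Proposition \ref{cry:generalei} via $(a_1,\dots,a_{n+1}) = (u_0,\dots,u_n)$ and $(y_1,y_2,\dots,y_{n+s+2}) = (p_0,p_1,\dots,p_n,\lambda_1,\dots,\lambda_{s+1})$. Since $\Vector{f}$ is general zero-dimensional and the elimination ideal $\langle\Vector{f}\rangle\cap\mathbb{Q}[\Vector{u},p_0]=\langle q\rangle$ is principal by hypothesis, Proposition \ref{cry:generalei} yields $\sqrt{\langle q\rangle}=\langle g\rangle$ for some $g\in\mathbb{Q}[\Vector{u},p_0]$ with $\deg(g,p_0)=N(\Vector{f})$. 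Lemma \ref{lm:Nf} then gives $N(\Vector{f})=N$, so $g$ has the stated form \eqref{eq:h}.

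For part (1), note that $\deg(q,p_0)\geq\deg(g,p_0)=N>0$ because $g$ divides some power of $q$. I then apply part~(2) of Proposition \ref{lm:sqsg} with the lexicographic order $u_0<\cdots<u_n<p_0<p_1<\cdots<\lambda_{s+1}$ and index $i=n+1$; this immediately provides an affine variety $V\subsetneq\mathbb{C}^{n+1}$ outside of which the equality \eqref{eq:sf1} holds.

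Part (2) is the main obstacle, because Proposition \ref{lm:sqsg}~(2) is stated only for elimination down to a single variable, whereas here the target ring $\mathbb{Q}[u_k,p_0]$ contains two variables. The plan is to mimic the Gr\"obner-basis proof of Proposition \ref{lm:sqsg}, combining Propositions \ref{pro:elim}, \ref{pro:noncomp}, and \ref{lm:sfactor}. I reorder the variables so that those in $\Vector{u}^{(k)}$ come first (in any fixed order), then $u_k$, then $p_0$, then $p_1,\dots,\lambda_{s+1}$, and let $\mathcal{G}$ be the reduced Gr\"obner basis of $\langle\Vector{f}\rangle$ with respect to the resulting lex order. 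Since the parameters $\Vector{u}$ are algebraically independent (any nonzero polynomial relation among them would vanish on the full-dimensional image of the parameter projection, contradicting general zero-dimensionality), Proposition \ref{pro:elim} yields $\mathcal{G}\cap\mathbb{Q}[\Vector{u}^{(k)}]=\emptyset$ and $\mathcal{G}\cap\mathbb{Q}[\Vector{u},p_0]=\{q\}$. The crucial observation is that every other element of $\mathcal{G}$ has its leading monomial, taken with respect to $u_k,p_0,\dots,\lambda_{s+1}$, involving some variable strictly above $p_0$; hence $q$ must belong to every noncomparable subset $\mathcal{N}$ of $\mathcal{G}$. Proposition \ref{pro:noncomp} then furnishes a proper affine subvariety $W_1\subsetneq\mathbb{C}^n$ outside of which $\mathcal{N}(\Vector{b})$ is a Gr\"obner basis of $\langle\Vector{f}(\Vector{b})\rangle$ with respect to $u_k<p_0<\cdots$; for generic $\Vector{b}$, the non-$q$ elements of $\mathcal{N}(\Vector{b})$ retain leading monomials involving some $p_j$ or $\lambda_j$, so $\mathcal{N}(\Vector{b})\cap\mathbb{Q}[u_k,p_0]=\{q|_{\Vector{u}^{(k)}=\Vector{b}}\}$, and Proposition \ref{pro:elim} yields $\langle\Vector{f}|_{\Vector{u}^{(k)}=\Vector{b}}\rangle\cap\mathbb{Q}[u_k,p_0]=\langle q|_{\Vector{u}^{(k)}=\Vector{b}}\rangle$. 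Finally, applying Proposition \ref{lm:sfactor} to the irreducible factorization of $q$ (whose squarefree part is $g$) and taking radicals gives $\sqrt{\langle q|_{\Vector{u}^{(k)}=\Vector{b}}\rangle}=\langle g|_{\Vector{u}^{(k)}=\Vector{b}}\rangle$ on a further dense open set, establishing \eqref{eq:sf2}. The hardest step is the leading-monomial analysis that forces $q$ into every noncomparable subset, which is precisely what bridges the one-variable-versus-two-variable gap separating Proposition \ref{lm:sqsg}~(2) from part (2) of the corollary.
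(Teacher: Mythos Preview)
Your argument for the form of $g$ and for part~(1) coincides with the paper's: both invoke Lemma~\ref{lm:Nf}, Proposition~\ref{cry:generalei}, and Proposition~\ref{lm:sqsg}(2) with $(z_1,\dots,z_i)=\Vector{u}$.

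For part~(2) the paper is much terser than you: it simply reorders the variables as $(z_1,\dots,z_i)=\Vector{u}^{(k)}$, $(z_{i+1},\dots)=(u_k,\Vector{p},\Vector{\lambda})$ and again cites Proposition~\ref{lm:sqsg}(2). You are right that this is formally loose, since the proposition as stated eliminates down to $\mathbb{C}[z_{i+1}]$, not $\mathbb{C}[z_{i+1},z_{i+2}]$; the paper is tacitly relying on the fact that the \emph{proof} of Proposition~\ref{lm:sqsg} goes through verbatim whenever $\mathcal{G}\cap\mathbb{Q}[z_1,\dots,z_{i+1},z_{i+2}]$ is still a singleton. Your Gr\"obner/noncomparable-set argument is exactly that extension made explicit, so in content you and the paper agree; you have simply filled in the step the paper waves through.

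There is, however, one genuine misstep in your last move. You invoke Proposition~\ref{lm:sfactor} to conclude $\sqrt{\langle q|_{\Vector{u}^{(k)}=\Vector{b}}\rangle}=\langle g|_{\Vector{u}^{(k)}=\Vector{b}}\rangle$ ``on a further dense open set,'' but that proposition (Hilbert irreducibility) only produces an infinite subset $\Gamma\subseteq\mathbb{Q}^n$, not a Zariski-open subset of $\mathbb{C}^n$, so it cannot deliver the affine variety $V\subsetneq\mathbb{C}^n$ the corollary demands. The correct argument, which is what the paper uses inside the proof of Proposition~\ref{lm:sqsg}(2), is more elementary: since $\sqrt{\langle q\rangle}=\langle g\rangle$ we have $g\mid q$ and $q\mid g^m$ for some $m$, hence $\sqrt{\langle q(\Vector{b})\rangle}=\sqrt{\langle g(\Vector{b})\rangle}$ for \emph{every} $\Vector{b}$; and because $g$ is squarefree, $g(\Vector{b})$ remains squarefree off a proper subvariety (cut out, say, by the vanishing of the leading coefficient and of suitable discriminants/resultants), so $\sqrt{\langle g(\Vector{b})\rangle}=\langle g(\Vector{b})\rangle$ there. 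Replace your appeal to Proposition~\ref{lm:sfactor} with this, and your proof of part~(2) is complete and in fact more careful than the paper's.
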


\begin{proof}
By Lemma \ref{lm:Nf} and Proposition \ref{cry:generalei}, 
the radical of $\langle \Vector{f}\rangle\cap {\mathbb Q}[\Vector{u}, p_0]$ has the form \eqref{eq:h}.
Applying Proposition \ref{lm:sqsg} (2), we can conclude (1) and (2). More specifically, 
let $\Vector{h}=\Vector{f}$.
If we let $(z_1, \ldots, z_{i})=\Vector{u}$ and let $(z_{i+1}, \ldots, z_{2n+s+3})=(\Vector{p}, \Vector{\lambda})$, then $\Vector{h}\subseteq \Q[z_1, \ldots, z_{2n+s+3}]$, and by  Proposition \ref{lm:sqsg} (2), we have \eqref{eq:sf1}.
For each $k=0, \ldots, n$, if we let $(z_1, \ldots, z_{i})=(u_1, \ldots, u_{k-1}, u_{k+1}, \ldots, u_n)$ and 
let $(z_{i+1}, \ldots, z_{2n+s+3})=(u_k, \Vector{p}, \Vector{\lambda})$, 
then still, we have 
$\Vector{h}\subseteq \Q[z_1, \ldots, z_{2n+s+3}]$
, and  by  Proposition \ref{lm:sqsg} (2), we have \eqref{eq:sf2}. 
\end{proof}

\section{A Structure Theorem for Statistical Models}\label{sec:structure}

In this section we prove our theoretical main result Theorem \ref{th:main} (Corollary  \ref{cry:main}) with respect to elimination ideals of Lagrange likelihood equations,  and relate the main result to the discriminants of elimination ideals, see Corollary \ref{cry:discr}.

\subsection{Elimination Ideals of Likelihood Equations}

In this subsection we show Theorem \ref{th:main}, our main theoretical result. It can be summarized as follows: Let $\cM$ be a statistical model. If
\begin{enumerate}
	\item its defining Lagrange likelihood equations form a general zero-dimensional system, and
	\item both elimination ideals of the likelihood equations and its \struc{\textit scaled equations} are principal,
\end{enumerate}
then the sum of data appears (with a certain pattern) in the coefficients of generator polynomials of the two above elimination ideals. 
We remark that
the conditions (1--2) are satisfied for various statistical models.
We give a precise statement of the main result in Section \ref{sec:maintheorem}, and provide various examples in Section \ref{sssec:ex} for understanding the main result. 
After that, we relate the main result on elimination ideals to their discriminants in Section \ref{sec:maindiscr}. In Section \ref{sec:alg}, we will further show that  the knowledge of the structure of elimination ideals vastly simplifies the computation of the elimination ideal.

\subsubsection{Main Theorem}\label{sec:maintheorem}

In order to state the main theorem precisely, we fix some notions for the rest of this section. In what follows, $\struc{\cM}$ denotes a \struc{\textit{statistical model}} with \struc{\textit{ML-degree} $N$} and defining the system of \struc{\textit{Lagrange likelihood equations} $\Vector{f}=\{f_0,\ldots,f_{n+s+1}\}$} as in \eqref{eq:lle}. We define the polynomial
\begin{align*}
    \struc{\su} \ = \ \sum_{k=0}^n u_k \in  \Q[\Vector{u}].% \Q[\Vector{u},\Vector{p},\Vector{\lambda}],
\end{align*} 

Moreover, we need a scaled version of the usual Lagrange likelihood equations.

\begin{definition}[Scaled Likelihood Equations]
\label{def:ScaledLikelihoodEquations}
Let $\cM$ be an algebraic statistical model with independent model invariants $g_1, \ldots, g_s\in \Q[p_0, \ldots, p_n]$. 
Consider its Lagrange likelihood equation system  $\Vector{f}\subseteq \Q[\Vector{u},\Vector{p},\Vector{\lambda}]$ \eqref{eq:lle}. We introduce the \struc{\textit{scaled variables}} $\struc{x_j} = p_j \cdot \su$ for $j = 0,\ldots,n$, and define the system of \struc{\textit{scaled likelihood equations} $\Vector{F}=\{F_0, \ldots, F_{n+s+1}\} \subseteq \Q[\Vector{u},\Vector{x},\Vector{\lambda}]$} corresponding to $\Vector{f}$ as the numerators of the rational functions
\begin{align*}
f_{0 \ {\left|p_j=\frac{x_j}{\su}\right.}}, \ \ldots, \  f_{n+s+1 \ \left|p_j=\frac{x_j}{\su}\right.}, 
\end{align*}
i.e., when setting $\struc{d} = \max(\deg(g_1), \ldots, \deg(g_s))$, we obtain the system
\begin{align}\label{eq:slle}
\begin{array}{rl}
\struc{F_{0}(\Vector{u},\Vector{x},\Vector{\lambda})} \ = \ &\su^{d-1}x_0\lambda_1+\su^{d-\deg(g_1)}\frac{\partial g_1}{\partial p_0}(x_0, \ldots, x_n)\lambda_2 + \cdots + \\
& \su^{d-\deg(g_s)}\frac{\partial g_s}{\partial p_0}(x_0, \ldots, x_n)\lambda_{s+1}-\su^{d}u_0,\\
%\end{align*}
%\begin{align*}
&\quad\vdots\\
\struc{F_{n}(\Vector{u},\Vector{x},\Vector{\lambda})} \ = \ &\su^{d-1}x_n\lambda_1+\su^{d-\deg(g_1)}\frac{\partial g_1}{\partial p_n}(x_0, \ldots, x_n)\lambda_2 + \cdots + \\
& \su^{d-\deg(g_s)}\frac{\partial g_s}{\partial p_n}(x_0, \ldots, x_n)\lambda_{s+1}-\su^{d}u_n,\\
\struc{F_{n+1}(\Vector{u},\Vector{x},\Vector{\lambda})}=&g_1(x_0, \ldots,x_n),\\
%\end{align*}
%\begin{align*}
&\quad\vdots\\
%\end{align*}
%\begin{align*}
\struc{F_{n+s}(\Vector{u},\Vector{x},\Vector{\lambda})} =&g_s(x_0,\ldots,x_n),\\
%\end{align*}
%\begin{align*}
\struc{F_{n+s+1}(\Vector{u},\Vector{x},\Vector{\lambda})}=&x_0+\cdots+x_n-\su.
\end{array}
\end{align}
\end{definition}

For our main theorem we need to consider the generators of two specific radical ideals. 
First, 
% % % NOTE: The radical of an intersection of ideals is equal to the intersection of their radicals
consider Lagrange likelihood equation system $\Vector{f}$  \eqref{eq:lle}, and assume that the ideal $\langle \Vector{f} \rangle \cap \Q[\Vector{u},p_0]$ is principal. 
	Then we denote the generator of its radical ideal by $\struc{\h}$, i.e.
	\begin{align*}
        \struc{\langle \h \rangle} \ = \ \sqrt{\langle  \Vector{f} \rangle \cap \Q[\Vector{u},p_0]}.
	\end{align*}
Second, for scaled likelihood equation system $\Vector{F}$ \eqref{eq:slle}, also assume $\langle  \Vector{F}  \rangle \cap \Q[\Vector{x},p_0]$ is principal.  We define 
	\begin{align*}
        \struc{\langle \F \rangle} \ = \ \sqrt{\langle \Vector{F}  \rangle \cap \Q[\Vector{u},x_0]}.
	\end{align*}
Furthermore, we define
\begin{align*}
	\struc{\ell_{\Vector{f}}} \ = \
	\begin{cases}
		1 & \text{ if } \h \in \langle \su \rangle \\
		0 & \text{ if } \h \notin \langle \su \rangle
	\end{cases}, \qquad
	\struc{\delta_{\Vector{f}}} \ = \
	\begin{cases}
		1 & \text{ if } \F \in \langle \su \rangle \\
		0 & \text{ if } \F \notin \langle \su \rangle
	\end{cases}.
\end{align*}
When $\Vector{f}$ and $\Vector{F}$ are clear from the context, we  simply denote $\ell_{\Vector{f}}$ and  $\delta_{\Vector{F}}$ by $\struc{\ell}$ and $\struc{\delta}$.
	
\begin{theorem}[Main Theorem]\label{th:main}
Let $\cM$ be an algebraic statistical model with ML-degree $N$, Lagrange likelihood equation system $\Vector{f}$ in \eqref{eq:lle}, and scaled likelihood equation system $\Vector{F}$ as in \eqref{eq:slle}. 
%\textcolor{red}{Let all further notation be as defined before.} 
Assume that 
\begin{enumerate}
	\item $\Vector{f}$ is general zero-dimensional, and
	\item both ideals $\langle \Vector{f}\rangle\cap {\mathbb Q}[\Vector{u}, p_0]$ and $\langle \Vector{F}\rangle\cap {\mathbb Q}[\Vector{u}, x_0]$ are principal.
\end{enumerate}
Then there exist an integer $t>0$, and a polynomial $C(\Vector{u})\in  {\mathbb Q}[\Vector{u}]\setminus \langle \su \rangle$  such that 
\begin{align}\label{eq:MainTheoremStructureEF}
\F|_{x_0=\su \cdot p_0}~=~C(\Vector{u})\su^{t-\ell}\h,
\end{align}
and, $\h$ has the form
\begin{align}\label{eq:MainTheoremStructureEf}
\h \ = \ \sum_{k=0}^NB_{k}(\Vector{u})\su^{k-t+\ell}~p_0^k, \;\;\; \text{where}\;\; B_k(\Vector{u})=\frac{\coeff(\F, x_0^k)}{C(\Vector{u})}\in {\mathbb Q}[\Vector{u}].
\end{align}
% % % Moreover, if $\F\in \langle \su \rangle$, then $\h$ has the form
% % % \begin{align}\label{eq:r1}
% % % \h \ = \ \sum_{k=0}^N\tilde{B}_{k}(\Vector{u})\su^{k-t+1}~p_0^k, \;\;\; \text{where}\;\; \tilde{B}_k(\Vector{u})=\frac{\coeff(\F, x_0^k)}{\su C(\Vector{u})}\in {\mathbb Q}[\Vector{u}].
% % % \end{align}
\end{theorem}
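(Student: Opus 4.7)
The strategy is to exploit the substitution $x_j=\su p_j$ relating $\Vector{F}$ to $\Vector{f}$: this yields divisibility $\h\mid\F|_{x_0=\su p_0}$ in $\Q[\Vector{u},p_0]$, after which the entire statement reduces to an analysis of the $\su$-adic valuation of the resulting quotient. First, from Definitions~\ref{definition:LikelihoodEquations} and~\ref{def:ScaledLikelihoodEquations} (recall $F_i$ is the numerator of $f_i|_{p_j=x_j/\su}$), a direct computation using the homogeneity of the $g_j$'s produces, for each $i$, a non-negative integer $d_i$ with
\[
F_i(\Vector{u},\su p_0,\ldots,\su p_n,\Vector{\lambda}) \ = \ \su^{d_i}\,f_i(\Vector{u},p_0,\ldots,p_n,\Vector{\lambda}).
\]
Hence the substitution $x_j\mapsto\su p_j$ sends $\langle\Vector{F}\rangle$ into $\langle\Vector{f}\rangle$; applying this to a power $\F^r\in\langle\Vector{F}\rangle$ gives $\F|_{x_0=\su p_0}\in\sqrt{\langle\Vector{f}\rangle}\cap\Q[\Vector{u},p_0]=\langle\h\rangle$, and thus $\h\mid\F|_{x_0=\su p_0}$ in $\Q[\Vector{u},p_0]$.

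Since $\Vector{F}$ is also general zero-dimensional with $N(\Vector{F})=N$ (its solutions biject with those of $\Vector{f}$ via $\Vector{x}=\su\Vector{p}$ away from $\su=0$), Proposition~\ref{cry:generalei} yields $\deg(\h,p_0)=\deg(\F,x_0)=N$. Writing $\F=\sum_{k=0}^{N}A_k(\Vector{u})x_0^k$, we get $\F|_{x_0=\su p_0}=\sum_k A_k\su^k p_0^k$, still of $p_0$-degree $N$, so $Q:=\F|_{x_0=\su p_0}/\h\in\Q[\Vector{u}]$. Let $v$ be the largest integer with $\su^v\mid Q$ and write $Q=C(\Vector{u})\su^v$ with $C\notin\langle\su\rangle$; set $\tilde B_k:=\coeff(\h,p_0^k)$. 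Equating coefficients of $p_0^k$ in $C\su^v\h=\F|_{x_0=\su p_0}$ gives $C\su^v\tilde B_k=A_k\su^k$. Since $\gcd(C,\su)=1$, this forces $C\mid A_k$, so $B_k:=A_k/C\in\Q[\Vector{u}]$ and $\tilde B_k=B_k\su^{k-v}$ in the fraction field; for $k<v$, polynomiality of $\tilde B_k$ forces $\su^{v-k}\mid B_k$, so that the formal expression $B_k\su^{k-t+\ell}$ with $t:=v+\ell$ is a genuine polynomial for every $k$. This delivers exactly \eqref{eq:MainTheoremStructureEF} and \eqref{eq:MainTheoremStructureEf}, modulo the claim $t>0$.

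It remains to prove $t>0$. If $\ell=1$ this is automatic since $v\geq 0$. The substantive case is $\ell=0$, where I must show $\su\mid Q$. For this, I would observe that for every $\Vector{u}^*\in\cV(\su)$ the tuple $(\Vector{u}^*,\Vector{x}=\Vector{0},\Vector{\lambda})$ solves $\Vector{F}=\Vector{0}$ for \emph{any} $\Vector{\lambda}$: homogeneity of the $g_j$'s yields $g_j(\Vector{0})=0$; $F_{n+s+1}$ collapses to $-\su=0$; and each of $F_0,\ldots,F_n$ has the form $x_i\cdot(\cdots)-\su^d u_i$, both summands of which vanish on $\{\su=0,\Vector{x}=\Vector{0}\}$. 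Projecting to $\Vector{u}$-space then shows that $A_0=\F(\Vector{u},0)$ vanishes on $\cV(\su)$, so $\su\mid A_0$. Combined with the coefficient identities $A_0=Q\tilde B_0$ and $A_k\su^k=Q\tilde B_k$ for $k\geq 1$, a one-line case analysis yields: if $\su\nmid Q$, then $\su\mid\tilde B_k$ for every $k$, so $\su\mid\h$, contradicting $\ell=0$. Hence $\su\mid Q$, establishing $t>0$. The main obstacle throughout is precisely this last structural observation---identifying the trivial family $(\Vector{u}^*,\Vector{0},\Vector{\lambda})$ of solutions of $\Vector{F}$ over $\cV(\su)$ and using it to force $\su\mid A_0$; everything else is routine bookkeeping of $\su$-adic valuations.
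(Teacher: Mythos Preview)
Your proof is correct and reaches the same conclusion, but the packaging differs from the paper's in an instructive way. The paper builds up a geometric picture via the scaling map $\phi(\Vector{u},\Vector{p},\Vector{\lambda})=(\Vector{u},\su\Vector{p},\Vector{\lambda})$ and proves the variety containment $\cV(\su\cdot\h)\subset\cV(\F|_{x_0=\su p_0})$ (their Lemma~\ref{lm:key}); passing to radicals this yields $\su^{1-\ell}\h\mid\F|_{x_0=\su p_0}$ in one stroke, so that $t\geq 1$ is automatic when they factor the quotient as $\su^{t-1}C(\Vector{u})$. You instead argue purely algebraically: the ring map $x_j\mapsto\su p_j$ sends each $F_i$ to $\su^{d_i}f_i$, hence carries $\langle\Vector{F}\rangle$ into $\langle\Vector{f}\rangle$, giving only $\h\mid\F|_{x_0=\su p_0}$; you then recover the missing $\su$ factor by a separate argument showing $\su\mid A_0$ (via the same trivial family $(\Vector{u}^*,\Vector{0},\Vector{\lambda})$ over $\cV(\su)$ that the paper records in Lemma~\ref{lm:s0}) and a short primality argument.

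What each approach buys: the paper's route front-loads the work into the chain of lemmas about $\phi$, after which $t>0$ falls out without case analysis; your route avoids that machinery and gets divisibility by $\h$ more directly, at the cost of a second pass to extract the $\su$ factor. Two small points worth tightening in your write-up: when you say ``projecting to $\Vector{u}$-space'' you mean the $(\Vector{u},x_0)$-projection (this is what gives $\F(\Vector{u}^*,0)=0$); and your assertion that $\Vector{F}$ is general zero-dimensional with $N(\Vector{F})=N$ is exactly the paper's Lemma~\ref{lm:dimF}, so you might cite it rather than leave it parenthetical.
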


While the previous theorem is our main theoretical result, it is not easy to see its relevance. We clarify this in the following corollary.

\begin{corollary}\label{cry:main}
Let all notions and assumptions be as in Theorem \ref{th:main}. 
Then there exist an integer $t >0$, and a polynomial $C(\Vector{u})\in  {\mathbb Q}[\Vector{u}]\setminus \langle \su \rangle$  such that for $k=0, \ldots, N$, we have 
\begin{align}\label{eq:crymain}
 \coeff(\h, p_0^k)~\su^{t-\ell-\delta} \ = \  \tilde{B}_k(\Vector{u})~\su^{k}~, \;\;\; \text{where}\;\; \tilde{B}_k(\Vector{u})=\frac{\coeff(\F, x_0^k)}{\su^{\delta}C(\Vector{u})} \in {\mathbb Q}[\Vector{u}].%\;
\end{align}
Therefore, %for every index $k = 0, \ldots, N$,
\begin{enumerate}
\item   if $k>t - \ell-\delta$, %(notice here, $t$ and $\ell_i$ are constants for a given model), 
%the integers  $t$ stated in my main theorem \ref{th:main} is less than $k+\ell$,  
then the sum of data $\su$ appears in $\coeff(\h, p_0^k)$ as a factor with
multiplicity at least $k - t + \ell+\delta$, 
\item if $k<t - \ell-\delta$, then $\su$ appears in $\coeff(\F, x_0^k)$ as a factor with
multiplicity at least $t -\ell - k$,  and 
\item only when $k = t - \ell-\delta$, it is possible for either $\coeff(\h, p_0^k)$ or $\coeff(\F, x_0^k)$ contains no factor $\su$. 
\end{enumerate}
\end{corollary}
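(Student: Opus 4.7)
The plan is to derive the identity \eqref{eq:crymain} as an immediate algebraic manipulation of the structural formula \eqref{eq:MainTheoremStructureEf} supplied by Theorem \ref{th:main}, and then to read off the three consequences by inspecting the sign of the exponent $k - t + \ell + \delta$.

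First, I would extract the coefficient of $p_0^k$ from \eqref{eq:MainTheoremStructureEf} to obtain $\coeff(\h, p_0^k) = B_k(\Vector{u})\,\su^{k-t+\ell}$, where $B_k = \coeff(\F, x_0^k)/C(\Vector{u})$ is known to lie in $\Q[\Vector{u}]$ by the main theorem. Multiplying both sides by $\su^{t-\ell-\delta}$ rewrites this as $\coeff(\h, p_0^k)\,\su^{t-\ell-\delta} = B_k\,\su^{k-\delta}$, and the identity \eqref{eq:crymain} falls out once I confirm that $\tilde{B}_k := B_k/\su^{\delta}$ is itself a polynomial. For $\delta = 0$ this is automatic. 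For $\delta = 1$, the definition of $\delta$ forces $\F \in \langle \su \rangle$, so $\su$ divides every $\coeff(\F, x_0^k) = C(\Vector{u})\,B_k$; since $\su = u_0 + \cdots + u_n$ is a linear (hence prime) element of the UFD $\Q[\Vector{u}]$ and $C(\Vector{u}) \notin \langle \su \rangle$ by hypothesis, primality of $\su$ yields $\su \mid B_k$, giving $\tilde{B}_k \in \Q[\Vector{u}]$.

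For the three divisibility statements I would simply track the exponent in the identity $\coeff(\h, p_0^k) = \tilde{B}_k\,\su^{k - t + \ell + \delta}$. Case (1) is read off directly when this exponent is positive. For case (2), I use that $\h \in \Q[\Vector{u}, p_0]$ forces the product $B_k\,\su^{k-t+\ell}$ to be a genuine polynomial, so when $k - t + \ell < 0$ we must have $\su^{t-\ell-k} \mid B_k$, and multiplying by $C(\Vector{u})$ transfers this divisibility to $\coeff(\F, x_0^k) = C(\Vector{u})\,B_k$ with the required multiplicity $t - \ell - k \ge t - \ell - (t-\ell-\delta) - 1 + 1 = \delta+1$ beyond the threshold. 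Case (3) corresponds to the exponent vanishing, in which case neither coefficient is compelled to contain $\su$.

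The only technically non-trivial point will be the $\delta = 1$ cancellation, which genuinely uses primality of $\su$ in $\Q[\Vector{u}]$ together with the hypothesis $C(\Vector{u}) \notin \langle \su \rangle$ provided by Theorem \ref{th:main}. Everything else reduces to coefficient extraction and careful bookkeeping of the two indicators $\ell, \delta$ and the $\su$-exponents, so no further obstacle is anticipated.
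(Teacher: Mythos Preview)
Your proof is correct and follows the same approach as the paper: extract the coefficient of $p_0^k$ from \eqref{eq:MainTheoremStructureEf} and rearrange. You are in fact more careful than the paper, which silently asserts $\tilde{B}_k \in \Q[\Vector{u}]$ and omits the case analysis for (1)--(3); your use of the primality of $\su$ together with $C(\Vector{u}) \notin \langle \su \rangle$ to justify the $\delta = 1$ cancellation is exactly the missing justification.
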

\begin{proof}
	Rewrite \cref{eq:MainTheoremStructureEf} as 
	$\sum_{k=0}^N\coeff(\h, p_0^k)~p_0^k  =  \sum_{k=0}^NB_{k}(\Vector{u})\su^{k-t+\ell}~p_0^k$, 
where $B_k(\Vector{u})=\frac{\coeff(\F, x_0^k)}{C(\Vector{u})}$. Comparing the coefficients from both sides, we have   \[\coeff(\h, p_0^k) =B_{k}(\Vector{u})\su^{k-t+\ell}  =\frac{\coeff(\F, x_0^k)}{\su^{\delta}C(\Vector{u})}\su^{k-t+\ell+\delta},\] 
	and so we have \eqref{eq:crymain}.
\end{proof}

We summarize the values of $N, t, \ell, \delta$ stated in Corollary \ref{cry:main} for more test models in Table \ref{tab:Nt}. We explain how to (efficiently) compute $N, t, \ell, \delta$ by Example \ref{ex:t}. 

\begin{table}[h]
\centering
\begin{tabular}{ | c | c | c | c | c | c | c | c | c | c |  }
\hline
Models  & \ref{ex:l1} & \ref{ex:l2} & \ref{ex:l3} & \ref{ex:l4} & \ref{ex:l5} &  \ref{ex:l6} & \textcolor{red}{\ref{ex:l7}} &  \ref{ex:l8} & \ref{ex:l9} \\ 
 \hline
 $N$ & 3 & 2 & 4 & 6 & 12 & 10 & \textcolor{red}{23} &14 & 9 \\  
 \hline
$t$ & 2 & 1 & 2 & 2 & 2 & 1 & \textcolor{red}{20} & 1 & 3 \\
 \hline   
 $\ell$ & 0 & 0 & 0 & 0 & 0 & 0 & 0 & 0 & 0 \\
 \hline
 $\delta$ & 1 & 1 & 1 & 1 & 1 & 1 & 1 & 1 & 1 \\
 \hline
\end{tabular}
\smallskip 
\caption{$N, t, \ell, \delta$ for the models tested in this article; see Section \ref{sec:implementation}. The defining equations of Models \ref{ex:l1}--\ref{ex:l9} are listed in the Appendix \ref{appendix}. }\label{tab:Nt}
\end{table}
From some examples and computational results for concrete models (see Table \ref{tab:Nt} and Section \ref{sssec:ex}),  we remark that:
\begin{enumerate}
\item in practice, $t$ is indeed usually a small number (Table \ref{tab:Nt});
\item  we have $\ell = 0$ in all investigated examples (Table \ref{tab:Nt}), but we do not have a proof for this fact;
\item both cases $\delta=1$ (Example \ref{ex:linear}) and $\delta=0$ (Example \ref{ex:coin})  can happen. 
\end{enumerate}
By Corollary  \ref{cry:main} (1), for a given model with ML-degree $N$,  if $t-\ell-\delta$ is much less than $N$, then
$\su$ appears in most coefficients of $\h$: $\coeff(\h, p_0^{t-\ell-\delta+1}), \ldots, \coeff(\h, p_0^N)$.  This fact will be used to improve the efficiency for computing 
$\h$ in Section \ref{sec:alg}.

\subsubsection{Examples for Main Theorem}\label{sssec:ex}
\begin{example}[Four-Sided Die]\label{ex:linear}
Consider the linear model $\cM$ below given by a weighted four-sided die \cite[Example 1]{Tang2017}, for which we know the ML-degree is $3$. 
  % \begin{center}
%{\bf Maximize the {\em likelihood function}
%${p_{0}^{u_{0}}p_{1}^{u_{1}}p_{2}^{u_{2}}p_{3}^{u_{3}}} $ subject to}
\[\cM~=~{\mathcal P}\left(p_0+2p_1+3p_2-4p_3\right)\cap \Delta_3,\]
where 
\[\Delta_3~=~\{(p_0, p_1, p_2, p_3)\in {\mathbb R}^4_{>0} |p_{0}+ p_{1}+p_{2}+ p_{3}=1\}.\]
%\end{center}
The Lagrange likelihood equations \eqref{eq:lle} are 
  \begin{align*}
f_0&=p_{0}\lambda_1+p_{0}\lambda_2-  u_{0}& f_3&=p_{3}\lambda_1-4p_{3}\lambda_2-   u_{3}\\
f_1&=p_{1}\lambda_1+2p_{1}\lambda_2 -  u_{1}& f_4&=p_0+2p_1+3p_2-4p_3\\
f_2&=p_{2}\lambda_1+3p_{2}\lambda_2-  u_{2}& f_5&=p_{0} + p_{1} +p_{2}+p_{3} - 1,
\end{align*}
where $p_{0},p_{1},p_{2},p_{3}, \lambda_1, \lambda_2$ are variables, and  $u_{0},  u_{1},  u_{2},  u_{3}$ are parameters. 
Let $\su=u_0+u_1+u_2+u_3$. 
The scaled equations \eqref{eq:slle} are 
  \begin{align*}
F_0&=x_{0}\lambda_1+x_{0}\lambda_2-  \su u_{0}& F_3&=x_{3}\lambda_1-4x_{3}\lambda_2- \su  u_{3}\\
F_1&=x_{1}\lambda_1+2x_{1}\lambda_2 - \su  u_{1}& F_4&=x_0+2x_1+3x_2-4x_3\\
F_2&=x_{2}\lambda_1+3x_{2}\lambda_2- \su u_{2}& F_5&=x_{0} + x_{1} +x_{2}+x_{3} - \su.
\end{align*}
By computing a Gr\"obner basis, we can verify that $\sqrt{\langle f_0, \ldots, f_{5}\rangle\cap {\mathbb Q}[u_0, u_1, u_2, u_3, p_0]}$ is generated by 
{\footnotesize
\begin{align}\label{eq:ex1h}
\h~=~10\su^2p_0^3 -(43u_0+20u_1+15u_2+8u_3)\su p_0^2 + 2u_0(29u_0+23u_1+21u_2+14u_3)p_0 -24u_0^2,
\end{align}
}
and we also verify that  $\sqrt{\langle F_0, \ldots, F_{5}\rangle\cap {\mathbb Q}[u_0, u_1, u_2, u_3, x_0]}$ is generated by 
{\footnotesize
\begin{align}\label{eq:ex1F}
\F~=~\su\left(10x_0^3 -(43u_0+20u_1+15u_2+8u_3)x_0^2 + 2u_0(29u_0+23u_1+21u_2+14u_3)x_0 -24u_0^2\su\right).
\end{align}}Both elimination ideals are principal. So the hypotheses of Theorem \ref{th:main} are satisfied. Below, we show \eqref{eq:MainTheoremStructureEF} and \eqref{eq:MainTheoremStructureEf} in Theorem \ref{th:main} hold.  
\begin{enumerate} 
\item Substituting $x_0=\su p_0$ into $\F$, we have  $\F|_{x_0=\su p_0}$ is equal to 
{\footnotesize
\begin{align}\label{eq:ex1sF}
\su^2\left(10\su^2p_0^3 -(43u_0+20u_1+15u_2+8u_3)\su p_0^2 + 2u_0(29u_0+23u_1+21u_2+14u_3) p_0 -24u_0^2\right).
\end{align}}Comparing \eqref{eq:ex1h} and \eqref{eq:ex1sF}, we see $\F|_{x_0=\su p_0} = \su^2 \h$. 
So we have \eqref{eq:MainTheoremStructureEF}, where $t=2$, $\ell=0$ and $C(\Vector{u})=1$. 
\item Following (1), $t=2$. By \eqref{eq:ex1h}, $\h$ can be written as the form \eqref{eq:MainTheoremStructureEf}:
\begin{align}\label{eq:ex1h2}
\h~=~B_3\su p_0^3 +B_2 p_0^2 + B_1\su^{-1}p_0 +B_0\su^{-2}
\end{align}
where 
\begin{equation*}
\begin{tabular}{ll}
$B_3=10\su$,&$B_2=(43u_0+20u_1+15u_2+8u_3)\su$,\\
$B_1=2u_0(29u_0+23u_1+21u_2+14u_3)\su$,&$B_0=-24u_0^2\su^{2}$.
\end{tabular}
\end{equation*}
% % % OLD CASE (iii); REDUNDANT DUE TO CHANGE OF CONDITION IN MAIN THEOREM --- COMMENTED OUT
% % % \item[(iii)] By \eqref{eq:ex1F},  $\F\in  \langle \su\rangle$. So the hypothesis for 
% % % the equality \eqref{eq:r1} is satisfied. 
% % %  Also, from this example we see $\F\in  \langle \su\rangle$  means $\su$ will be a common factor of those above $B_k$'s.
% % %  Let $\tilde{B}_k=\frac{B_k}{\su}$.
% % % Then it is seen from \eqref{eq:ex1h2} that $\h$ has the form \eqref{eq:r1}:
% % % \begin{align*}
% % % \h~=~\tilde{B}_3\su^2 p_0^3 +\tilde{B}_2\su p_0^2 + \tilde{B}_1p_0 +\tilde{B}_0\su^{-1}.
% % % \end{align*}
\end{enumerate}
Now we consider Corollary \ref{cry:main}. From \eqref{eq:ex1F}, $\F\in  \langle \su\rangle$, so we have $\delta=1$.  Also, from this example we see $\F\in  \langle \su\rangle$  means $\su$ will be a common factor of those above $B_k$'s.
Let $\tilde{B}_k=\frac{B_k}{\su}$.
Then it is seen from \eqref{eq:ex1h2} that $\h$ can be also written as:
\begin{align*}
\h~=~\tilde{B}_3\su^2 p_0^3 +\tilde{B}_2\su p_0^2 + \tilde{B}_1p_0 +\tilde{B}_0\su^{-1}.
\end{align*}
Now it is straightforward to see the equality \eqref{eq:crymain} in Corollary \ref{cry:main}. 
More than that, from $\eqref{eq:ex1h}$ and $\eqref{eq:ex1F}$, we see that 
if $k>t-\ell-\delta=1$, then $\su$ appears in $\coeff(\h, p_0^k)$, and
if $k<t-\ell-\delta=1$, then $\su$ appears in $\coeff(\F, p_0^k)$. 
Also for $k=t-\ell-\delta=1$, we see that $\coeff(\h, p_0^k)$ contains no factor $\su$. 
\end{example}

%\begin{example}
%Show one example such that Theorem 
%\ref{th:main} (iii) does not hold. 
%\textcolor{red}{Think about which example to put here. Candidates: random model 1; or, model 8}

%Consider the quadratic model $\cM$ below, for which we know the ML-degree is $4$.
%\[\cM~=~{\mathcal P}\left(-24p_0^2-12p_1p_2+25p_2^2\right)\cap \Delta_2,\]
%where 
%\[\Delta_2~=~\{(p_0, p_1, p_2)\in {\mathbb R}^3_{>0} |p_{0}+ p_{1}+p_{2}=1\}.\]

%\end{example}

\begin{example}\label{ex:t}
In this example, we explain by the linear model in Example \ref{ex:linear} on how to compute $N, t, \ell, \delta$ presented  in  Table \ref{tab:Nt} when we can not easily get $\h$ and $\F$. Consider Lagrange likelihood equations $f_0, \ldots, f_5$ and scaled equations $F_0, \ldots, F_5$ in Example \ref{ex:linear}.
For each $u_j\neq u_0$, substitute $u_j= b_j$ into $f_0, \ldots, f_5, F_0, \ldots, F_5$, where $b_j$ is a random  rational number. For instance, 
we substitute $u_1=2, u_2=12, u_3=7$ and rename the resulting polynomials as $f^*_0, \ldots, f^*_5, F^*_0, \ldots, F^*_5$.  
By computing a Gr\"obner basis, we first get a generator of $\sqrt{\langle f^*_0, \ldots, f^*_{5}\rangle\cap {\mathbb Q}[u_0, p_0]}$:
\begin{align}\label{eq:tg1}
{\footnotesize
g^*(u_0, p_0)~=~10(u_0+21)^2p_0^3-(u_0+21)(43u_0+276)p_0^2+2u_0(29u_0+396)p_0-24u_0^2.
}
\end{align}
Similarly, we compute $\sqrt{\langle F^*_0, \ldots, F^*_{5}\rangle\cap {\mathbb Q}[u_0, x_0]}$, and get
\[G^*(u_0, x_0)~=~\left(u_0+21\right)\left(10x_0^3-(43u_0+276)p_0^2+2u_0(29u_0+396)p_0-24(u_0+21)u_0^2\right).\]
\begin{enumerate}
\item By Proposition \ref{lm:sqsg} (2), we have $g^*=\h|_{u_1=2, u_2=12, u_3=7}$, and $G^*=\F|_{u_1=2, u_2=12, u_3=7}$.
By Corollary \ref{cry:ei}, ML-degree $N$ is $\deg(\h, p_0)=\deg(g^*, p_0)=3$. 
\item Notice that $\su|_{u_1=2, u_2=12, u_3=7}=u_0+21$. So
 \[\left(\F|_{x_0=\su p_0}\right)|_{u_1=2, u_2=12, u_3=7}=G^*|_{x_0=(u_0+21)p_0}.\]
 Substituting $x_0=(u_0+21)p_0$ into $G^*$, we have  $G^*|_{x_0=(u_0+21)p_0}$ is
\begin{align}\label{eq:tg2}
{\scriptsize
(u_0+21)^2\left(10(u_0+21)^2p_0^3-(u_0+21)(43u_0+276)p_0^2+2u_0(29u_0+396)p_0-24u_0^2\right).
}
\end{align}
Comparing \eqref{eq:tg1} and \eqref{eq:tg2},  we have $G^*|_{x_0=(u_0+21)p_0}=(u_0+21)^2g^*(u_0, p_0)$. So
 \[\left(\F|_{x_0=\su p_0}\right)|_{u_1=2, u_2=12, u_3=7}~=~\left(\su^2\h
\right)|_{u_1=2, u_2=12, u_3=7}.\] 
By Proposition \ref{lm:sfactor}, $\F|_{x_0=\su p_0}=\su^2\h$, and hence, the integer $t$ stated in Theorem \ref{th:main} is $2$. 
\item  Notice $u_0+21$ is not a factor  of $g^*$. By Proposition \ref{lm:sfactor}, $\su$ is not a factor of $\h$. So $\h\not \in \langle \su\rangle$, and hence, $\ell=0$.
\item  Notice $u_0+21$ is a factor  of $G^*$. By Proposition \ref{lm:sfactor}, $\su$ is a factor of $\F$. So $\F \in \langle \su\rangle$,  and hence, $\delta=1$.  
\end{enumerate}
\end{example}

%\timo{Here is a suggestion / question regarding the main theorem: Can we not reduce the second case to stating: ``If, moreover, $\F \in \langle \su \rangle$, then we have $H_k(\Vector{u}) = \su \tilde{H}_k(\Vector{u})$ for every $k$.'' ?}
%\xiaoxian{I think the answer is no. We need discuss.}

\begin{example}[Fair Coin]\label{ex:coin}
%The formula \eqref{eq:MainTheoremStructureEF} shows that $\F|_{x_0=\su p_0}\in \langle \su\rangle$. One might wonder if 
%$\F\in \langle \su\rangle$ is also true since \textcolor{red}{more important reason} the formula \eqref{eq:ex1F} in Example \ref{ex:linear} provides a positive evidence. 
In Table \ref{tab:Nt}, we have $\delta=1$ for all test models. Here, we show one example with $\delta=0$. 
Consider the model given by a fair $2$-sided coin
\[\cM~=~{\mathcal P}\left(p_0-p_1\right)\cap \Delta_1, \text{ where }
\Delta_1~=~\{(p_0, p_1)\in {\mathbb R}^2_{>0} |p_{0}+ p_{1}=1\}.\]
%\end{center}
The scaled equations \eqref{eq:slle} are 
  \begin{align*}
F_0&=x_{0}\lambda_1+\lambda_2- (u_0 + u_1)  u_{0}&F_2&=x_0-x_1\\
F_1&=x_{1}\lambda_1-\lambda_2 - (u_0 + u_1) u_{1}& F_3&=x_{0} + x_{1} - (u_0 + u_1).
\end{align*}
By computing a Gr\"obner basis, we can verify that $\sqrt{\langle F_0, \ldots, F_{3}\rangle\cap {\mathbb Q}[u_0,  u_1, x_0]}$ is generated by $2x_0-u_0-u_1$, which does not belong to the ideal $\langle \su\rangle$. 
\end{example}

\subsubsection{Proof of the Main Theorem}
In this subsection we prove Theorem \ref{th:main}. Before the proof, we prepare some notions, definitions and lemmata.

\begin{comment}
%%% OLD PART BY XIAOXIAN
For each $j=0, \ldots, n$,  let $\su=\sum_{k=0}^n u_k$ and let $x_j = \su \cdot p_j$. Substitute $p_j=\frac{x_j}{\su}$ into 
Lagrange likelihood equations (\ref{eq:lle}) and we get rational functions in $(u_0, \ldots, u_n, x_0, \ldots, x_n, \lambda_1, \ldots, \lambda_{s+1})$:
\begin{align}
\begin{array}{rl}
f_{0}|_{p_j=\frac{x_j}{\su}}=&\frac{x_0}{\su}\left(\lambda_1+\frac{1}{\su^{d_1}}\frac{\partial g_1}{\partial p_0}(x_0, \ldots, x_n)\lambda_2+\cdots+\frac{1}{\su^{d_s}}\frac{\partial g_s}{\partial p_0}(x_0, \ldots, x_n)\lambda_{s+1}\right)-u_0,\\
%\end{align*}
%\begin{align*}
&\quad\vdots\\
f_{n}|_{p_j=\frac{x_j}{\su}}=&\frac{x_n}{\su}\left(\lambda_1+\frac{1}{\su^{d_1}}\frac{\partial g_1}{\partial p_n}(x_0, \ldots, x_n)\lambda_2+\cdots+\frac{1}{\su^{d_s}}\frac{\partial g_s}{\partial
p_n}(x_0, \ldots, x_n)\lambda_{s+1}\right)-u_n,\\
%
f_{n+1}|_{p_j=\frac{x_j}{\su}}=&\frac{1}{\su^{d_1}}g_1(x_0, \ldots,x_n),\\
%\end{align*}
%\begin{align*}
&\quad\vdots\\
%\end{align*}
%\begin{align*}
f_{n+s}|_{p_j=\frac{x_j}{\su}} =&\frac{1}{\su^{d_s}}g_s(x_0,\ldots,x_n),\\
%\end{align*}
%\begin{align*}
f_{n+s+1}|_{p_j=\frac{x_j}{\su}}=&\frac{1}{\su}(x_0+\cdots+x_n)-1,
\end{array}
\end{align}
where for each $j=1, \ldots, s$, $d_j$ denotes the total degree of $g_j$.  

\end{comment}

\begin{definition}[Scaling Map]\label{def:scalingmap}
We define the \struc{\textit{scaling map} $\phi$} as:
\begin{align*}
	\struc{\phi}: {\mathbb C}^{n+1}\times {\mathbb C}^{n+1}\times {\mathbb C}^{s+1} ~\to~ {\mathbb C}^{n+1}\times {\mathbb C}^{n+1}\times {\mathbb C}^{s+1}, \quad (\Vector{u}, \Vector{p}, \Vector{\lambda}) \mapsto (\Vector{u}, {\mathcal S}(\Vector{u}) \cdot \Vector{p}, \Vector{\lambda}),
\end{align*}
where $\struc{{\mathcal S}(\Vector{u}) \cdot \Vector{p}}$ denotes the vector $({\mathcal S}(\Vector{u}) \cdot p_0, \ldots, {\mathcal S}(\Vector{u}) \cdot p_n)$. Similarly, we define the \struc{\textit{truncated scaling map} $\phi_0$} as:
\begin{align*}
	\struc{\phi_0}: {\mathbb C}^{n+1}\times {\mathbb C} ~\to~ {\mathbb C}^{n+1}\times {\mathbb C}, \quad (\Vector{u}, p_0) \mapsto (\Vector{u}, {\mathcal S}(\Vector{u}) \cdot p_0).
\end{align*}
\end{definition}

% % % OLD: DEFINITION OF PREIMAGE OF SCALING MAP.
% % % \begin{notation}
% % % For the scaling map $\phi$ and 
% % % for any subset $U\in {\mathbb C}^{n+1}\times {\mathbb C}^{n+1}\times {\mathbb C}^{s+1}$, 
% % % denote the inverse image of $U$ under $\phi$ by $\phi^{-1}(\Vector{u})$. 
% % % \end{notation}
\begin{comment}
\begin{remark} 
\label{rem:scalingmap}
Note that the scaling map $\phi$ is not injective nor surjective. In fact, given \[(\Vector{b}, \Vector{x}^*, \Vector{\lambda}^*)\in {\mathbb C}^{n+1}\times {\mathbb C}^{n+1}\times {\mathbb C}^{s+1},\]  
its inverse image $\phi^{-1}(\{(\Vector{b}, \Vector{x}^*, \Vector{\lambda}^*)\})$ has three cases
\begin{equation}\label{eq:imphi}
\begin{cases}
\{(\Vector{b}, \frac{x^*_0}{{\mathcal S}(\Vector{b})}, \ldots, \frac{x^*_n}{{\mathcal S}(\Vector{b})}, \Vector{\lambda}^*)\}, &  {\mathcal S}(\Vector{b})\neq 0\\
 \{(\Vector{b}, \Vector{p}, \Vector{\lambda}^*)|\forall \Vector{p}\in  {\mathbb C}^{n+1}\},                &  {\mathcal S}(\Vector{b}) = 0 \text{ and } x^*_0=\cdots=x^*_n=0\\
\emptyset,  &  {\mathcal S}(\Vector{b}) = 0 \text{ and } \exists x^*_i, x^*_i \neq 0
\end{cases}
\end{equation}
where $\frac{\Vector{x}^*}{{\mathcal S}(\Vector{b})}$ denotes the vector $(\frac{x^*_0}{{\mathcal S}(\Vector{b})}, \ldots, \frac{x^*_n}{{\mathcal S}(\Vector{b})})$. 
\end{remark}
\end{comment}

Below, we denote  by $\struc{\cV(\cS)}$ the affine variety generated by $\su=\sum_{k=0}^nu_k$ in $\C^{n+1}$, which is a hyperplane. In what follows we will also have to consider $\su$ as a polynomial embedded in the rings $\Q[\Vector{u}, p_0]$ and $\Q[\Vector{u},\Vector{p},\Vector{\lambda}]$. We define the hyperplanes corresponding to these embedings as
\begin{align*}
	\struc{\SmallAmbientVarietySU} = \cV(\cS) \times \C \ \text{ and } \ \struc{\BigAmbientVarietySU} = \cV(\cS) \times \C^{n+1} \times \C^{s+1}.
\end{align*}
%\timo{The previous remark should be replaced by a remark / lemma of the form:}

\begin{remark}	\label{rem:ScalingMapIsomorphism}
The scaling map $\phi$ is isomorphic on 
%	\begin{align*}
%		\left({\mathbb C}^{n+1}\times {\mathbb C}^{n+1}\times {\mathbb C}^{s+1}\right) \setminus{\BigAmbientVarietySU} ~\to~ {\mathbb C}^{n+1}\times {\mathbb C}^{n+1}\times {\mathbb C}^{s+1} \setminus \left(\cV(\cS) \times \{\Vector{0}\}\times {\mathbb C}^{s+1}\right), \; \text{where}\; \Vector{0} \in {\mathbb C}^{n+1}.
%	\end{align*}
	\begin{align*}
		\left({\mathbb C}^{n+1}\times {\mathbb C}^{n+1}\times {\mathbb C}^{s+1}\right) \setminus{\BigAmbientVarietySU} ~\to~ \left({\mathbb C}^{n+1}\times {\mathbb C}^{n+1}\times {\mathbb C}^{s+1}\right) \setminus{\BigAmbientVarietySU} 
	\end{align*}
		Analogously, the truncated scaling map is isomorphic on
%	\begin{align*}
%		\left({\mathbb C}^{n+1}\times \mathbb C\right) \setminus{\SmallAmbientVarietySU} ~\to~ \left({\mathbb C}^{n+1}\times \mathbb C\right) \setminus \left(\cV(\cS) \times \{0\}\right).
%	\end{align*}
	\begin{align*}
		\left({\mathbb C}^{n+1}\times \mathbb C\right) \setminus{\SmallAmbientVarietySU} ~\to~ \left({\mathbb C}^{n+1}\times \mathbb C\right) \setminus{\SmallAmbientVarietySU}.
	\end{align*}
\end{remark}

\begin{lemma}\label{lm:s0}
Given a system of scaled likelihood equations $\Vector{F} = \{F_0, \ldots, F_{n+s+1}\}$ \eqref{eq:slle}, 
we have 
\[\cV(\cS) \times \{0\}^{n+1}\times {\mathbb C}^{s+1}\subseteq \cV\left(\Vector{F}\right)\cap\BigAmbientVarietySU.\]
%for any $\Vector{b}\in \C^{n+1}$, if ${\mathcal S}(\Vector{b}) =0$, then for any $\Vector{\lambda}^*\in \C^{s+1}$,  we have 
%$(\Vector{b}, \Vector{0}, \Vector{\lambda}^*)\in \cW\left(\Vector{F}\right)$, where $\Vector{0} \in {\mathbb C}^{n+1}$.
\end{lemma}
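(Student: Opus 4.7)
The plan is to pick an arbitrary point $(\Vector{u}, \Vector{0}, \Vector{\lambda})$ with $\Vector{u} \in \cV(\cS)$ and $\Vector{\lambda} \in \C^{s+1}$, then verify by direct substitution that every equation in the system $\Vector{F}$ of \eqref{eq:slle} vanishes at this point. Membership in $\BigAmbientVarietySU = \cV(\cS)\times \C^{n+1}\times\C^{s+1}$ is automatic from the construction, so only the vanishing check requires work.

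First I would handle the Lagrange multiplier equations $F_0, \ldots, F_n$. Unpacking \eqref{eq:slle}, each such $F_i$ is a sum of terms of the form $\su^{d-1} x_i \lambda_1$ and $\su^{d-\deg(g_j)} x_i \tfrac{\partial g_j}{\partial p_i}(\Vector{x}) \lambda_{j+1}$, together with the inhomogeneous term $-\su^{d} u_i$. Every term in the first group carries a factor $x_i$ and thus vanishes at $\Vector{x} = \Vector{0}$. For the inhomogeneous term, I would observe that since each $g_j$ is homogeneous with $\cV(g_1,\ldots,g_s) \subsetneq \C^{n+1}$ (Definition \ref{definition:StatisticalModel}), none of the $g_j$ is constant, whence $d = \max_j \deg(g_j) \geq 1$. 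Consequently $\su^{d}(\Vector{u}) = 0$ whenever $\su(\Vector{u}) = 0$, and the term vanishes.

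Next I would verify the model-invariant equations $F_{n+1} = g_1(\Vector{x}), \ldots, F_{n+s} = g_s(\Vector{x})$. Since each $g_j$ is homogeneous of positive degree, one has $g_j(\Vector{0}) = 0$ directly. Finally, for $F_{n+s+1} = x_0 + \cdots + x_n - \su$, both summands vanish: the first because $\Vector{x} = \Vector{0}$, and the second because $\Vector{u} \in \cV(\cS)$.

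Assembling these three checks shows $(\Vector{u}, \Vector{0}, \Vector{\lambda}) \in \cV(\Vector{F}) \cap \BigAmbientVarietySU$, and since the point was arbitrary this yields the claimed inclusion. There is no real obstacle: the statement is essentially a substitution calculation, and the only subtle ingredient is invoking the homogeneity and non-triviality of the model invariants in order to guarantee $d \geq 1$ and $g_j(\Vector{0}) = 0$.
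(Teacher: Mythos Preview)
Your proof is correct and follows essentially the same approach as the paper's: both note that membership in $\BigAmbientVarietySU$ is automatic and then verify by direct substitution that each of the three groups of scaled equations vanishes, using homogeneity of the $g_j$ and the condition $\su(\Vector{u})=0$. Your argument is in fact slightly more careful in explicitly justifying $d\geq 1$ (so that the term $\su^{d}u_i$ genuinely vanishes), a point the paper leaves implicit.
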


\begin{proof}
Clearly, $\cV(\cS) \times \{0\}^{n+1}\times {\mathbb C}^{s+1}\subseteq \BigAmbientVarietySU$. 
We only need to prove 
for any $\Vector{u}\in \cV(\cS)$ and for any $\Vector{\lambda}\in \C^{s+1}$,  
$(\Vector{u}, \Vector{0}, \Vector{\lambda})\in \cW\left(\Vector{F}\right)$, where $\Vector{0} \in {\mathbb C}^{n+1}$.
In fact, by \eqref{eq:slle}, since $\Vector{\lambda}$ does not appear in $F_{n+s+1}$ and ${\mathcal S}(\Vector{u})=0$, we have $F_{n+s+1}$ vanishes at $(\Vector{u}, \Vector{0}, \Vector{\lambda})$. 
By  \eqref{eq:slle} and Definition \ref{definition:StatisticalModel}, $g_j(\Vector{x})$ $(j=1, \ldots, s)$ in $F_{n+1}, \ldots, F_{n+s}$ are homogenous. So 
$F_{n+1}, \ldots, F_{n+s}$ vanish at any point with all zero-$x_i$ coordinates. Also, because ${\mathcal S}(\Vector{u})=0$, and because the partial derivatives $\frac{g_j}{p_k}(\Vector{x})$
in $F_0, \ldots, F_{n}$ are still homogenous, so $F_0, \ldots, F_n$ vanish at $(\Vector{u}, \Vector{0}, \Vector{\lambda})$.
\end{proof}

\begin{comment}

\begin{proof}
First, by Definition \ref{},  $\phi(\BigAmbientVarietySU)\subseteq \cV(\cS) \times \{0\}^{n+1}\times {\mathbb C}^{s+1}$, and hence,
$\cW\left(\Vector{F}\right) \cap \phi(\BigAmbientVarietySU)\subseteq\cV(\cS) \times \{0\}^{n+1}\times {\mathbb C}^{s+1}$.

Second, in order to prove $\cV(\cS) \times \{0\}^{n+1}\times {\mathbb C}^{s+1}\subseteq \cW\left(\Vector{F}\right) \cap \phi(\BigAmbientVarietySU)$, we only need to show 
for any $\Vector{b}\in \cV(\cS)$,  for any $\Vector{\lambda}^*\in \C^{s+1}$,  
$(\Vector{b}, \Vector{0}, \Vector{\lambda}^*)\in \cW\left(\Vector{F}\right)$, where $\Vector{0} \in {\mathbb C}^{n+1}$.
In fact, by \eqref{eq:slle}, since $\Vector{\lambda}$ does not appear in $F_{n+s+1}$ and ${\mathcal S}(\Vector{b})=0$, we have $F_{n+s+1}$ vanishes at $(\Vector{b}, \Vector{0}, \Vector{\lambda}^*)$. 
By Definition \ref{definition:StatisticalModel}, $g_j(\Vector{x})$ $(j=1, \ldots, s)$ in $F_{n+1}, \ldots, F_{n+s}$ are homogenous. So 
$F_{n+1}, \ldots, F_{n+s}$ vanish at any point with all zero-$x_i$ coordinates. Also, because ${\mathcal S}(\Vector{b})=0$, and because the partial derivatives $\frac{g_j}{p_k}(\Vector{x})$
in $F_0, \ldots, F_{n}$ are still homogenous, so $F_0, \ldots, F_n$ vanish at $(\Vector{b}, \Vector{0}, \Vector{\lambda}^*)$.
\end{proof}

\end{comment}

\begin{lemma}\label{lm:inverse}
Given a system of scaled likelihood equations $\Vector{F} $ \eqref{eq:slle}, 
\begin{align}\label{eq:inverse}
\phi^{-1}\left(\cV\left(\Vector{F}\right)\cap\BigAmbientVarietySU\right) \;=\; \BigAmbientVarietySU, \;\; \text{and}\; \;
\phi_0^{-1}\left(\proj_{n+2}\left(\cV\left(\Vector{F}\right)\cap\SmallAmbientVarietySU\right)\right) \;=\; \SmallAmbientVarietySU.
\end{align}
\end{lemma}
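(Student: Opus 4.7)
The plan is to prove both identities by a direct two-sided inclusion argument that exploits two structural features of the maps $\phi$ and $\phi_0$: they act as the identity on the $\Vector{u}$ coordinates, and on the ``bad locus'' $\cV(\cS)$ the scaling factor $\cS(\Vector{u})$ collapses all $\Vector{p}$ (resp.\ $p_0$) coordinates to zero. Lemma \ref{lm:s0} is then exactly the input needed to land the resulting all-zero image back inside $\cV(\Vector{F})$.

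For the first identity, $\phi^{-1}(\cV(\Vector{F})\cap\BigAmbientVarietySU)=\BigAmbientVarietySU$, the inclusion ``$\subseteq$'' is immediate: if $\phi(\Vector{u},\Vector{p},\Vector{\lambda})\in\BigAmbientVarietySU$, then since $\phi$ leaves $\Vector{u}$ untouched we have $\Vector{u}\in\cV(\cS)$, so the preimage already lies in $\BigAmbientVarietySU$. For ``$\supseteq$'', any $(\Vector{u},\Vector{p},\Vector{\lambda})\in\BigAmbientVarietySU$ satisfies $\cS(\Vector{u})=0$, whence $\phi(\Vector{u},\Vector{p},\Vector{\lambda})=(\Vector{u},\Vector{0},\Vector{\lambda})$, and Lemma \ref{lm:s0} places this point inside $\cV(\Vector{F})\cap\BigAmbientVarietySU$.

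The second identity is handled by the same template, modulo parsing the notation $\cV(\Vector{F})\cap\SmallAmbientVarietySU$ as ``those points of $\cV(\Vector{F})$ whose $(\Vector{u},x_0)$-coordinates land in $\SmallAmbientVarietySU$,'' equivalently $\proj_{n+2}(\cV(\Vector{F})\cap\BigAmbientVarietySU)$, since the two ambient spaces otherwise do not match. The forward inclusion again uses only that $\phi_0$ fixes $\Vector{u}$, forcing any preimage to satisfy $\cS(\Vector{u})=0$. The reverse inclusion requires that, for each $\Vector{u}\in\cV(\cS)$, the point $(\Vector{u},0)$ appears in $\proj_{n+2}(\cV(\Vector{F})\cap\BigAmbientVarietySU)$; Lemma \ref{lm:s0} provides a witness by choosing an arbitrary $\Vector{\lambda}\in\C^{s+1}$ and projecting $(\Vector{u},\Vector{0},\Vector{\lambda})$.

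I do not foresee a substantive obstacle: both equalities are essentially bookkeeping on top of Lemma \ref{lm:s0} together with the fact that $\phi$ and $\phi_0$ are the identity on the parameter block. The only delicate step is the notational clarification in the second identity, which I would address up front in the proof so that the inclusions become entirely routine.
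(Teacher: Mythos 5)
Your proposal is correct and follows essentially the same route as the paper: the forward inclusion from $\phi$ (resp.\ $\phi_0$) fixing the $\Vector{u}$-block, and the reverse inclusion from Lemma \ref{lm:s0} placing $\cV(\cS)\times\{0\}^{n+1}\times\C^{s+1}$ inside $\cV(\Vector{F})\cap\BigAmbientVarietySU$. The paper only writes out the first identity and declares the second ``similar,'' so your explicit reading of $\cV(\Vector{F})\cap\SmallAmbientVarietySU$ as $\proj_{n+2}(\cV(\Vector{F})\cap\BigAmbientVarietySU)$ is a welcome clarification of an abuse of notation the paper leaves implicit.
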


\begin{proof}
We only prove the first equality since the argument for the second one is similar. By Definition \ref{def:scalingmap}, the map $\phi$ does not change the $\Vector{u}$-coordinate, so
$\phi^{-1}(\BigAmbientVarietySU) \subseteq \BigAmbientVarietySU$, and hence, 
$\phi^{-1}\left(\cV\left(\Vector{F}\right)\cap\BigAmbientVarietySU\right) \subseteq \BigAmbientVarietySU$.
On the other hand, by 
Lemma \ref{lm:s0},  $\cV(\cS) \times \{0\}^{n+1}\times {\mathbb C}^{s+1}\subseteq \cV\left(\Vector{F}\right)\cap\BigAmbientVarietySU$. 
So $\phi(\BigAmbientVarietySU)=\cV(\cS) \times \{0\}^{n+1}\times {\mathbb C}^{s+1} \subseteq \cV\left(\Vector{F}\right)\cap\BigAmbientVarietySU$, and hence, $\BigAmbientVarietySU \subseteq \phi^{-1}\left(\cV\left(\Vector{F}\right)\cap\BigAmbientVarietySU\right)$.
\end{proof}

\begin{lemma}\label{lm:VFf}

Given a system of Lagrange likelihood equations $\Vector{f} $ \eqref{eq:lle} and its corresponding system of scaled equations
$\Vector{F}$ \eqref{eq:slle}, we have 
\begin{align}\label{eq:VFf}
\phi^{-1}\left(\cW\left(\Vector{F}\right)\right)~=~\cV\left(\Vector{f}\right) \cup \BigAmbientVarietySU.
\end{align}
In particular, $\cW\left(\Vector{F}\right)$ and $\cV(\Vector{f})$ are birational.
\end{lemma}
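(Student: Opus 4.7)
My approach relies on establishing a polynomial identity that expresses each $F_i\circ\phi$ as a power of $\su$ times $f_i$, and then using Lemma~\ref{lm:inverse} together with a case split on whether $\su$ vanishes at the chosen point. First I would observe that by construction $\Vector{F}$ is the vector of numerators of $\Vector{f}|_{p_j = x_j/\su}$, and that the $g_j$ together with their first partial derivatives are homogeneous of degrees $\deg(g_j)$ and $\deg(g_j)-1$, respectively. A direct Euler-homogeneity calculation then produces, for each $i\in\{0,\ldots,n+s+1\}$, a non-negative integer $d_i$ and a polynomial identity
\begin{align*}
    F_i(\Vector{u},\,\su\cdot \Vector{p},\,\Vector{\lambda}) \;=\; \su^{\,d_i}\, f_i(\Vector{u},\Vector{p},\Vector{\lambda})
\end{align*}
in $\Q[\Vector{u},\Vector{p},\Vector{\lambda}]$; equivalently, $F_i\circ\phi = \su^{d_i}\,f_i$ as polynomials.

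The set equality \eqref{eq:VFf} then splits into two inclusions. For $\supseteq$, the identity shows $\phi$ maps $\cV(\Vector{f})$ into $\cV(\Vector{F})$, so that $\cV(\Vector{f})\subseteq \phi^{-1}(\cV(\Vector{F}))$, while $\BigAmbientVarietySU\subseteq \phi^{-1}(\cV(\Vector{F}))$ is precisely the first equation of Lemma~\ref{lm:inverse}. For $\subseteq$, let $(\Vector{u},\Vector{p},\Vector{\lambda})\in \phi^{-1}(\cV(\Vector{F}))$. If $\su(\Vector{u})=0$, then the point lies in $\BigAmbientVarietySU$ by definition; otherwise the identity forces $\su(\Vector{u})^{d_i}\,f_i(\Vector{u},\Vector{p},\Vector{\lambda})=0$ for every $i$, and since $\su(\Vector{u})\neq 0$ we obtain $(\Vector{u},\Vector{p},\Vector{\lambda})\in \cV(\Vector{f})$.

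Birationality follows cleanly from what is already in place: Remark~\ref{rem:ScalingMapIsomorphism} makes $\phi$ an isomorphism from $\bigl(\C^{n+1}\times\C^{n+1}\times\C^{s+1}\bigr)\setminus \BigAmbientVarietySU$ to itself. Intersecting both sides of \eqref{eq:VFf} with this complement yields $\phi^{-1}(\cV(\Vector{F}))\setminus\BigAmbientVarietySU = \cV(\Vector{f})\setminus\BigAmbientVarietySU$, and because $\phi$ is a bijection of that complement onto itself preserving the $\Vector{u}$-coordinate, applying $\phi$ gives $\phi(\cV(\Vector{f})\setminus\BigAmbientVarietySU) = \cV(\Vector{F})\setminus\BigAmbientVarietySU$. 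Hence $\phi$ restricts to an isomorphism between non-empty Zariski-open (and, under the general zero-dimensional hypothesis of later sections, dense) subsets of $\cV(\Vector{f})$ and $\cV(\Vector{F})$, which establishes the birational equivalence. The step requiring the most care is the initial identity: one must correctly track the scaling exponents by combining the common $\su^{d}$ normalization in \eqref{eq:slle} with the homogeneity relations $g_j(\su\,\Vector{p}) = \su^{\deg(g_j)}g_j(\Vector{p})$ and $\tfrac{\partial g_j}{\partial p_i}(\su\,\Vector{p}) = \su^{\deg(g_j)-1}\tfrac{\partial g_j}{\partial p_i}(\Vector{p})$; once that identity is in hand, the remaining set-theoretic and geometric arguments are immediate.
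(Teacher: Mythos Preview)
Your proof is correct and follows essentially the same approach as the paper. The paper's argument is terser---it appeals directly to Definitions~\ref{definition:LikelihoodEquations} and~\ref{def:ScaledLikelihoodEquations} together with Remark~\ref{rem:ScalingMapIsomorphism} to assert $\phi^{-1}(\cV(\Vector{F})\setminus\BigAmbientVarietySU)=\cV(\Vector{f})\setminus\BigAmbientVarietySU$, then invokes Lemma~\ref{lm:inverse} for the hyperplane part---whereas you make the underlying polynomial identity $F_i\circ\phi=\su^{d_i}f_i$ explicit via homogeneity of the $g_j$ before running the same case split; the ingredients and logical structure are the same.
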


\begin{proof}
It follows by the Definitions \ref{definition:LikelihoodEquations} and \ref{def:ScaledLikelihoodEquations}, and Remark \ref{rem:ScalingMapIsomorphism} that
\begin{align}
	\phi^{-1}\left(\cW\left(\Vector{F} \right) \setminus \BigAmbientVarietySU\right) \ = \ \cV(\Vector{f} ) \setminus \BigAmbientVarietySU, \;\text{and} \;
	\phi\left(\cV(\Vector{f} ) \setminus \BigAmbientVarietySU\right) \ = \ \cW\left(\Vector{F} \right) \setminus \BigAmbientVarietySU.
	\label{eq:BirationalityProofIsomorphism}
\end{align}
The birationality follows with \eqref{eq:BirationalityProofIsomorphism}. Moreover, by Lemma \ref{lm:inverse}, we have $\phi^{-1}\left(\cV\left(\Vector{F}\right)\cap\BigAmbientVarietySU\right) = \BigAmbientVarietySU$, and so \eqref{eq:VFf} is proved. 
\end{proof}

\begin{lemma}\label{cry:piphiF}
Given a system of Lagrange likelihood equations $\Vector{f}$ \eqref{eq:lle} and its corresponding system of scaled equations
$\Vector{F}$ \eqref{eq:slle},
we have 
\begin{align}\label{eq:piphiF}
\proj_{n+2}\left(\cV\left(\Vector{f}\right) \cup \BigAmbientVarietySU\right)~=~\phi_0^{-1}\left(\proj_{n+2}\left(\cW\left(\Vector{F}\right)\right)\right)
\end{align}
\end{lemma}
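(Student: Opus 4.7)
The plan is to deduce the stated equality by combining Lemma \ref{lm:VFf} with a commutation relation between $\phi$ and $\proj_{n+2}$. First I would rewrite the left-hand side using Lemma \ref{lm:VFf}, which gives
\[
\proj_{n+2}\!\left(\cV(\Vector{f})\cup\BigAmbientVarietySU\right)
~=~\proj_{n+2}\!\left(\phi^{-1}\!\left(\cW(\Vector{F})\right)\right),
\]
so the statement reduces to proving the purely set-theoretic identity
\[
\proj_{n+2}\!\left(\phi^{-1}(\cW(\Vector{F}))\right)
~=~\phi_0^{-1}\!\left(\proj_{n+2}(\cW(\Vector{F}))\right).
\]

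The key observation is the commutation $\proj_{n+2}\circ \phi ~=~ \phi_0 \circ \proj_{n+2}$: indeed, by Definition \ref{def:scalingmap}, $\phi(\Vector{u},\Vector{p},\Vector{\lambda})=(\Vector{u},\su\cdot\Vector{p},\Vector{\lambda})$, whose projection onto the $(\Vector{u},p_0)$-coordinate is $(\Vector{u},\su\cdot p_0)=\phi_0(\Vector{u},p_0)$. The inclusion ``$\subseteq$'' then follows immediately: if $(\Vector{u},p_0)=\proj_{n+2}(\Vector{u},\Vector{p},\Vector{\lambda})$ for some $(\Vector{u},\Vector{p},\Vector{\lambda})\in\phi^{-1}(\cW(\Vector{F}))$, then $\phi_0(\Vector{u},p_0)=\proj_{n+2}(\phi(\Vector{u},\Vector{p},\Vector{\lambda}))\in\proj_{n+2}(\cW(\Vector{F}))$, so $(\Vector{u},p_0)\in\phi_0^{-1}(\proj_{n+2}(\cW(\Vector{F})))$.

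The reverse inclusion ``$\supseteq$'' is the main obstacle, because $\phi$ fails to be invertible on the hyperplane $\BigAmbientVarietySU$. I would handle it by case analysis on $\su(\Vector{u})$. Fix $(\Vector{u},p_0)\in\phi_0^{-1}(\proj_{n+2}(\cW(\Vector{F})))$, so there exist $\Vector{x}\in\C^{n+1}$ and $\Vector{\lambda}\in\C^{s+1}$ with $(\Vector{u},\Vector{x},\Vector{\lambda})\in\cW(\Vector{F})$ and $x_0=\su(\Vector{u})\cdot p_0$. If $\su(\Vector{u})\neq 0$, then by Remark \ref{rem:ScalingMapIsomorphism} the point $(\Vector{u},\Vector{x}/\su(\Vector{u}),\Vector{\lambda})$ lies in $\phi^{-1}(\cW(\Vector{F}))$ and its $\proj_{n+2}$ image is exactly $(\Vector{u},p_0)$.

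If instead $\su(\Vector{u})=0$, then Lemma \ref{lm:s0} guarantees that $(\Vector{u},\Vector{0},\Vector{\lambda})\in\cW(\Vector{F})\cap\BigAmbientVarietySU$ for every $\Vector{\lambda}$. Pick any $\Vector{p}\in\C^{n+1}$ whose first coordinate is $p_0$; then $\phi(\Vector{u},\Vector{p},\Vector{\lambda})=(\Vector{u},\Vector{0},\Vector{\lambda})\in\cW(\Vector{F})$, so $(\Vector{u},\Vector{p},\Vector{\lambda})\in\phi^{-1}(\cW(\Vector{F}))$ and again projects to $(\Vector{u},p_0)$. This establishes ``$\supseteq$'' and hence \eqref{eq:piphiF}. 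The only subtlety worth emphasizing is the second case, where we must use the specific structure of $\cW(\Vector{F})$ supplied by Lemma \ref{lm:s0} to compensate for the non-injectivity of $\phi$ over $\cV(\cS)$.
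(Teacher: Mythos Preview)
Your proof is correct and follows essentially the same route as the paper: both reduce via Lemma \ref{lm:VFf} to the identity $\proj_{n+2}\circ\phi^{-1}=\phi_0^{-1}\circ\proj_{n+2}$ on $\cW(\Vector{F})$, use the commutation $\proj_{n+2}\circ\phi=\phi_0\circ\proj_{n+2}$ off the hyperplane $\cV(\cS)$, and handle the degenerate case $\su(\Vector{u})=0$ separately. The only cosmetic difference is that the paper packages the hyperplane case through Lemma \ref{lm:inverse}, whereas you invoke Lemma \ref{lm:s0} directly and chase elements; the content is the same.
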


\begin{proof}
By Lemma \ref{lm:VFf}, it is sufficient to show 
\begin{align}\label{eq:lmpiphiF}
\proj_{n+2}\left(\phi^{-1}\left(\cW\left(\Vector{F}\right)\right)\right)~=~\phi_0^{-1}\left(\proj_{n+2}\left(\cW\left(\Vector{F}\right)\right)\right).
\end{align}
Since $\phi$ and $\phi_0$ are isomorphic everywhere except on $\BigAmbientVarietySU$ and $\SmallAmbientVarietySU$ respectively, 
the following diagram commutes when $(\Vector{u}, \Vector{p}, \Vector{\lambda})\not\in \BigAmbientVarietySU$:
\begin{center}	
\begin{tikzpicture}[node distance=2.5cm]
  \node (C) {$(\Vector{u}, \Vector{p}, \Vector{\lambda})$};
  \node (Bi) [right of=C] {$(\Vector{u}, \su\Vector{p}, \Vector{\lambda})$};
  \node (P) [below of=C] {$(\Vector{u}, p_0)$};
  \node (Ai) [right of=P] {$(\Vector{u}, \su p_0)$};
  \draw[->] (C) to node [above]{~~$\phi$~~} (Bi);
  \draw[->] (C) to node [left] {$\proj_{n+2}$~~~} (P);
  \draw[->] (P) to node [above] {~~$\phi_0$~~} (Ai);
   \draw[->] (Bi) to node [right] {$\proj_{n+2}$~~~} (Ai);
\end{tikzpicture}	
\end{center}
So, it is sufficient to show the equality \eqref{eq:lmpiphiF} for 
$\cW\left(\Vector{F}\right) \cap \BigAmbientVarietySU$. By Lemma \ref{lm:inverse}, we have
\begin{align*}
	\proj_{n+2}\left(\phi^{-1}\left(\cW\left(\Vector{F}\right) \cap \BigAmbientVarietySU \right) \right)\ = \
	\proj_{n+2}\left(\BigAmbientVarietySU\right) \ = \ 
	\SmallAmbientVarietySU \ = \ \phi_0^{-1}\left(\proj_{n+2}\left(\cV\left(\Vector{F}\right)\cap\SmallAmbientVarietySU\right)\right)
	\end{align*}
\end{proof}

\begin{lemma}\label{lm:dimF}
If a system of Lagrange likelihood equations $\Vector{f}$ \eqref{eq:lle} is general zero-dimensional with ML-degree $N$, then 
%\textcolor{red}{need to be completed by adding a definition for $\F$ and say $N$ is ML-degree of $\h$.}
the corresponding system of scaled equations 
$\Vector{F}$ \eqref{eq:slle} is also  general zero-dimensional, and 
$N(\Vector{F})=N$.
%for a generic $\Vector{b}\in {\mathbb C}^{n+1}$, 
%the equations $F_0|_{u=\Vector{b}}=\ldots=F_{n+s+1}|_{u=\Vector{b}}=0$ have $N$ complex solutions. 
\end{lemma}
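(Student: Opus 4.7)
The plan is to exploit the birational correspondence between $\cV(\Vector{F})$ and $\cV(\Vector{f})$ established in Lemma \ref{lm:VFf}, and to track fiber cardinalities over generic $\Vector{u}$. The main point is that away from the hyperplane $\cV(\cS)$, the scaling map $\phi$ restricts on each $\Vector{u}$-fiber to a bijection (multiplication by $\su(\Vector{u}) \neq 0$ on the $\Vector{p}$-block, identity on $\Vector{\lambda}$), so the solution count for $\Vector{F}(\Vector{u})$ matches that for $\Vector{f}(\Vector{u})$, and distinctness of points and of first coordinates is automatically preserved.

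First I would invoke the hypothesis that $\Vector{f}$ is general zero-dimensional together with Lemma \ref{lm:Nf} to obtain an affine variety $V_1 \subsetneq \C^{n+1}$ such that for every $\Vector{u} \in \C^{n+1} \setminus V_1$, the system $\Vector{f}(\Vector{u})$ has exactly $N$ complex solutions, all pairwise distinct and with pairwise distinct first coordinates. Then I would set $V_2 = V_1 \cup \cV(\cS)$, which remains a proper affine subvariety of $\C^{n+1}$ since $\cV(\cS)$ is a hyperplane. For any $\Vector{u} \in \C^{n+1} \setminus V_2$ we now have both $\Vector{u} \notin V_1$ and $\su(\Vector{u}) \neq 0$.

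Next I would apply Lemma \ref{lm:VFf}, which yields $\phi^{-1}(\cV(\Vector{F})) = \cV(\Vector{f}) \cup \BigAmbientVarietySU$. Fix $\Vector{u} \in \C^{n+1} \setminus V_2$. Because $\Vector{u} \notin \cV(\cS)$, the $\Vector{u}$-fiber of $\BigAmbientVarietySU$ is empty, so the $\Vector{u}$-fiber of $\phi^{-1}(\cV(\Vector{F}))$ coincides with the $\Vector{u}$-fiber of $\cV(\Vector{f})$, which has cardinality $N$ with the two distinctness properties. Restricted to this fiber, $\phi$ sends $(\Vector{p}^*, \Vector{\lambda}^*)$ to $(\su(\Vector{u})\Vector{p}^*, \Vector{\lambda}^*)$; since $\su(\Vector{u}) \neq 0$ this is a bijection onto the $\Vector{u}$-fiber of $\cV(\Vector{F})$, preserving both distinctness of points and distinctness of first coordinates (since $x_0^* = \su(\Vector{u}) p_0^*$).

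Putting these pieces together shows that for any $\Vector{u} \in \C^{n+1} \setminus V_2$ the fiber of $\cV(\Vector{F})$ over $\Vector{u}$ consists of exactly $N$ distinct points with pairwise distinct first coordinates, verifying all three conditions of Definition \ref{def:gzd} for $\Vector{F}$ and giving $N(\Vector{F}) = N$. I do not anticipate a real obstacle: the structural work was done in Lemma \ref{lm:VFf}, and what remains is essentially a fiberwise bookkeeping argument that uses only the invertibility of scaling by $\su(\Vector{u})$ off the hyperplane $\cV(\cS)$.
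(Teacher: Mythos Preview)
Your proposal is correct and follows essentially the same approach as the paper: both arguments invoke Lemma \ref{lm:VFf} to identify $\phi^{-1}(\cV(\Vector{F}))$ with $\cV(\Vector{f}) \cup \BigAmbientVarietySU$, then use that $\phi$ restricts to a bijection on each $\Vector{u}$-fiber with $\su(\Vector{u})\neq 0$ to transfer the three conditions of Definition \ref{def:gzd} from $\Vector{f}$ to $\Vector{F}$. Your version is in fact a bit more explicit than the paper's in naming the exceptional locus $V_2 = V_1 \cup \cV(\cS)$ and in checking each condition of Definition \ref{def:gzd} separately.
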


\begin{proof}
We have to prove the three properties of a general zero-dimensional system in Definition \ref{def:gzd}.
We consider the birational map $\phi$. By Lemma \ref{lm:VFf} we have 
\begin{align*}
	\phi^{-1}\left(\cW\left(\Vector{F}\right)\right)~=~\cV\left(\Vector{f}\right) \cup \BigAmbientVarietySU.  
\end{align*}
Hence, for any  $\Vector{b}\in \C^{n+1}\backslash \BigAmbientVarietySU$, we have
\begin{align*}
	\phi^{-1}\left(\{\Vector{b}\}\times \cW\left(\Vector{F}(\Vector{b})\right)\right)~=~\{\Vector{b}\}\times \cV\left(\Vector{f}(\Vector{b})\right).
\end{align*}
By Remark \ref{rem:ScalingMapIsomorphism}, $\phi$ is an isomorphism on the restricted set $\C^{2n+s+3}\backslash \BigAmbientVarietySU$.
Thus, in particular
\begin{align*}
	\# \cW\left(\Vector{F}(\Vector{b})\right) \ = \ \# \cV\left(\Vector{f}(\Vector{b})\right) \ = \ N.
\end{align*}
So $\Vector{F}$ satisfies condition (1) of Definition \ref{def:gzd}. 
Since $\Vector{f}$ is general zero-dimensional, the first entries of every pair of distinct points in $\cV\left(\Vector{f}(\Vector{b})\right) $ are distinct, i.e., we have $\# \proj_1\left(\cV\left(\Vector{f}(\Vector{b})\right)\right) = N$. 
That means $\# \proj_{n+2}\left(\{\Vector{b}\}\times \cV\left(\Vector{f}(\Vector{b})\right)\right) = N$.
Since $\phi$ restricted to $\Vector{u}$ is the identical map, we conclude
\begin{align*}
	\# \proj_{n+2}(\{\Vector{b}\}\times\cV(\Vector{F}(\Vector{b})) \ = \ \# \proj_{n+2}(\phi(\{\Vector{b}\}\times\cV(\Vector{f}(\Vector{b})))) \ = \ N.
\end{align*}
This implies moreover that all complex solutions of $\cV(\Vector{F}(\Vector{b}))$ are distinct with distinct first entries since all complex solutions of $\cV(\Vector{f}(\Vector{b}))$ are distinct with distinct first entries, which implies that $\Vector{F}$ satisfies conditions (2) and (3) of Definition \ref{def:gzd}.

\end{proof}

\begin{corollary}\label{cry:eiF}
Given an algebraic statistical model $\cM$, assume its  Lagrange likelihood equation system $\Vector{f}$ \eqref{eq:lle} is general zero-dimensional with ML-degree $N$.  Let  the scaled equation system  be $\Vector{F}$ \eqref{eq:slle}. 
If the elimination ideal $\langle \Vector{F}\rangle\cap {\mathbb Q}[\Vector{u}, x_0]$ is principal, then the radical of this elimination ideal is generated by a polynomial in the form 
\begin{align}\label{eq:hF}
%h(u_0, \ldots, u_n, p_0)~=~
H_{N}(\Vector{u})~x_0^N +H_{N-1}(\Vector{u})~x_0^{N-1} + \ldots +H_{1}(\Vector{u})~x_0 + H_{0}(\Vector{u}),
\end{align}
where  $H_i(\Vector{u})\in  {\mathbb Q}[\Vector{u}]$ and $H_N(\Vector{u})\neq 0$. 
\end{corollary}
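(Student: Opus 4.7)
My plan is to reduce this corollary directly to the general-zero-dimensional machinery already established, namely Proposition \ref{cry:generalei}, by verifying its hypotheses for the scaled system $\Vector{F}$.

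First, I would observe that Lemma \ref{lm:dimF} already does the heavy lifting: it tells us that whenever $\Vector{f}$ is general zero-dimensional with ML-degree $N$, the scaled system $\Vector{F}$ is also general zero-dimensional, and moreover $N(\Vector{F}) = N$. Thus $\Vector{F}$ meets the structural hypothesis of Proposition \ref{cry:generalei}, with the roles of parameters and variables played by $\Vector{u}$ and $(\Vector{x},\Vector{\lambda})$ respectively, and with $x_0$ playing the role of the distinguished first variable $y_1$.

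Next, the corollary assumes that $\langle \Vector{F}\rangle\cap {\mathbb Q}[\Vector{u}, x_0]$ is principal, which is exactly the remaining hypothesis required by Proposition \ref{cry:generalei}. Applying that proposition verbatim to $\Vector{F}$ with $(a_1,\ldots,a_k) = \Vector{u}$ and $(y_1,y_2,\ldots,y_m) = (x_0,x_1,\ldots,x_n,\lambda_1,\ldots,\lambda_{s+1})$ yields a generator $H \in \Q[\Vector{u},x_0]$ of the radical $\sqrt{\langle \Vector{F}\rangle\cap {\mathbb Q}[\Vector{u}, x_0]}$ whose degree in $x_0$ equals $N(\Vector{F}) = N$. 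Writing $H$ as a polynomial in $x_0$ with coefficients in $\Q[\Vector{u}]$ then gives exactly the form displayed in \eqref{eq:hF}, with leading coefficient $H_N(\Vector{u}) \neq 0$ because $\deg(H,x_0) = N$.

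There is no real obstacle here, since Lemma \ref{lm:dimF} was the substantive work and Proposition \ref{cry:generalei} provides the algebraic content. The only thing to be slightly careful about is the re-labelling of variables so that $x_0$ occupies the position of the distinguished first variable in the lexicographic order used in Proposition \ref{cry:generalei}; but since any admissible block ordering with $\Vector{u}$ as the parameter block and $x_0$ placed before the remaining variables $x_1,\ldots,x_n,\lambda_1,\ldots,\lambda_{s+1}$ suffices, this poses no difficulty.
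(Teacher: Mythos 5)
Your proposal is correct and follows exactly the paper's own route: the paper's proof of this corollary is the one-line observation that the conclusion follows from Proposition \ref{cry:generalei} together with Lemma \ref{lm:dimF}, which is precisely the reduction you carry out (with the same identification of $\Vector{u}$ as the parameter block and $x_0$ as the distinguished first variable). Your additional care about the variable re-labelling is harmless and consistent with how Definition \ref{def:gzd} and Lemma \ref{lm:dimF} treat the first coordinate.
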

%\timo{Here should be a proof.}
\begin{proof}
The conclusion follows from Proposition \ref{cry:generalei} and Lemma \ref{lm:dimF}. 
\end{proof}

\begin{lemma}\label{lm:key}
Given a system of Lagrange likelihood equations $\Vector{f}$ \eqref{eq:lle} and the system of  scaled likelihood equations $\Vector{F}$ \eqref{eq:slle}, if
%\textcolor{red}{Let all further notation be as defined before.} 
both ideals $\langle \Vector{f}\rangle\cap {\mathbb Q}[\Vector{u}, p_0]$ and $\langle \Vector{F}\rangle\cap {\mathbb Q}[\Vector{u}, x_0]$ are principal, then we have 
\begin{align}\label{eq:key}
\pV\left(\su\cdot\h\right)~\subset~\pV\left(\F|_{x_0=\su \cdot p_0}\right)
\end{align}
\end{lemma}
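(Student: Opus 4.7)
The plan is to chain Propositions \ref{pro:clos}, \ref{lm:sqsg}, and Lemma \ref{cry:piphiF} after rewriting $\pV(\su\cdot\h)$ as a projection and $\pV(\F|_{x_0=\su p_0})$ as an $\phi_0$-preimage. The reason one should expect the inclusion to be true is the birational picture in Lemma \ref{lm:VFf}: modulo the hyperplane $\cV(\cS)$, the varieties $\cV(\Vector{f})$ and $\cV(\Vector{F})$ are identified via $\phi$, and the discrepancy on the hyperplane contributes precisely the extra factor $\su$ on the left-hand side.

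First I would unfold both sides as varieties of elimination ideals. By Proposition \ref{pro:clos} together with Proposition \ref{lm:sqsg}(2) applied to the principal elimination ideal $\langle\Vector{f}\rangle\cap\Q[\Vector{u},p_0]$, one has $\overline{\proj_{n+2}(\cV(\Vector{f}))}=\pV(\h)$, and analogously $\overline{\proj_{n+2}(\cV(\Vector{F}))}\subseteq\pV(\F)$. Since $\proj_{n+2}(\BigAmbientVarietySU)=\pV(\su)=\SmallAmbientVarietySU$ and projections commute with unions, one obtains
\[
\pV(\su\cdot\h)\;=\;\pV(\h)\cup\pV(\su)\;=\;\overline{\proj_{n+2}\bigl(\cV(\Vector{f})\cup\BigAmbientVarietySU\bigr)}.
\]
On the right side of \eqref{eq:key}, observe directly from Definition \ref{def:scalingmap} that
\[
\phi_0^{-1}\bigl(\pV(\F)\bigr)\;=\;\bigl\{(\Vector{u},p_0)\,:\,\F(\Vector{u},\su(\Vector{u})\cdot p_0)=0\bigr\}\;=\;\pV\bigl(\F|_{x_0=\su\cdot p_0}\bigr),
\]
and this set is Zariski closed since $\phi_0$ is a polynomial morphism.

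With both sides in projection/preimage form, I invoke Lemma \ref{cry:piphiF} to translate:
\[
\overline{\proj_{n+2}\bigl(\cV(\Vector{f})\cup\BigAmbientVarietySU\bigr)}\;=\;\overline{\phi_0^{-1}\bigl(\proj_{n+2}(\cV(\Vector{F}))\bigr)}.
\]
Applying $\phi_0^{-1}$ to the containment $\proj_{n+2}(\cV(\Vector{F}))\subseteq\pV(\F)$ and using that $\phi_0^{-1}(\pV(\F))$ is closed, I conclude
\[
\overline{\phi_0^{-1}\bigl(\proj_{n+2}(\cV(\Vector{F}))\bigr)}\;\subseteq\;\phi_0^{-1}(\pV(\F))\;=\;\pV\bigl(\F|_{x_0=\su\cdot p_0}\bigr),
\]
which yields \eqref{eq:key}.

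The only subtle point is the identification $\overline{\proj_{n+2}(\cV(\Vector{f}))}=\pV(\h)$: the elimination ideal $\langle\Vector{f}\rangle\cap\Q[\Vector{u},p_0]$ is only assumed principal with some generator $q$, but $\h$ generates its radical, so $\pV(\h)=\cV(q)$ as varieties and the equality goes through. The rest is a bookkeeping exercise on projections, preimages, and closures, so the only real input is the previously established Lemma \ref{cry:piphiF}, which packages the birational nature of $\phi$ and the behaviour on the exceptional hyperplane.
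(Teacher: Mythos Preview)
Your proof is correct and follows essentially the same route as the paper's: rewrite $\pV(\su\cdot\h)$ as $\overline{\proj_{n+2}(\cV(\Vector{f})\cup\BigAmbientVarietySU)}$, invoke Lemma \ref{cry:piphiF}, and then show the $\phi_0$-preimage of the projection of $\cV(\Vector{F})$ lands in $\pV(\F|_{x_0=\su\cdot p_0})$ using closedness of the latter. The only quibble is that your appeal to Proposition \ref{lm:sqsg}(2) is superfluous here (that proposition concerns specializations at points $\Vector{b}$, not the identification $\pV(\h)=\pV(q)$, which you correctly justify at the end via the radical); the paper uses only Proposition \ref{pro:clos} for that step.
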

\begin{proof}
Recall $\langle \h \rangle=\langle \Vector{f} \rangle\cap \Q[\Vector{u}, p_0]$. By Proposition \ref{pro:clos}, % \cite[page 131, Theorem 3]{CLO2015}, 
$\overline{\proj_{n+2}\left(\cV\left(\Vector{f}\right)\right)}~=~\pV\left(\h\right)$.
So we first have
\begin{align}\label{eq:projectf}
\pV\left(\su\cdot \h\right)
%~=~\pV\left(\h\right) \cup \SmallAmbientVarietySU
~=~\overline{\proj_{n+2}\left(\cV\left(\Vector{f}\right)\right)}\cup\overline{\proj_{n+2}(\BigAmbientVarietySU}) 
~=~\overline{\proj_{n+2}(\cV\left(\Vector{f}\right) \cup \BigAmbientVarietySU)}.
\end{align}
We next prove
\begin{align}\label{eq:projectF}
\overline{\phi_0^{-1}\left(\proj_{n+2}\left(\cW\left(\Vector{F}\right)\right)\right)}~\subset~\pV\left(\F|_{x_0=\su \cdot p_0}\right)
\end{align}
Since $\pV\left(\F|_{x_0=\su \cdot p_0}\right)$ is closed, we only need to show 
%\begin{align*}
$\phi_0^{-1}\left(\proj_{n+2}\left(\cW\left(\Vector{F}\right)\right)\right)~\subset~\pV\left(\F|_{x_0=\su \cdot p_0}\right)$.
%\end{align*}
In fact, recall $\langle \F \rangle=\langle \Vector{F} \rangle\cap \Q[\Vector{u}, x_0]$. By  Proposition \ref{pro:clos}, % \cite[page 131, Theorem 3]{CLO2015}, 
%\timo{This is not fully clear to me. Hmm, since you use this theorem from Cox, Little. O'Shea several times, it might be worth to state it in the paper. E.g., I have the book but another version and hence I don't find the theorem.}
$\overline{\proj_{n+2}\left(\cW\left(\Vector{F}\right)\right)}=\xV\left(\F\right)$.
So, for any $(\Vector{u}, p_0)\in \phi_0^{-1}\left(\proj_{n+2}\left(\cW\left(\Vector{F}\right)\right)\right)$, 
\[\phi_0\left(\Vector{u}, p_0\right)=\left(\Vector{u}, {\mathcal S}(\Vector{u})p_0
\right)\in \proj_{n+2}\left(\cW\left(\Vector{F}\right)\right)\subseteq \xV\left(\F\right).\] 
 Hence, $(\Vector{u}, p_0)\in \pV\left(\F|_{x_0=\su \cdot p_0}\right)$. Now the equality \eqref{eq:key} follows from \eqref{eq:projectf}, Lemma \ref{cry:piphiF},  and \eqref{eq:projectF}:
$$\pV\left(\su\cdot \h\right)~=~\overline{\proj_{n+2}(\cV\left(\Vector{f}\right) \cup \BigAmbientVarietySU)}~=~\overline{\phi_0^{-1}\left(\proj_{n+2}\left(\cW\left(\Vector{F}\right)\right)\right)}~\subset~\pV\left(\F|_{x_0=\su \cdot p_0}\right).$$
%\timo{Ok!}
\end{proof}

Now, we can finally prove Theorem \ref{th:main}.

\noindent

\begin{proof}[Proof of Theorem \ref{th:main}.]~
First, we prove the equality \eqref{eq:MainTheoremStructureEF}. Recall that we define $\ell = 1$ if $\h\in \langle \su\rangle$ and $\ell = 0$ otherwise.
Since $\langle \h\rangle$ is radical, $\h$ is squarefree.
Then 
 $\sqrt{\langle \su\cdot\h \rangle} = \langle  \su^{1-\ell}\cdot \h \rangle$. By Lemma \ref{lm:key},  we have 
$\F|_{x_0=\su \cdot p_0}\in \sqrt{\langle \su\cdot\h \rangle}$. Hence, there exists a $\hat C\in {\mathbb Q}[\Vector{u}, p_0]$ such that 
\begin{equation}\label{eq:mainproof1}
\F|_{x_0=\su \cdot p_0}=\hat C\cdot \su^{1 - \ell} \cdot \h.
\end{equation} 
Next, we show 
$\hat C\in {\mathbb Q}[\Vector{u}]$. Assume we had $\hat C\in {\mathbb Q}[\Vector{u}, p_0]\backslash {\mathbb Q}[\Vector{u}]$. 
Then 
$\deg\left(\hat C, p_0
\right)>0$.
By Corollary \ref{cry:ei} it holds that $\deg\left(\su^{1-\ell} \h, p_0\right)=\deg\left(\h, p_0\right)=N$. 
Thus, by \eqref{eq:mainproof1}, 
\[\deg\left(\F, x_0 \right) \geq 
\deg\left(\F|_{x_0=\su \cdot p_0}, p_0
\right)=\deg\left(\hat C, p_0
\right)+\deg\left(\su^{1-\ell} \h, p_0
\right)>N.\]
However, by Corollary \ref{cry:eiF},  $\deg\left(\F, x_0
\right)=N$. By contradiction it follows that $\hat C\in {\mathbb Q}[\Vector{u}]$. 
Let $t$ be the smallest positive integer such that $\hat C\in \langle \su^{t-1}\rangle$ and $\hat C\not\in \langle \su^{t}\rangle$.
Then we can write 
\[\hat C=\su^{t-1} C(\Vector{u}),\; \text{where}\; C(\Vector{u})\not\in \langle \su\rangle.\] 
So, by \eqref{eq:mainproof1},  
\begin{equation}\label{eq:mainproof2}
\F|_{x_0=\su \cdot p_0} \ = \  \hat C\cdot \su^{1 - \ell} \cdot \h \ = \ C(\Vector{u})\su^{(t-1)+(1-\ell)} \h \ = \ C(\Vector{u})\su^{t- \ell} \h.
\end{equation} 
Therefore, the equality \eqref{eq:MainTheoremStructureEF} is proven. 

Second, we prove \eqref{eq:MainTheoremStructureEf}. By Corollary \ref{cry:eiF}, assume $\F~=~ \sum_{k=0}^N H_{k}(\Vector{u})~x_0^k$, and hence,
\begin{align*}
	  \F|_{x_0=\su \cdot p_0} ~=~ \sum_{k=0}^N H_{k}(\Vector{u})~\su^k p_0^k.
\end{align*}
So, by  \eqref{eq:mainproof2}
we have 
\[\h ~=~\sum_{k=0}^N \frac{H_{k}(\Vector{u})}{C(\Vector{u})}~\su^{k-t+\ell} p_0^k.\]
In the equation above, $\h$ is a polynomial. So $\frac{H_{k}(\Vector{u})}{C(\Vector{u})}~\su^{k-t+\ell}$ on the right side is a polynomial for every $k$.  
Note also from the previous proof, $C(\Vector{u})\not\in \langle \su \rangle$. So, $\frac{H_{k}(\Vector{u})}{C(\Vector{u})}$ is a polynomial in ${\mathbb Q}[\Vector{u}]$. 
Notice $H_{k}(\Vector{u})=\coeff(\F, x_0^k)$.
Hence, the equality  
\eqref{eq:MainTheoremStructureEf} is proven.
\end{proof}

\subsection{Discriminants of Elimination Ideals}\label{sec:maindiscr}
In this subsection we relate our results from Theorem \ref{th:main} to discriminants, which we introduce as a next step. We follow the definition of Gelfand, Kapranov, and Zelevinski; see \cite[page 405, Formula (4.30) and notion on page 411]{gkzbook}.

\begin{definition}
Consider $f(z)=\sum_{k=0}^Nc_kz^k\in {\mathbb Q}[c_0, \ldots, c_N][z]$ as a polynomial in $z$ with $\deg(f, z)=N$ and general coefficients $c_0, \ldots, c_N$.
We denote the \struc{\textit{discriminant}} of $f(z)$ with respect to $z$ by $\struc{\mathrm{discr}(f; z)}$. 
By \cite{gkzbook}, $\mathrm{discr}(f; z)$ is a homogenous polynomial in ${\mathbb Q}[c_0, \ldots, c_N]$:
\begin{align*}
	%\struc{D_N(c_0, \ldots, c_N);z} \ = 
	\ \sum_{\varphi_0,\ldots,\varphi_N}C_{\varphi_0,\ldots,\varphi_N}c_0^{\varphi_0}\cdots c _N^{\varphi_N}.
\end{align*}
Note in the above formula, the exponent vectors $(\varphi_0, \ldots, \varphi_N)$ and coefficients $C_{\varphi_0,\ldots,\varphi_N}$ only depend on $N$. So 
we denote it by $\struc{\mathcal{D}_N(c_0, \ldots, c_N)}$.
\label{def:discriminant}	
\end{definition}

%\timo{I think, the above definition should be clarified. I have no ad hoc solution, though.}

In practice, we observed that the polynomial $\su$ regularly appears as a factor in the discriminant of $\h(\Vector{u}, p_0)$ with respect to $p_0$ (Example \ref{ex:discr}). In the following Corollary \ref{cry:discr} we explain this observation:  it says if the integer $t$ stated in Theorem \ref{th:main} is less than $\frac{N}{2}$, then $\su$ must be a factor of the discriminant. For most  models in our experiments, we do have $t<\frac{N}{2}$; see Table \ref{tab:Nt}.

\begin{corollary}\label{cry:discr}
Let all notions and assumptions be as in Theorem \ref{th:main}. 
%Let $\cM$ be an algebraic statistical model with ML-degree $N$, Lagrange likelihood equations $\Vector{f}$ in \eqref{eq:lle}, and scaled likelihood equations $\Vector{F}$ as in \eqref{eq:slle}. 
%\textcolor{red}{Let all further notation be as defined before.} 
%Assume that 
%\begin{enumerate}
%	\item $\Vector{f}$ is general zero-dimensional, and
%	\item both ideals $\langle \Vector{f}\rangle\cap {\mathbb Q}[\Vector{u}, p_0]$ and $\langle \Vector{F}\rangle\cap {\mathbb Q}[\Vector{u}, x_0]$ are principal.
%\end{enumerate}
%Let $\ell = 1$ if $\h \in \langle \su \rangle$ and $\ell = 0$ otherwise. Let $\delta = 1$ if $\F \in \langle \su \rangle$ and $\delta = 0$ otherwise. 
% \timo{I would shorten the first part to:  ''Let all notations and assumptions be as in Theorem \ref{th:main}``}
 Then there exists  an integer $t>0$ and $C(\Vector{u})\in  {\mathbb Q}[\Vector{u}]\setminus \langle \su \rangle$ such that 
%\begin{itemize}
%\item[(i)] %There exist  a polynomial $C(\Vector{u})\in  {\mathbb Q}[u]\backslash \langle \su \rangle$ and an integer $t\geq 0$ such that 
%\begin{align}\label{eq:MainTheoremStructureEF}
%\F|_{x_0=\su \cdot p_0}~=~C(\Vector{u})\su^t\h,\;\;\; \text{where}\;\su=\sum_{k=0}^nu_k\;\text{and}\;C(\Vector{u})\in  {\mathbb Q}[u]\backslash \langle \su \rangle.
%C(\Vector{u})\in  {\mathbb Q}[u], 
%\end{align}
%where  $\su=\sum_{k=0}^nu_k$ and $C(\Vector{u})\in  {\mathbb Q}[u]\backslash \langle \su \rangle$.
% $t=0$ if  $\h \in \langle \su\rangle$, and $t=1$ if $\h \not\in \langle \su\rangle$.
%\item[(ii)] 
%$$\F|_{x_0=0}\in \langle \su\rangle.$$ 
%\item[(ii)] %If $\F|_{x_0=0}\not\in \langle \su^2\rangle$, then 
%the discriminant of $\h$ with respect to $p_0$ has the form
\begin{align}\label{eq:d}
%h(u_0, \ldots, u_n, p_0)~=~
\mathrm{discr}(\h; p_0)~=~\su^{(N-2(t-\ell-\delta))(N-1)}~{\mathcal D}_N(\tilde{B}_0, \ldots, \tilde{B}_N), \;\;\; \text{where}\;\; \tilde{B}_k=\frac{\coeff(\F, x_0^k)}{\su^{\delta} C(\Vector{u})}\in {\mathbb Q}[\Vector{u}]. %\text{\textcolor{red}{is homogenous}}.
\end{align}
%\end{itemize}
% % % OLD SECOND PART -- COMMENTED OUT
% % % Furthermore, if $\F\in \langle \su \rangle$, then 
% % % %the discriminant of $\h$ with respect to $p_0$ has the form
% % % \begin{align}\label{eq:d1}
% % % %h(u_0, \ldots, u_n, p_0)~=~
% % % \mathrm{discr}(\h; p_0)~=~\su^{(N-2t+2)(N-1)}~{\mathcal D}_N(\tilde{B}_0, \ldots, \tilde{B}_N),\;\;\; \text{where}\;\; \tilde{B}_k=\frac{\coeff(\F, x_0^k)}{\su C(\Vector{u})}\in {\mathbb Q}[\Vector{u}]. %\text{\textcolor{red}{is homogenous}}.
% % % \end{align}
\end{corollary}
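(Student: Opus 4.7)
The plan is to reduce the statement to a bookkeeping of exponents by combining Corollary~\ref{cry:main} with two classical homogeneity properties of the universal discriminant polynomial $\mathcal{D}_N$ from Definition~\ref{def:discriminant}. Setting $m = t-\ell-\delta$, the identity $\coeff(\h,p_0^k) = \tilde B_k(\Vector{u})\,\su^{k-m}$ supplied by Corollary~\ref{cry:main} allows me to substitute directly into $\mathrm{discr}(\h;p_0)=\mathcal{D}_N(\coeff(\h,p_0^0),\ldots,\coeff(\h,p_0^N))$ to obtain
\[
\mathrm{discr}(\h;p_0) \;=\; \mathcal{D}_N\bigl(\tilde B_0\,\su^{-m},\,\tilde B_1\,\su^{1-m},\,\ldots,\,\tilde B_N\,\su^{N-m}\bigr),
\]
which is a priori an identity of rational expressions in $\Q(\Vector{u})$.

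The two properties of $\mathcal{D}_N$ I will invoke are classical (see \cite{gkzbook}): (a) \emph{ordinary homogeneity of degree $2N-2$}, namely $\mathcal{D}_N(\mu c_0,\ldots,\mu c_N) = \mu^{2N-2}\mathcal{D}_N(c_0,\ldots,c_N)$, which is immediate from scaling the polynomial by $\mu$ and the root-product formula for the discriminant; and (b) \emph{weighted homogeneity of weight $N(N-1)$} when $c_k$ is assigned weight $k$, namely $\mathcal{D}_N(c_0,\lambda c_1,\ldots,\lambda^N c_N) = \lambda^{N(N-1)}\mathcal{D}_N(c_0,\ldots,c_N)$, which follows by the substitution $z\mapsto \lambda z$: the leading coefficient contributes $\lambda^{N(2N-2)}$ and the squared Vandermonde in the roots contributes $\lambda^{-N(N-1)}$. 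If these identities are not already recorded in the paper, a short preliminary lemma right before the corollary suffices.

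Factoring $c_k = \tilde B_k\,\su^{k-m} = \su^{-m}\cdot\su^{k}\tilde B_k$, I will first apply (a) with $\mu = \su^{-m}$ to extract $\su^{-m(2N-2)}$, and then apply (b) with $\lambda = \su$ to extract $\su^{N(N-1)}$, leaving $\mathcal{D}_N(\tilde B_0,\ldots,\tilde B_N)$. The exponents combine as $N(N-1) - 2m(N-1) = (N-2m)(N-1) = (N-2(t-\ell-\delta))(N-1)$, which is exactly the exponent in \eqref{eq:d}. The only mild subtlety, and what I expect to be the main thing to check carefully, is that although the intermediate symbol $\su^{-m}$ is not a polynomial when $m>0$, Corollary~\ref{cry:main}(2) guarantees that each $\tilde B_k$ with $k<m$ is divisible by $\su^{m-k}$; hence every $\coeff(\h,p_0^k)$ already lies in $\Q[\Vector{u}]$ and the two homogeneity identities, applied symbolically and then cleared, give a genuine polynomial identity in $\Q[\Vector{u}]$.
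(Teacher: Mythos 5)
Your proposal is correct and follows essentially the same route as the paper: the two homogeneity identities you invoke (ordinary homogeneity of degree $2N-2$ and weighted homogeneity of weight $N(N-1)$) are exactly the relations $\sum_k\varphi_k=2N-2$ and $\sum_k k\varphi_k=N(N-1)$ on the exponent vectors of $\mathcal{D}_N$ that the paper takes from \cite[Theorem 2.2, page 412]{gkzbook} and uses to factor $\su^{(N-2(t-\ell-\delta))(N-1)}$ out of every monomial of $\mathcal{D}_N(A_0,\ldots,A_N)$ at once. Your closing remark about the intermediate rational expressions is the right thing to check, and it is handled the same way implicitly in the paper's computation.
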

\begin{proof}
Assume
\begin{align*}
%h(u_0, \ldots, u_n, p_0)~=~
\h~=~A_{N}(\Vector{u})~p_0^N +A_{N-1}(\Vector{u})~p_0^{N-1} + \ldots +A_{1}(\Vector{u})~p_0 + A_{0}(\Vector{u}).
\end{align*}
Then, we can further write
\[\mathrm{discr}(\h, p_0)~=~{\mathcal D}_N({A}_0, \ldots, {A}_N)~=~\sum_{\varphi_0,\ldots,\varphi_N}C_{\varphi_0,\ldots,\varphi_N}A_0^{\varphi_0}\cdots A _N^{\varphi_N}.\] 
By Corollary \ref{cry:main}, there exists an integer $t>0$ such that 
for every $k=0, \ldots, N$, $A_k(\Vector{u})=\tilde{B}_k(\Vector{u})\su^{k-t+\ell+\delta}$, where $\tilde{B}_k(\Vector{u})=\frac{\coeff(\F, x_0^k)}{\su^{\delta} C(\Vector{u})}\in {\mathbb Q}[\Vector{u}]$.  Then 
\begin{align}\label{eq:dish}
\mathrm{discr}(\h; p_0)~=~\sum_{\varphi_0,\ldots,\varphi_N}C_{\varphi_0,\ldots,\varphi_N}\su^{\sum^N_{k=0}(k-t+\ell+\delta)\varphi_k}\tilde{B}_0^{\varphi_0}\cdots \tilde{B}_N^{\varphi_N}.
\end{align}
By \cite[Theorem 2.2, page 412]{gkzbook}, 
\begin{align}\label{eq:gkz}
\sum^{N}_{k=0}\varphi_k~=~2N-2, \;\;\text{and}\;\; \sum^{N}_{k=0}(N-k)\varphi_k~=~N(N-1).
\end{align}
So, we have 
\begin{align}\label{eq:gkz1}
 \sum^{N}_{k=0}k\varphi_k \ = \ N\sum^{N}_{k=0}\varphi_k- \sum^{N}_{k=0}(N-k)\varphi_k=N(2N-2)-N(N-1)=N(N-1).
 \end{align}
 By \eqref{eq:gkz} and \eqref{eq:gkz1}, $\sum^N_{k=0}(k-t+\ell+\delta)\varphi_k$ is equal to  
\begin{align}\label{eq:gkz2}
 \sum^N_{k=0}k\varphi_k-(t-\ell-\delta)\sum^N_{k=0}\varphi_k=N(N-1)-(t-\ell-\delta)(2N-2)=(N-2(t-\ell-\delta))(N-1).
\end{align}
Hence, by  \eqref{eq:dish} and \eqref{eq:gkz2}, 
\[\mathrm{discr}(\h; p_0)~=~\su^{(N-2(t-\ell-\delta))(N-1)}\sum_{\varphi_0,\ldots,\varphi_N}C_{\varphi_0,\ldots,\varphi_N}\tilde{B}_0^{\varphi_0}\cdots \tilde{B}_N^{\varphi_N}
%~=~\su^{(N-2t)(N-1)}{\mathcal D}_N({B}_0, \ldots, {B}_N)
.
\]
By Definition \ref{def:discriminant}, we have $\sum_{\varphi_0,\ldots,\varphi_N}C_{\varphi_0,\ldots,\varphi_N}\tilde{B}_0^{\varphi_0}\cdots \tilde{B}_N^{\varphi_N} ={\mathcal D}_N(\tilde{B}_0, \ldots, \tilde{B}_N)$.  So, the equality \eqref{eq:d} is proven.

% % % OLD PART FOR SECOND CASE --- COMMENTED OUT
% % % Furthermore, if $\F\in \langle \su \rangle$, then 
% % % by Theorem \ref{th:main}, $\h$ can be written as \eqref{eq:r1}. So
% % % $A_k(\Vector{u})=\tilde{B}_k(\Vector{u})\su^{k-t+1}$ for some $\tilde{B}_k(\Vector{u})\in {\mathbb Q}[\Vector{u}]$.  Then 
% % % \[\mathrm{discr}(\h; p_0)~=~\sum_{\varphi_0,\ldots,\varphi_N}C_{\varphi_0,\ldots,\varphi_N}\su^{\sum^N_{k=0}(k-t+1)\varphi_k}\tilde{B}_0^{\varphi_0}\cdots \tilde{B}_N^{\varphi_N}.\]
% % % Using the equalities \eqref{eq:gkz},  we can derive the equality \eqref{eq:d1} by a similar deduction of \eqref{eq:d}.
\end{proof}

\begin{example}[Example \ref{ex:linear} Continued]\label{ex:discr}
By Example \ref{ex:linear}
 and Corollary \ref{cry:discr}, 
the discriminant of $\h$ in \eqref{eq:ex1h} has the factor $\su^{(N-2(t-\ell-\delta))(N-1)}|_{N=3, t=2, \ell=0, \delta=1}=(u_0+u_1+u_2+u_3)^2$. 
Moreover, by Corollary \ref{cry:discr} (Equality \eqref{eq:d}), we have
\begin{align*}
%h(u_0, \ldots, u_n, p_0)~=~
\mathrm{discr}(\h; p_0)~=~(u_0+u_1+u_2+u_3)^{2}~{\mathcal D}_3(\tilde{B}_0, \tilde{B}_1, \tilde{B}_2, \tilde{B}_3), %\text{\textcolor{red}{is homogenous}}.
\end{align*}
where by Example \ref{ex:discr}, 
\begin{align}\label{eq:exBt}
\begin{tabular}{ll}
$\tilde{B}_3=10$,&$\tilde{B}_2=(43u_0+20u_1+15u_2+8u_3)$,\\
$\tilde{B}_1=2u_0(29u_0+23u_1+21u_2+14u_3)$,&$\tilde{B}_0=-24u_0^2(u_0+u_1+u_2+u_3)$,
\end{tabular}
\end{align}
and ${\mathcal D}_3(c_0, \ldots, c_3)=-27c_0^2c_3^2+18c_0c_1c_2c_3-4c_0c_2^3-4c_1^3c_3+c_1^2c_2^2$, 
which can be computed by running {\tt discrim}$(\sum_{i=0}^3c_iz^i, z)$ in {\tt Maple}.
So, we get $\mathrm{discr}(\h; p_0)$ below by simply substituting $c_i=\tilde{B}_i$ into ${\mathcal D}_3$: 
\begin{center}
 $4u_0^2(u_0+u_1+u_2+u_3)^2(441u_0^4+4998u_0^3u_1+20041u_0^2u_1^2+33320u_0u_1^3+19600u_1^4-756u_0^3u_2+
20034u_0^2u_1u_2+83370u_0u_1^2u_2+79800u_1^3u_2-5346u_0^2u_2^2+55890u_0u_1u_2^2+119025u_1^2u_2^2+4860u_0u_2^3+76950u_1u_2^3+18225u_2^4-1596u_0^3u_3-11116u_0^2u_1u_3-17808u_0u_1^2u_3+4480u_1^3u_3+7452u_0^2u_2u_3-7752u_0u_1u_2u_3+49680u_1^2u_2u_3-17172u_0u_2^2u_3+71460u_1u_2^2u_3+27540u_2^3u_3+2116u_0^2u_3^2+6624u_0u_1u_3^2-4224u_1^2u_3^2-9528u_0u_2u_3^2+15264u_1u_2u_3^2+14724u_2^2u_3^2-1216u_0u_3^3-512u_1u_3^3+3264u_2u_3^3+256u_3^4)$, 
\end{center}
One can verify the above discriminant by
running ${\tt discrim}(\h, p_0)$  in {\tt Maple}, 
which will give a consistent result. 
\end{example}

\section{Algorithm}\label{sec:alg}

In this section we explain our main Algorithm \ref{interpolation} and its sub algorithms and also provide the corresponding pseudocode.

\medskip

Given an algebraic model $\cM$,  let $\Vector{f}=\{f_{0},\ldots, f_{n+s+1}\}\subseteq {\mathbb Q}[\Vector{u}, \Vector{p}, \Vector{\lambda}]$ be its Lagrange likelihood equation system. Assume the hypotheses (1--2) of Theorem \ref{th:main} are satisfied.
% % % :
% % % \begin{enumerate}
% % % 	\item[({\bf H1}).] $f_0, \ldots, f_{n+s+1}$ form a general zero-dimensional system, and
% % % 	\item[({\bf H2}).] $\langle f_0, \ldots, f_{n+s+1}\rangle\cap {\mathbb Q}[\Vector{u}, p_0]$ is principal.
% % % \end{enumerate}
 In this section, we propose a probabilistic algorithm for computing the polynomial $\h$ generating
  $\sqrt{\langle \Vector{f}\rangle \cap \Q[\Vector{u},p_0]}$.
 %Notice that we only need to compute $N$ and $A_i(\Vector{u})$. 
 We simply denote $\coeff(\h, p_0^i)$ 
 by $\struc{A_i(\Vector{u})}$. 
 Then $\h=\sum_{i=0}^NA_i(\Vector{u})p_0^i$, where $N$ is the ML-degree of $\cM$.
 First, we observe a fact by \eqref{eq:ex1h} in Example \ref{ex:linear} that:
\begin{description}
\item[({\bf F1})] $\h$ is homogenous with respect to $\Vector{u}$, and hence each $A_i(\Vector{u})$ is homogenous with the same total degree in $\Q[\Vector{u}]$. 
 \end{description}
 We omit the proof of ({\bf F1}) here since the argument is similar to 
\cite[Proposition 2]{Tang2017}, which is based on a basic fact implied by \eqref{eq:lle}:
for every $(\Vector{u}, p_0)\in \proj_{n+2}(\cV(\Vector{f}))$ and for any scalar $\gamma \neq 0$, $(\gamma \Vector{u}, p_0)$ is also in 
$\proj_{n+2}(\cV(\Vector{f}))$. 

Besides observing ({\bf F1}), we make the following assumptions to simplify our algorithm:
%\timo{We really need to change the notation of our polynomials: We have at least three types of fonts used for polynomials, which are, in my impression, mixed randomly.}
% \textcolor{red}{xiaoxian will add the reasons later.}
\begin{comment}
\begin{itemize}
\item[({\bf A1}).] \timo{To my understanding this is (now) implied by Corollary \ref{corollary:MainResult}. Do you agree?}
For every $i=0, \ldots, N$,  assume $A_i(\Vector{u})=\su^{\alpha_i} \B_i(\Vector{u})$ where $\su=\sum_{k=0}^nu_k$,  $\alpha_i\geq 0$ and $\B_{i}\in  {\mathbb Q}[\Vector{u}]\backslash \langle  \su\rangle$;
\end{itemize}
\end{comment}
\begin{description}
\item[({\bf A1})] Assume $\deg(A_N(\Vector{u}), u_0)=\deg(A_N(\Vector{u}))$, i.e., $A_N(\Vector{u})$ contains a term $u_0^{\deg(A_N)}\in \Q[u_0]$. 
%the degree of $u_0$ in $A_N(\Vector{u})$ is equal to the total degree of $A_N(\Vector{u})$.
%Under the assumption (A1), $\h(\Vector{u}, p_0)$ must contain a term $u_0^dp_0^N$, where $d$ is the total degree of $A_N(\Vector{u})$. 
%Under the assumption (A2), we further assume
\item[({\bf A2})]Assume  $A_N(\Vector{u})$ is monic with respect to $u_0$, which unifies our output polynomial $\h$.
\end{description}
 
If ({\bf A1}) does not hold, then we apply an invertible linear change to 
the parameters $u_j$  such that ({\bf A1}) holds for the new parameters (similar to \cite[Algorithm 4]{Tang2017}). For instance, obtain new parameters $v_j$ as 
\[v_0 = u_0, \;\;\text{and} \;\; v_j = b_j u_j + u_0\;\;\text{for}\;\; j=1, \ldots, n,\]
where $b_j$ are randomly chosen rational numbers. 
 By \cite[Lemmas 1--2]{Tang2017},  $\deg(A_N(\Vector{v}), v_0)$ will be equal to  $\deg(A_N( \Vector{v}))$.
 However,  the linear change may cause computational expense to the subsequent computation, more specifically, the sampling step: Algorithm \ref{sampleLC} in Section \ref{sec:code}. 
In practice, we have ({\bf A1}) holds for all models we have computed. We conjecture it is always true, but we do not have a proof.  
\begin{comment}
 The assumption ({\bf A3}) unifies the output. 
  As we know, if 
 $\langle \h\rangle=\sqrt{\langle f_0, \ldots, f_{n+s+1}\rangle\cap {\mathbb Q}[\Vector{u}, p_0]}$, then for any $C\in {\mathbb Q}$, 
 $C\cdot \h$ will be also a generator of $\langle \h\rangle$. 
 Under ({\bf A2}), $A_N(\Vector{u})$ contains a monomial $u_0^{{\tt deg}(A(\Vector{u}))}$, and hence
 $\h(\Vector{u}, p_0)$ has a monomial $u_0^{{\tt deg}(A(\Vector{u}))}p_0^N$.
Then  ({\bf A3}) means we assume the coefficient of this monomial in our output is $1$. Then the output $\h(\Vector{u}, p_0)$ is unified. 
\end{comment}

According to Corollary \ref{cry:main}, we further write 
\begin{align*}
	\h(\Vector{u}, p_0)~=~\sum_{i=0}^NA_{i}(\Vector{u})~p_0^i~=~\sum_{i=0}^N\su^{\alpha_i}\B_{i}(\Vector{u})~p_0^i, \;
	\text{ where } \;R_i\in \Q[{\Vector{u}}]\backslash \langle \su\rangle.
\end{align*}
%where $R_i\in \Q[{\Vector{u}}]\backslash \langle \su\rangle$. 
The main algorithm has three steps; see Algorithm \ref{interpolation} with three sub-algorithms in Section \ref{sec:code}: 
\begin{itemize}
\item[Step 1.] Compute $N$, $(\alpha_0, \ldots, \alpha_N)$, and the degree of every $u_j$ in each $A_i$;
\item[Step 2.] Compute the leading coefficient $A_N(\Vector{u})$ by interpolating $\B_N(\Vector{u})$;
\item[Step 3.] Compute the coefficients $A_i(\Vector{u})$  by interpolating $\B_i(\Vector{u})$ for $i=0, \ldots, N-1$.
\end{itemize}
The pseudocode is given in Section \ref{sec:code} and a running example is given in  Section \ref{sec:runex}.

\subsection{Pseudocode}\label{sec:code}~
%\textcolor{red}{correctness and termination.}
In Algorithm \ref{interpolation} and its sub-algorithms Algorithms \ref{degree}--\ref{sample}, we only have finite loops. So the algorithm terminates for sure.   The algorithm is probabilistic. Proposition \ref{lm:sfactor} and Corollary \ref{cry:ei} guarantee that the probabilistic algorithm  terminates correctly if all random chosen rational numbers involved are generic. We explain more details by a running example in Section \ref{sec:runex}.
\begin{algorithm}[t]\label{interpolation}
\scriptsize
\DontPrintSemicolon
\LinesNumbered
\SetKwInOut{Input}{input}
\SetKwInOut{Output}{output}
\Input{Lagrange likelihood equations $f_0, \cdots, f_{n+s+1}
\in {\mathbb Q}[\Vector{u}, \Vector{p}, \Vector{\lambda}]$% satisfying ({\bf H1}--{\bf H2}) 
%\begin{enumerate}
%	\item $f_0, \ldots, f_{n+s+1}$ form a general zero-dimensional system, and
%	\item $\langle f_0, \ldots, f_{n+s+1}\rangle\cap {\mathbb Q}[\Vector{u}, p_0]$ is principal.
%\end{enumerate}
}
\Output{A generator of $\sqrt{\langle f_0, \ldots, f_{n+s+1}\rangle\cap {\mathbb Q}[\Vector{u}, p_0]}$: $\h(\Vector{u}, p_0)=\sum_{i=0}^NA_{i}(\Vector{u})~p_0^i$}
\caption{({\bf Main Algorithm})}
\BlankLine
\tiny
$N, (\alpha_0, \ldots, \alpha_N), \dL, \dD\leftarrow $ {\bf Degrees}$(f_0, \ldots, f_{n+s+1})$\;  
$A_{N}(\Vector{u})\leftarrow$ {\bf LeadingCoefficient}$(f_0, \ldots, f_{n+s+1}, \alpha_N, \dL)$\;  
$A_{0}(\Vector{u}), \ldots, A_{N-1}(\Vector{u})\leftarrow$ {\bf Coefficients}$(f_0, \ldots, f_{n+s+1}, A_{N}(\Vector{u}), \dD)$\;
{\bf Return}  $\sum_{i=0}^NA_{i}(\Vector{u})~p_0^i$\;
%$\sum_{i=0}^N\left(\sum_{k=0}^{n} u_k\right)^{a_{k}}\B_{i}(\Vector{u})~p_0^i$\;
%modify the structure/notation
\end{algorithm}

\begin{algorithm}[t]\label{degree}
\scriptsize
\DontPrintSemicolon
\LinesNumbered
\SetKwInOut{Input}{input}
\SetKwInOut{Output}{output}
\Input{ Lagrange likelihood equations $f_0, \cdots, f_{n+s+1}
\in {\mathbb Q}[\Vector{u}, \Vector{p}, \Vector{\lambda}]$ %satisfying ({\bf H1}--{\bf H2}) 
}
\Output{  $N, (\alpha_0, \ldots, \alpha_N), \dL, \dD$, where 
\begin{itemize}
\item $N~=~\deg(\h, p_0)$, 
   %           \item $d$ is the total degree of $u_0, \ldots, u_n$ in $\h$, 
               \item $\alpha_i$ is the multiplicity of the factor $\sum_{k=0}^nu_k$ appearing in $\coeff(\h, p_0^i)$, %$A_i(\Vector{u})$, %the coefficient of $\h(\Vector{u}, p_0)$ with respect to $p_0^i$,
               \item $\dL$ is a list with length $n+1$, whose $(j+1)$-th entry is $\deg(\lcoeff(\h, p_0), u_j)$ %the degree $u_{j}$ in $A_{N}(\Vector{u})$ 
               for $j=0, \ldots, n$,
            \item   $\dD$ is an $N\times (n+1)$ matrix, 
            whose $(i+1, j+1)$-entry is $\deg(\coeff(\h, p_0^i), u_j)$ %the degree of $u_{j}$ in $A_{i}(\Vector{u})$
             for $i=0, \ldots, N-1$ and for
            $j=0, \ldots, n$.
            %the coefficient of $\h(\Vector{u}, p_0)$ with respect to $p_0^i$.
            \end{itemize}
            }
 \BlankLine
 \tiny
 $f_0^*, \ldots, f_{n+s+1}^*\leftarrow$ replace $u_1,\ldots, u_n$ in $f_0, \ldots, f_{n+s+1}$ with random rational numbers $b_1, \ldots, b_n$\; 
%$\hat f_0, \ldots, \hat f_{n+s+1}\leftarrow$ replace $u_0, \ldots, u_n$ with $b_0\cdot t + c_0, \ldots, b_n\cdot t+c_n$ in $f_0, \ldots, f_{n+s+1}$, where
%$b_k,  c_k$ are random rational numbers such that $\sum_{k=0}^n b_k$ and $\sum_{k=0}^n  c_k$ are coprime\; 
%$g(t, p_0)\leftarrow$ generator of the radical of elimination ideal $\langle \hat f_0, \ldots, \hat f_{n+s+1}\rangle\cap {\mathbb Q}[t, p_0]$
$g(u_0, p_0)\leftarrow$ generator of $\sqrt{\langle f_0^*, \ldots, f_{n+s+1}^*\rangle\cap {\mathbb Q}[u_0, p_0]}$
\nllabel{elim22}\;
%$A(t)\leftarrow$ the generator of $\sqrt{\langle A(t)\rangle}$\;
$N\leftarrow \deg(g, p_0)$\;% degree of $p_0$ in $g(u_0, p_0)$\;
\For{$i$ {\bf from} $0$ {\bf to} $N$}{
$\alpha_i\leftarrow$ multiplicity of the factor $u_0+\sum_{k=0}^n b_k$ in  $\coeff\left(g, p_0^i\right)$\; %with respect to $p_0^i$\;
}
%$\dL(1)\leftarrow$ degree of $u_0$ in the coefficient of $g(u_0, p_0)$ with respect to $p_0^N$\;
$\dL(1)\leftarrow \deg(\lcoeff(g, p_0), u_0)$\;
\For{$i$ {\bf from} $0$ {\bf to} $N-1$}{
%$\dD(i+1, 1)\leftarrow$ degree of $u_0$ in the coefficient of $g(u_0, p_0)$ with respect to $p_0^i$\;
$\dD(i+1, 1)\leftarrow  \deg(\coeff(g, p_0^i), u_0)$\;
}
\For {$j$ {\bf from} $1$ {\bf to} $n$} {
$f_0^*, \ldots, f_{n+s+1}^*\leftarrow$ replace $u_0, \ldots, u_{j-1}, u_{j+1}, \ldots, u_n$ in $f_0, \ldots, f_{n+s+1}$ with random rational numbers\; 
$g(u_j, p_0)\leftarrow$ generator of $\sqrt{\langle f_0^*, \ldots, f_{n+s+1}^*\rangle\cap {\mathbb Q}[u_j, p_0]}$\nllabel{elim28}\;
%$A(u_i)\leftarrow$ the generator of $\sqrt{\langle A(u_i)\rangle}$\;
%$N\leftarrow$ degree of $p_0$ in $\h(u_j, p_0)$\;
%$\dL(j+1)\leftarrow$ degree of $u_j$ in the coefficient of $g(u_j, p_0)$ with respect to $p_0^N$\;
$\dL(j+1)\leftarrow \deg(\lcoeff(g, p_0), u_j)$\;
\For{$i$ {\bf from} $0$ {\bf to} $N-1$}{
%$\dD(i+1, j+1)\leftarrow$ degree of $u_j$ in the coefficient of $g(u_j, p_0)$ with respect to $p_0^i$\;
$\dD(i+1, j+1)\leftarrow  \deg(\coeff(g, p_0^i), u_j)$\;
}
}
{\bf return} $N, (\alpha_0, \ldots, \alpha_N), \dL, \dD$\;
%should prove $\h$ is homo. with respect to $u$; also notation $\B$ should be super clear; change the index of $D$
\caption{{\bf (Sub-Algorithm of Algorithm \ref{interpolation}) Degrees}}
\end{algorithm} 

%\timo{I think, we should introduce a short notation of ''degree of $f$ with respect to $x_0$``, say $\deg(f,x_0)$, ok? I added such a definition to the preliminaries.}

\begin{algorithm}[t]\label{leadcoeff}
\scriptsize
\DontPrintSemicolon
\LinesNumbered
\SetKwInOut{Input}{input}
\SetKwInOut{Output}{output}
\Input{  Lagrange likelihood equations $f_0, \ldots, f_{n+s+1}$,  and $a_N, \dL$, where
\begin{itemize}
\item $\alpha_N$ is  the multiplicity of the factor $\sum_{k=0}^nu_k$ appearing in $\lcoeff(\h, p_0)$, %$A_N(\Vector{u})$,
\item $\dL$ is a list, whose $(j+1)$-th entry is $\deg(\lcoeff(\h, p_0), u_j)$ %the degree $u_{j}$ in $A_{N}(\Vector{u})$
 for $j=0, \ldots, n$.
\end{itemize}
}
\Output{ %leading coefficient of $\h(\Vector{u}, p_0)$ with respect to $p_0$: 
$\lcoeff(\h, p_0)$: $A_{N}(\Vector{u})$ %(as in (\ref{eq:MainTheoremStructureEf}))
}
 %   $N, (\alpha_0, \ldots, \alpha_N), D\leftarrow $Degrees$(f_0, \ldots, f_{n+s+1})$\;   
 \BlankLine
 \tiny
$d\leftarrow \dL(1)-\alpha_N$ \textcolor{blue}{\it \#Here, $d=\deg(R_N, u_0)$, and by ({\bf A1}), $\deg(R_N, u_0)=\deg(R_N)$}\;
 \For {$i$ {\bf from} $0$ {\bf to} $d-1$\nllabel{et1}}
{

Enumerate all the monomials in the set $\{u_1^{\beta_1}\cdots u_n^{\beta_n}|\sum_{j=1}^n\beta_j=d-i, 0\leq \beta_j\leq \dL(j+1)-\alpha_N\}$ as
 ${U}_{i, 1}, \ldots, {U}_{{i, t_i}}$ 
}

$t\leftarrow \max(t_0, \ldots, t_{d-1})$\;\nllabel{i1}
\For {$k$ {\bf from} $1$ {\bf to} $t$\nllabel{et2}}
{
$b_{ k, 1}, \ldots, b_{k, n}\leftarrow$ random rational numbers\;
$q(u_0)\leftarrow$ {\bf IntersectForLC}$(f_0, \ldots, f_{n+s+1}, b_{k, 1}, \ldots, b_{k, n}, \alpha_N)$\;
$C^*_{0, k}, \ldots, C^*_{d-1, k}\leftarrow$ the coefficients of $q(u_0)$ with respect to $u_0^0, \ldots, u_0^{d-1}$\;
}
\For {$i$ {\bf from} $0$ to $d-1$}
{
${\mathcal M}_i\leftarrow $ the $t_i\times t_i$ matrix whose $(k, r)$-entry  is  $U_{i, r}|_{u_1=b_{ k, 1}, \ldots, u_n=b_{k, n}}$\nllabel{matrix}\;
$C_i(u_1, \ldots, u_n)\leftarrow ({U}_{i, 1}, \ldots, {U}_{{i, t_i}}){\mathcal M}_i^{-1}(C^*_{i, 1}, \ldots, C^*_{i, t_i})^{T}$
\nllabel{i2}}
{\bf Return}  $\left(\sum_{k=0}^nu_k\right)^{\alpha_N}\left(u_0^{d} + \Sigma_{i=0}^{d-1}C_{i}\left(u_1, \ldots, u_n\right)u_0^i\right)$
\caption{{\bf (Sub-Algorithm of Algorithm \ref{interpolation}) LeadingCoefficient}}
%if we can prove conjecture (B), then the writing would be much easier; $A(u_0)$ should be monic w.r.t $u_0$; notation $d_N$ looks bad
\end{algorithm}

\begin{algorithm}[t]\label{sampleLC}
\tiny
\DontPrintSemicolon
\LinesNumbered
\SetKwInOut{Input}{input}
\SetKwInOut{Output}{output}
\Input{  Lagrange likelihood equations $f_0, \ldots, f_{n+s+1}$, random rational numbers $b_1, \ldots, b_n$ and $\alpha_N$, \\
where $a_N$ is  the multiplicity of the factor $\sum_{k=0}^nu_k$ appearing in $\lcoeff(\h, p_0)$.}
\Output{ %$\frac{A_{N}(\Vector{u})}{(\sum_{k=0}^nu_k)^{a_N}}|_{u_1=b_1, \ldots, u_n=b_n}$
$\frac{\lcoeff(\h, p_0)}{(\sum_{k=0}^nu_k)^{\alpha_N}}|_{u_1=b_1, \ldots, u_n=b_n}$}
%\For {$i$ {\bf from} $1$ {\bf to} $n$} {
\BlankLine
%\tiny
$f_0^*, \ldots, f_{n+s+1}^*\leftarrow$ replace $u_1, \ldots,  u_n$ in $f_0, \ldots, f_{n+s+1}$ with $b_1, \ldots, b_n$, respectively\; 
%$f_i^*\leftarrow$ replace $u_i$ in $f_i$ with $b_i$\; 
%}
$g(u_0, p_0)\leftarrow$ generator of the radical of elimination ideal $\langle f_0^*, \ldots,f_{n+s+1}^*\rangle\cap {\mathbb Q}[u_0, p_0]$\nllabel{elim31}\;
$q(u_0)\leftarrow $ divide $\lcoeff(g, p_0)$ by $(u_0+\sum_{i=1}^nb_i)^{\alpha_N}$\;
Make $q(u_0)$ monic with respect to $u_0$, and 
{\bf return} $q(u_0)$\;
%if we can prove conjecture (B), then the writing would be much easier; we should also need conjecture (C); change another name for $\h(u_0, p_0)$
\caption{{\bf (Sub-Algorithm of Algorithm \ref{leadcoeff}) IntersectForLC}}
\end{algorithm}

\begin{algorithm}[t]\label{coeff}
\scriptsize
\DontPrintSemicolon
\LinesNumbered
\SetKwInOut{Input}{input}
\SetKwInOut{Output}{output}
\Input{  Lagrange likelihood equations  $f_0, \ldots, f_{n+s+1}$, and $A_{N}(\Vector{u}), \alpha_0, \ldots, \alpha_{N-1}, \dD$, where\\
\begin{itemize}
\item $A_N(\Vector{u})~=~\lcoeff(\h, p_0)$, %is the leading coefficient of $\h(\Vector{u}, p_0)$ with respect to $p_0$,
\item $\alpha_i$ is the multiplicity of the factor $\sum_{k=0}^nu_k$ appearing in $\coeff(\h, p_0^i)$, %the coefficient of $\h(\Vector{u}, p_0)$ with respect to $p_0^i$,
\item $\dD$ is an $N\times (n+1)$ matrix, whose $(i+1, j+1)$-entry is $\deg(\coeff(\h, p_0^i), u_j)$. % the degree of $u_{j}$ in $A_{i}(\Vector{u})$.
\end{itemize}
% for $i=0, \ldots, N-1$ and for  $j=0, \ldots, n$.
}
\Output{%coefficients of $\h(\Vector{u}, p_0)$ with respect to $p_0^i$ 
$\coeff(\h, p_0^i)$ for $i=0, \ldots, N-1$: $A_{0}(\Vector{u}), \ldots, A_{N-1}(\Vector{u})$ %({\it i.e.} $\B_{f, i}$ in (\ref{eq:MainTheoremStructureEf}))
}
\BlankLine
  %  $N, d, (\alpha_0, \ldots, \alpha_N), D\leftarrow $Degrees$(f_0, \ldots, f_k)$\;  
%$d_0, \ldots, d_{N-1}\leftarrow d-a_{0}, \ldots,  d-a_{N-1}$\;
%$\B_{N}(\Vector{u})\leftarrow$ {\bf LeadingCoefficient}$(f_0, \ldots, f_{n+s+1})$\;  
%\tiny
$d\leftarrow \deg(A_N(\Vector{u}))$\;
 \For {$i$ {\bf from} $0$ {\bf to} $N-1$\nllabel{et1}}
{
 Enumerate all the monomials in
$\{u_0^{\beta_0}\cdots u_n^{\beta_n}|\sum_{j=0}^n\beta_j=d-\alpha_i, 0\leq \beta_j\leq D(i+1, j+1)-\alpha_i\}$
as ${U}_{i, 1}, \ldots, {U}_{{i, t_i}}$
}

$t\leftarrow \max(t_0, \ldots, t_{N-1})$\;\nllabel{i1}
\For {$k$ {\bf from} $1$ {\bf to} $t$\nllabel{et2}}
{
$b_{ k, 0}, \ldots, b_{k, n}\leftarrow$ random rational numbers\;
$g(p_0)\leftarrow $ {\bf Intersect}$(f_0, \ldots, f_{n+s+1}, b_{k, 0}, \ldots, b_{k, n})$\;
$C^*_{0, k}, \ldots, C^*_{N-1, k}\leftarrow$ the coefficients of $g(p_0)$ with respect to $p_0^0, \ldots, p_0^{N-1}$\;
}
\For {$i$ {\bf from} $0$ to $N-1$}
{
${\mathcal M}_i\leftarrow $ $t_i\times t_i$ matrix whose $(k, r)$-entry  is  $\frac{U_{i, r}}{A_{N}(\Vector{u})}|_{u_0=b_{ k, 0}, \ldots, u_n=b_{k, n}}$\nllabel{matrix}\;
$\B_i(\Vector{u})\leftarrow ({U}_{i, 1}, \ldots, {U}_{{i, t_i}}){\mathcal M}_i^{-1}(C^*_{i, 1}, \ldots, C^*_{i, t_i})^{T}$
\nllabel{i2}}
{\bf Return}  $\left(\sum_{k=0}^nu_k\right)^{\alpha_0}\B_0(\Vector{u}), \ldots, \left(\sum_{k=0}^nu_k\right)^{\alpha_{N-1}}\B_{N-1}(\Vector{u})$\;
%might change the caption; double the notation
\caption{{\bf (Sub-Algorithm of Algorithm \ref{interpolation}) Coefficients}}
\end{algorithm}

\begin{algorithm}[t]\label{sample}
\scriptsize
\DontPrintSemicolon
\LinesNumbered
\SetKwInOut{Input}{input}
\SetKwInOut{Output}{output}
\Input{ Lagrange likelihood equations $f_0, \ldots, f_{n+s+1}$,  and random rational numbers $b_0, b_1, \ldots, b_n$}
\Output{ %$\frac{\h(\Vector{u}, p_0)}{A_N(\Vector{u})}|_{u_0=b_0, \ldots, u_n=b_n}$
$\frac{\h(\Vector{u}, p_0)}{\lcoeff(\h, p_0)}|_{u_0=b_0, \ldots, u_n=b_n}$}
\BlankLine
%\tiny
$\tilde{f}_0, \ldots, \tilde{f}_{n+s+1}\leftarrow$ replace $u_0, \ldots,  u_n$ in $f_0, \ldots, f_{n+s+1}$ with $b_0, \ldots, b_n$, respectively\; 
$g(p_0)\leftarrow$ generator of the radical of elimination ideal $\langle \tilde{f}_0, \ldots,\tilde{f}_{n+s+1}\rangle\cap {\mathbb Q}[p_0]$\nllabel{elim61}\;
Make $g(p_0)$ monic with respect to $p_0$, and {\bf return} $g(p_0)$\;
\caption{{\bf (Sub-Algorithm of Algorithm \ref{coeff}) Intersect}}
\end{algorithm}

\subsection{Running Example}\label{sec:runex}
In this subsection, we illustrate how Algorithm \ref{interpolation} works by the four-sided-die model in Example \ref{ex:linear}. 
The inputs are $f_0, \ldots, f_5$ in Example \ref{ex:linear}, and the output will be a generator of $\sqrt{\langle f_0, \ldots, f_{5}\rangle\cap {\mathbb Q}[\Vector{u}, p_0]}$.  Assume the generator is 
\begin{align*}
	\h(\Vector{u}, p_0)~=~\sum_{i=0}^NA_{i}(\Vector{u})~p_0^i~=~\sum_{i=0}^N\su^{\alpha_i}\B_{i}(\Vector{u})~p_0^i, 
	%\text{ where } \su~=~u_0+u_1+u_2+u_3.
\end{align*}
where $\su=u_0+u_1+u_2+u_3$, and $R_i\in \Q[{\Vector{u}}]\backslash \langle \su\rangle$. 

\noindent
{\bf Step 1.}
First, we compute $N, (\alpha_0, \ldots, \alpha_N)$,  and $\deg(A_i, u_j)$
 %the degree of $u_j$ in $A_i(\Vector{u})$ 
 for  $j=0, \ldots, 3$ and for $i=0, \ldots, N$.   
%As an example, we show how to compute $\deg(A_i, u_0)$ in details.
\begin{comment}
{\bf Step 1.1.} In this sub-step, we compute $N$ and $(\alpha_0, \ldots, \alpha_N)$.
Substitute $u_j= b_j t + c_j$ into $f_0, \ldots, f_5$, where $b_j, c_j$ are  integers such that $\sum_{k=0}^n b_k$ and $\sum_{k=0}^n  c_k$ are coprime. For example, we
substitute $u_0=3t+10$, $u_1=7t+2$, $u_2=5t+4$ and $u_3=t+13$, and rename the polynomials after the substitution as 
$\hat f_0, \ldots, \hat f_5$.
We compute the radical of $\langle \hat f_0, \ldots, \hat f_{5}\rangle\cap {\mathbb Q}[t, p_0]$ and get
\[\hat g(t, p_0)~=~5(16t+29)^2p_0^3-(16t+29)(176t+317)p_0^2+(3t+10)(367t+602)p_0-12(3t+10)^2.\]
We have $N = \deg(\hat g(t, p_0),p_0) = 3$ \timo{I changed $g$ to $\hat g$.}.  By the chosen substitution, we have  $\su|_{u_j=b_j t+c_j}=16t+29$. Checking the multiplicity of 
the factor 16t+29 in each coefficient of $g(t, p_0)$, we have $a_3=2, a_2=1$  and $a_1=a_0=0$.

{\bf Step 1.2.} In this sub-step, we compute the degree of $u_j$ in $A_i(\Vector{u})$ for each $j=0, \ldots, 3$ and for each $i=0, \ldots, N$.   
As an example, we show how to compute the degrees of $u_0$ in details. %One can similarly compute the degrees of other parameters. 
\end{comment}
For each $u_j\neq u_0$, substitute $u_j= b_j$ into $f_0, \ldots, f_5$, where $b_j$ is a random  rational number. For instance, 
we choose $\Vector{b}=(b_1, b_2, b_3)=(2, 12, 7)$. 
We substitute $u_j=b_j$, and rename the resulting polynomials as $f^*_0, \ldots, f^*_5$.  Note that 
$f^*_k= f_k(u_0, \Vector{b}, \Vector{p}, \Vector{\lambda})$. 
We obtain a generator of $\sqrt{\langle f^*_0, \ldots, f^*_{5}\rangle\cap {\mathbb Q}[u_0, p_0]}$ by computing a Gr\"obner basis:
\[g^*(u_0, p_0)~=~10(u_0+21)^2p_0^3-(u_0+21)(43u_0+276)p_0^2+2u_0(29u_0+396)p_0-24u_0^2.\]
If $\Vector{b}$ is generic in the parameter space $\C^3$, then by Corollary \ref{cry:ei} (2), 
$g^*(u_0, p_0)=\h(u_0, \Vector{b}, p_0)$. 
So, we have \[N=\deg(\h(\Vector{u}, p_0), p_0)= \deg(\h(u_0, \Vector{b}, p_0), p_0)= \deg(g^*,p_0) = 3.\]
  %By Corollary \ref{cry:ei}, checking the degree of $u_0$ in each coefficient of $g^*(u_0, p_0)$, we find that  the degrees of $u_0$ in $A_3(\Vector{u}), \ldots, A_0(\Vector{u})$ are all $2$. 
And, for $i=0, \ldots, N (=3)$, we  have \[\deg(A_i(\Vector{u}), u_0) = \deg(A_i(u_0, \Vector{b}), u_0)=\deg(\coeff(g^*, p_0^i), u_0)=2.\] 
So, we record $\dL(1)=\deg(A_3, u_0)=2$ and $\dD(i+1, 1)=\deg(A_i, u_0)=2$ for $i=0, 1, 2$.  Similarly, we compute the degrees of other parameters, and have 
\[\dL~=~\left[2, 2, 2, 2\right],\;\; \text{and}\;\;
\dD~=~\left[
\begin{array}{cccc}
2&0&0&0 \\
2&1&1&1 \\
2&2&2&2
\end{array}
\right],
\]
where $\dL(j+1)$ records $\deg(A_3, u_j)$,
 and 
$\dD(i+1, j+1)$ records $\deg(A_i, u_j)$ for $i=0, 1, 2$.  Notice $\mathcal{S}(u_0, \Vector{b})=u_0+21$. By Proposition \ref{lm:sfactor}, checking the multiplicity of 
the factor $u_0+21$ in each $\coeff(g^*, p_0^i)$ for $i=0, \ldots, 3$, we have $\alpha_0=\alpha_1=0$, $\alpha_2=1$, and $\alpha_3=2$. 

\bigskip
\noindent
{\bf Step 2.} The second step is to recover the leading coefficient $A_N(\Vector{u})$.
By {\bf Step 1}, we know $N=3$ and $\alpha_3=2$. We write $A_N$ as $A_3(\Vector{u})=\su^2\B_3(\Vector{u})$. By the degrees recored in $\dL$, we know the degrees of $u_0, u_1, u_2, u_3$ in $A_3(\Vector{u})$ are all $2$. 
So, $\B_3(\Vector{u})\in {\mathbb Q}$.  According to the assumption (A2), $A_3(\Vector{u})$ is monic with respect to $u_0$. Hence, $\B_3(\Vector{u})=1$, and therefore,   
$A_3(\Vector{u})=\su^2$.

\bigskip
\noindent
{\bf Step 3.} The last step is to interpolate the coefficients $A_0(\Vector{u}), A_1(\Vector{u})$ and $A_{2}(\Vector{u})$. 
As an example, we show how to interpolate $A_2(\Vector{u})$ in details.
By {\bf Step 1}, we have $\alpha_2=1$. So we write $A_2(\Vector{u})=\su\B_2(\Vector{u})$.  
By the last row of $\dD$,  the degrees of $u_0, u_1, u_2, u_3$ in $A_2(\Vector{u})$ are $2, 2, 2, 2$. 
Thus, the degrees of $u_0, u_1, u_2, u_3$ in $\B_2(\Vector{u})$ are $1, 1, 1, 1$, respectively. 
By ({\bf F1}), $A_2$ is homogenous, and we have $\deg(A_2)=\deg(A_3)=2$. So $R_2$ is also homogenous, and $\deg(R_2)=\deg(A_2)-\deg(\su)=1$. 
Then we can assume $\B_2(\Vector{u})=\sum_{k=0}^3C_ku_k$, where 
$C_k\in {\mathbb Q}$.  In order to determine the four coefficients $C_k$, we establish four linear equations by sampling four times.
The correctness of this sampling step is guaranteed by Corollary \ref{cry:ei} (1).
We show below how to do the sampling and establish the first linear equation \eqref{eq:samp1}  in details. The other equations \eqref{eq:samp2}--\eqref{eq:samp4} are similarly obtained. 

%{\bf Sampling 1.}
Here, we show the steps for the first sampling. For every $u_j$, substitute $u_j= b_j$ into $f_0, \ldots, f_5$, where $b_j$ is a random rational number. For instance, 
we choose $\Vector{b}=(b_0, b_1, b_2, b_3)=(5, 6, 11, 32)$. 
We substitute $u_j=b_j$, and rename the resulting polynomials as $\tilde{f}_0, \ldots, \tilde{f}_5$. 
Note $\tilde{f}_k=f_k(\Vector{b}, \Vector{p}, \Vector{\lambda})$. 
We compute a generator  of $\sqrt{\langle \tilde{f}_0, \ldots, \tilde{f}_{5}\rangle\cap {\mathbb Q}[p_0]}$ and make it monic: 
\[\tilde{g}^{(1)}(p_0)~=~p_0^3-\frac{7}{5}p_0^2+\frac{481}{1458}p_0-\frac{5}{243}.\]
By Corollary \ref{cry:ei} (1), if $\Vector{b}$ is generic in $\C^4$, then $\tilde{g}^{(1)}(p_0)=\frac{\h(\Vector{b},  p_0)}{A_3(\Vector{b})}$. So 
$\coeff(\tilde{g}^{(1)}, p_0^2)=\frac{A_2(\Vector{b})}{A_3(\Vector{b})}$. 
By the discussion above, $A_2(\Vector{u})=\su\sum_{k=0}^3C_ku_k$ and by {\bf Step 2},
 $A_3(\Vector{u})=\su^2$. So we have 
 %{\scriptsize
\begin{align}\label{eq:samp1}
-\frac{7}{5}~=~\coeff(\tilde{g}^{(1)}, p_0^2)~=~ \frac{A_2(\Vector{b})}{A_3(\Vector{b})}
%~=~\frac{\su\sum_{k=0}^nC_ku_k}{\su^2}|_{u_0=5, u_1=6, u_2=11, u_3=32}
~=~\frac{5C_0+6C_1+11C_2+32C_3}{54}.
\end{align}
%}
Similarly, we obtain the other linear equations by samplings:

\noindent
%{\bf ~Sampling 2.}
%We sample again with a different choice of random rational numbers $b_i$. For instance, we substitute $u_0=11, u_1=2, u_2=3, u_3=8$
 %into $f_0, \ldots, f_5$ and rename the resulting polynomials as $\tilde{f}_0, \ldots, \tilde{f}_5$.  
%We compute the radical of $\langle \tilde{f}_0, \ldots, \tilde{f}_{5}\rangle\cap {\mathbb Q}[p_0]$ and make it monic: 
%\[\tilde{g}^{(2)}(p_0)~=~p_0^3-\frac{311}{120}p_0^2+\frac{33}{16}p_0-\frac{121}{240}.\]
%By the same reasoning, we have  
%{\scriptsize
\begin{align}\label{eq:samp2}
-\frac{311}{120}
%~=~
%\frac{\su\sum_{k=0}^nC_ku_k}{\su^2}|_{u_0=11, u_1=2, u_2=3, u_3=8}
~=~\frac{11C_0+2C_1+3C_2+8C_3}{24},
\end{align}
%}
%{\bf Sampling 3.}
%We  substitute $u_0=7, u_1=2, u_2=5, u_3=9$ into 
%$f_0, \ldots, f_5$ and rename the resulting polynomials as $\tilde{f}_0, \ldots, \tilde{f}_5$.  
%We compute the radical of $\langle \tilde{f}_0, \ldots, \tilde{f}_{5}\rangle\cap {\mathbb Q}[p_0]$ and make it monic: 
%\[\tilde{g}^{(3)}(p_0)~=~p_0^3-\frac{244}{115}p_0^2+\frac{672}{529}p_0-\frac{588}{2645}.\]
%Then, we have  
%{\scriptsize
\begin{align}\label{eq:samp3}
-\frac{244}{115}~=~
%\frac{\su\sum_{k=0}^nC_ku_k}{\su^2}|_{u_0=7, u_1=2, u_2=5, u_3=9}~=~
\frac{7C_0+2C_1+5C_2+9C_3}{23},
\end{align}
%}
%{\bf Sampling 4.}
%We  substitute $u_0=7, u_1=3, u_2=13, u_3=21$ into 
%$f_0, \ldots, f_5$ and rename the resulting polynomials as $\tilde{f}_0, \ldots, \tilde{f}_5$.  
%We compute the radical of $\langle \tilde{f}_0, \ldots, \tilde{f}_{5}\rangle\cap {\mathbb Q}[p_0]$ and make it monic: 
%\[\tilde{g}^{(4)}(p_0)~=~p_0^3-\frac{181}{110}p_0^2+\frac{5873}{9680}p_0-\frac{147}{2420}.\]
%Then, we have 
%{\scriptsize  
\begin{align}\label{eq:samp4}
-\frac{181}{110}~=~
%\frac{\su\sum_{k=0}^nC_ku_k}{\su^2}|_{u_0=7, u_1=3, u_2=13, u_3=21}~=~
\frac{7C_0+3C_1+13C_2+21C_3}{44}.
\end{align}
%}

\begin{comment}
{\bf Sampling 5.}
%We  substitute $u_0=15, u_1=4, u_2=9, u_3=7$ into 
%$f_0, \ldots, f_5$ and rename the resulting polynomials as $\tilde{f}_0, \ldots, \tilde{f}_5$.  
%We compute the radical of $\langle \tilde{f}_0, \ldots, \tilde{f}_{5}\rangle\cap {\mathbb Q}[p_0]$ and make it monic: 
%\[\tilde{g}^{(5)}(p_0)~=~p_0^3-\frac{458}{175}p_0^2+\frac{2442}{1225}p_0-\frac{108}{245}.\]
%Then, we have  
\begin{align}\label{eq:samp5}
-\frac{458}{175}= \frac{A_2(\Vector{u})}{A_3(\Vector{u})}|_{u_j=b_j}=\frac{\su(\sum_{k=0}^nC_ku_k)}{\su^2}|_{u_0=15, u_1=4, u_2=9, u_3=7}=\frac{15C_0+4C_1+9C_2+7C_3+C_4}{35}.
\end{align}
\end{comment}

\noindent
Solve $C_0,\ldots, C_4$ from the $5$ linear equations \eqref{eq:samp1}--\eqref{eq:samp4}, we have
 \[C_0=-\frac{43}{10},\; C_1=-2, \;C_2=-\frac{3}{2},\; C_3=-\frac{4}{5},\] 
 and hence
$A_2(\Vector{u})=-\su(\frac{43}{10}u_0+2u_1+\frac{3}{2}u_2+\frac{4}{5}u_3)$. 
The computational result is consistent with \eqref{eq:ex1h} computed by Gr\"obner bases, if we make $\h$ in \eqref{eq:ex1h} monic with respect to 
$\su^2p_0^3$.  One can similarly interpolate 
$A_0(\Vector{u})$ and $A_1(\Vector{u})$.

\section{Implementation and Computational Results}\label{sec:implementation}

In this section, we explain the implementation details, and compare the timings of Algorithm \ref{interpolation} and existing methods by testing a list of 
interesting algebraic models.

\subsection{Implementation} We first explain the implementation and experimental details. 

\begin{description}
	\item[Software] Algorithm \ref{interpolation} has been implemented in {\tt Maple 2018}, where we use the  {\tt FGb} command {\tt fgb\_gbasis\_elim}
for computing  elimination ideals, for instance,  
in Algorithm \ref{degree}-Lines \ref{elim22}, \ref{elim28},  Algorithm \ref{sampleLC}-Line \ref{elim31} and Algorithm \ref{sample}-Line \ref{elim61}.
{\tt Maple} code and computational results are available online via:
\end{description}
{\footnotesize
\begin{center}
	 \url{https://sites.google.com/site/rootclassification/publications/supplementary-materials/lle2018}.
\end{center}
}
\begin{description}
\item[Hardware and System] We used a 3.2 GHz Intel Core i5 processor (8 GB of RAM) under OS X 10.9.3. 
\item[Testing Models] Models \ref{ex:l1}--\ref{ex:l9} are chosen from the literatures \citep*{SAB2005, DSS2009} and have been tested by both standard elimination method and 
 Algorithm \ref{interpolation}. See the Appendix \ref{appendix} for more details.
\end{description}

\subsection{Computing elimination ideals}
%\textcolor{blue}{For comparisons, we have also implemented the standard Algorithm \ref{dxj} in {\tt Maple 2018}. }
 %We compare the computational timings in Table \ref{literatureOld}.
 We have computed the radical elimination ideals $\h$ for Models \ref{ex:l1}--\ref{ex:l9} by standard elimination, \cite[Algorithm 2]{Tang2017} and Algorithm \ref{interpolation}. 
 Table \ref{literatureOld} compares the timings of the three methods.  

 \smallskip
 
 \noindent
{\bf Conclusion from Table  \ref{literatureOld}:}
For smaller models with ML-degree less than $5$, computing Gr\"obner bases directly (standard elimination) is the fastest;
for larger models with ML-degree greater than $5$, Algorithm \ref{interpolation} is the fastest. Particularly, comparing columns ``Interpolation" and ``Algorithm \ref{interpolation}", we see the structure of elimination ideals revealed by Theorem \ref{th:main} (or, Corollary \ref{cry:main}) indeed improves the efficiency significantly.  

\smallskip
 
 \noindent
 {\bf Instruction  for Table \ref{literatureOld}:} 
 \begin{enumerate}
 \item The columns ``$\# p_i$''  and ``ML-Degree'' give the number of probability variables $n$ and ML-degree $N$, respectively.
\item We record the timings of standard elimination in the column ``standard". Standard elimination means to compute the elimination ideal $\langle \Vector{f}\rangle\cap \Q[\Vector{u}, p_0]$ 
by 
running  {\tt FGb} command {\tt fgb\_gbasis\_elim}.  When {\tt FGb} returned no output until we run out the memory,  we record ``$\infty$''.  
\item We record the timings of \cite[Algorithm 2]{Tang2017} and Algorithm \ref{interpolation} in columns ``Interpolation" and ``Algorithm \ref{interpolation}". The italics font timings in columns ``Interpolation" and ``Algorithm \ref{interpolation}" means the computation did not finish in two weeks, but we estimate the sampling timing 
providing a lower bound (see Example \ref{ex:lb}). 
\end{enumerate}

\begin{example}\label{ex:lb}
The italics timings in Table \ref{literatureOld} are the estimated total timings for sampling ({\bf Step 3} in Section \ref{sec:runex}). 
We explain how to estimate these timings by the running example in Section \ref{sec:runex}. 
There are $4$ parameters $u_{i}$ $(i=0, 1, 2, 3)$. We know $\deg(A_2, u_i)$ are all $2$. Also 
by ({\bf F1}) and ({\bf A1}), $A_2$ is homogenous and $\deg(A_2)=\deg(A_2, u_0)=2$. So $A_2$ is a linear combination of $10$ monomials. 
\cite[Algorithm 2]{Tang2017} interpolates $A_2(\Vector{u})$ directly without any structure, so we need to sample $10$ times. 
However, Algorithm \ref{interpolation} interpolates $R_2(\Vector{u})$, which is a factor of $A_2(\Vector{u})$ as shown in the running example,  so we only need to sample 
$4$ times since 
there are $4$ possible monomials in $R_2(\Vector{u})$. 
 We check by {\tt Maple} the timing for doing sample once in {\bf Step 3} is 0.02 second. 
  Then we estimate the timing of sampling 
in \cite[Algorithm 2]{Tang2017} and Algorithm \ref{interpolation} are $0.02 \times 10=2$ (seconds) and $0.02 \times 4=0.08$ (second), respectively.  
\end{example}

\subsection{Computing discriminants}\label{sec:discrim}
Given $\h(\Vector{u}, p_0)$, one straightforward way to get $\mathrm{discr}(\h; p_0)$ is to run {\tt Maple} command {\tt discrim}.  
When $\h$ is large,  we suggest to apply Corollary \ref{cry:discr} since {\tt discrim} might  not be efficient enough, see Example \ref{ex:discr}. 
\begin{comment}
\begin{example}[Example \ref{ex:discr} continued]\label{ex:discr2}
In Example \ref{ex:discr}, for the four-sided die model with ML-degree $N=3$,  we have $\F|_{x_0=\su p_0} = \su^2 \h$ by \eqref{eq:ex1h} and \eqref{eq:ex1sF}. 
So in this example, the integer $t$ and the polynomial $C(\Vector{u})$ stated in Theorem \ref{th:main} are $t=2$ and $C(\Vector{u})=1$.    
Also, by \eqref{eq:ex1F},  $\F\in  \langle \su\rangle$.
So by Corollary \ref{cry:discr} (see Equality \eqref{eq:d1}), 
\begin{align*}
%h(u_0, \ldots, u_n, p_0)~=~
\mathrm{discr}(\h; p_0)~=~(u_0+u_1+u_2+u_3)^{2}~{\mathcal D}_3(\tilde{B}_0, \tilde{B}_1, \tilde{B}_2, \tilde{B}_3), %\text{\textcolor{red}{is homogenous}}.
\end{align*}
where by Example \ref{ex:discr} (iii), 
\begin{align}\label{eq:exBt}
\begin{tabular}{ll}
$\tilde{B}_3=10$,&$\tilde{B}_2=(43u_0+20u_1+15u_2+8u_3)$,\\
$\tilde{B}_1=2u_0(29u_0+23u_1+21u_2+14u_3)$,&$\tilde{B}_0=-24u_0^2(u_0+u_1+u_2+u_3)$,
\end{tabular}
\end{align}
and ${\mathcal D}_3=-27c_0^2c_3^2+18c_0c_1c_2c_3-4c_0c_2^3-4c_1^3c_3+c_1^2c_2^2$, 
which can be computed by running {\tt discrim}$(\sum_{i=0}^3c_iz^i, z)$ in {\tt Maple}.
So, we will get $\mathrm{discr}(\h; p_0)$ by simply substituting $c_i=\tilde{B}_i$ into ${\mathcal D}_3$. 
\end{example}
\end{comment}
By the approach described in Example \ref{ex:discr}, we have computed discriminants for Models \ref{ex:l4}, \ref{ex:l6} and \ref{ex:l9}.
%\footnote{$\mathrm{discr}(\h; p_0)$ of Model \ref{ex:l4} is available online. Discriminants of \ref{ex:l6} and \ref{ex:l9} (see table Table \ref{comparediscr}) are too large to upload. Email to xiaoxian@math.tamu.edu if you are interested in these discriminants.}.  
We compare the computational timing of our method with \cite[Algorithm 2]{Tang2017} in Table \ref{comparediscr}.
% the timings of our method for computing $\mathrm{discr}(\h; p_0)$  and \cite[Algorithm 2]{Tang2017} for computing $\DD_{J}$, 
 %where
%$\DD_{J}$ is a factor of   $\mathrm{discr}(\h; p_0)$. 

\noindent
{\bf Conclusion from Table  \ref{comparediscr}.}
The new proposed method for computing $\mathrm{discr}(\h; p_0)$ is much more faster than  \cite[Algorithm 2]{Tang2017} for computing $\DD_{J}$, where
$\DD_{J}$ is a factor of   $\mathrm{discr}(\h; p_0)$.

\noindent
 {\bf Instruction  for Table \ref{comparediscr}.} 
 \begin{enumerate}
 \item The columns ``Degree''  and ``Size'' give the total degree of $\mathrm{discr}(\h; p_0)$ and the size of text file, respectively.
\item The italics font timing in the last column ``\cite[Algorithm 2]{Tang2017}" means the computation did not finish in two weeks, but the sampling timing 
can be estimated as shown in \cite[Example 7]{Tang2017}. 
\end{enumerate}

\begin{table}[t]
\small
\centering
\begin{tabular}{|c|c|c|c|c|c|c|} \hline
\multirow{2}{*}{Models}&\multirow{2}{*}{Degree}&\multirow{2}{*}{Size}&
\multicolumn{3}{|c|}{Our Method}&\multirow{2}{*}{\cite[Algorithm 2]{Tang2017}}\\
\cline{4-6} 
 &&&$\h$& $\mathrm{discr}(\h;p_0)$ &Total&\\ \hline
  Model \ref{ex:l4} &110 & 7.5 MB&782.676 s &0.027 s &{\bf 783} s&805 s\\ \hline
      Model \ref{ex:l6} &342&\textcolor{red}{$>${\it 32 GB}}  &14 d & 11.379 s&{\bf 14} d &\textcolor{red}{$>${\it 13374 d}}\\ \hline
            Model \ref{ex:l9} &176&  8.68 GB & 2 d &81.015 s&{\bf 2} d&\textcolor{red}{$>${\it  454833 d}}\\ \hline
\end{tabular}
\smallskip 
\caption{Runtimes for computing discriminants $\mathrm{discr}(\h;p_0)$ (s: seconds; d: days). }
\label{comparediscr}
%$\bullet$ Table \ref{comparediscr} shows the sizes of $\mathrm{discr}(\h;p_0)$ and 
 %compares the timings of computing $\mathrm{discr}(\h;p_0)$ and computing a factor of $\mathrm{discr}(\h;p_0)$ in \cite{Tang2017} by
 %a 3.2 GHz Inter Core i5 processor (8GB total memory) under OS X 10.9.3.
 %$\bullet$ The italics font timing in column ``\cite{Tang2017}" means the computation did not finish in one day, but the sampling timing 
%can be estimated as shown in \cite[Example 7]{Tang2017}. 
%Timings in parenthesis are from our previous implementation. 
\end{table}

\bibliographystyle{alpha}
\bibliography{rs}
\begin{comment}
\newcommand{\etalchar}[1]{$^{#1}$}

\end{comment}

\appendix

\section{Testing Models in Table \ref{literatureOld}}\label{appendix}

\footnotesize

\begin{model}\citep*[Random Censoring Model]{DSS2009}\label{ex:l1}
%{\bf Example 3 (Random Censoring  (Ex. 2.2.2 in \citep*{DSS2009})).}
\[2p_0p_1p_2 + p_1^2p_2 + p_1p_2^2 - p_0^2p_{12} + p_1p_2p_{12}=0, \;\;\; p_0 + p_1 + p_2 + p_{12} = 1\]
\end{model}

\begin{model}\citep*[$3\times 3$ Zero-Diagonal Matrix]{EJ2014}\label{ex:l2}
%\noindent
%{\bf Example 4 ($3\times 3$ Zero-Diagonal Matrix \citep*{EJ2014}).}
 {\begin{align*}
\det \left[
\begin{array}{ccc}   
    0&    p_{12}    & p_{13} \\   
    p_{21} &    0 & p_{23}\\   
    p_{31} &   p_{32} &  0
\end{array}
\right]=0,\;\;\;p_{12} + p_{13} + p_{21} + p_{23} + p_{31} + p_{32} =1
\end{align*}}
\end{model}

\begin{model}\citep*[Grassmannian of $2$-planes in ${\mathbb C}^4$]{SAB2005, EJ2014}\label{ex:l3}
%\noindent
%{\bf Example 5 (Grassmannian of $2$-planes in ${\mathbb C}^4$ \citep*{SAB2005, EJ2014}).}
\[p_{12}p_{34}-p_{13}p_{24}+p_{14}p_{23}=0, \;\;\; p_{12} + p_{13} + p_{14} + p_{23} + p_{24} + p_{34} =1\]
\end{model}
%\end{itemize}
\begin{model}\citep*[$3\times 3$ Symmetric Matrix]{SAB2005}\label{ex:l4}
%\noindent
%{\bf Example 6 ($3\times 3$ Symmetric Matrix Model).}
%[$3\times 3$ Symmetric Matrix]
{\begin{equation*}
\det\left[
\begin{array}{cccc}   
    2p_{11} &    p_{12}    & p_{13} \\   
    p_{12} &    2p_{22}   & p_{23}\\   
    p_{13} & p_{23} & 2p_{33} 
\end{array}
\right]=0, \;\;\; p_{11} + p_{12} + p_{13} + p_{22} + p_{23} + p_{33} =1
\end{equation*}}
%\[p_{11} + p_{12} + p_{13} + p_{22} + p_{23} + p_{33} =1\]
\end{model}

\begin{model}\citep*[Bernoulli $3\times 3$ Coin]{SAB2005}\label{ex:l5}
{\footnotesize \begin{equation*}
\det\left[
\begin{array}{cccc}   
    12p_{0} &    3p_{1}    & 2p_{2} \\   
    3p_{1} &    2p_{2}   & 3p_{3}\\   
    2p_{2} & 3p_{3} & 12p_{4} 
\end{array}
\right]=0,\;\;\; p_{0} + p_{1} + p_{2} + p_{3} + p_{4}  =1
\end{equation*}}
%\[p_{0} + p_{1} + p_{2} + p_{3} + p_{4}  =1\]
\end{model}

\begin{model}\citep*[$3\times 3$ Matrix]{SAB2005}\label{ex:l6}
{\footnotesize \begin{equation*}
\det\left[
\begin{array}{cccc}   
    p_{00} &    p_{01}    & p_{02} \\   
    p_{10} &    p_{11}   & p_{12}\\   
    p_{20} & p_{21} &  p_{22} 
\end{array}
\right]=0, \;\;\; p_{00} + p_{01} + p_{02} + p_{10} + p_{11} + p_{12} +  p_{20} + p_{21} + p_{22} =1
\end{equation*}}
%\[p_{00} + p_{01} + p_{02} + p_{10} + p_{11} + p_{12} +  p_{20} + p_{21} + p_{22} =1\]
\end{model}

%\begin{model}\citep*[Projection of $3\times 4$ Matrix]{EJ2014}\label{ex:l7}
%\[p_{12}p_{23}p_{34}-p_{12}p_{24}p_{33}-p_{13}p_{22}p_{34}+p_{13}p_{24}p_{32}+p_{14}p_{22}p_{33}-p_{14}p_{23}p_{32}=0,\]
%\[p_{12} + p_{13} + p_{14} + p_{22} + p_{23} + p_{24} +  p_{32} + p_{33} + p_{34} =1\]
%\end{model}

\begin{model}\citep*[Juke-Cantor Model, Example 18]{SAB2005}\label{ex:l7}
\[q_{000}q_{111}^2 - q_{011} q_{101} q_{110}=0, \;\;\; p_{123} + p_{dis} + p_{12} + p_{13} + p_{23}=1\]
where\\
$q_{111} = p_{123} + \frac{p_{dis}}{3} - \frac{p_{12}}{3} - \frac{p_{13}}{3} - \frac{p_{23}}{3}$, 
$q_{110} = p_{123} -  \frac{p_{dis}}{3} + p_{12} - \frac{p_{13}}{3} - \frac{p_{23}}{3}$,\\
$q_{101} = p_{123} -  \frac{p_{dis}}{3} - \frac{p_{12}}{3} + p_{13} - \frac{p_{23}}{3}$, 
$q_{011} = p_{123} -  \frac{p_{dis}}{3} - \frac{p_{12}}{3} - \frac{p_{13}}{3} + p_{23}$,\\
$q_{000} = p_{123} + p_{dis} + p_{12} + p_{13} + p_{23}$.
\end{model}

\begin{model}\citep*[Example 15]{SAB2005}\label{ex:l8}
\[q_2q_7-q_1q_8=0, \;\;\;q_3q_6-q_5q_4=0, \;\;\;
p_1 + p_2 + p_3 + p_4 + p_5 + p_6 + p_7 + p_8=1\]
where\\
$q_1 = p_1 + p_2 + p_3 + p_4 + p_5 + p_6 + p_7 + p_8$, 
$q_2 = p_1 - p_2 + p_3 - p_4 + p_5 - p_6 + p_7 - p_8$,\\
$q_3 =  p_1 + p_2 - p_3 - p_4 + p_5 + p_6 - p_7 - p_8$,
$q_4 =  p_1 - p_2 - p_3 + p_4 + p_5 - p_6 - p_7 + p_8$,\\
$q_5 =  p_1 + p_2 + p_3 + p_4 - p_5 - p_6 - p_7 - p_8$,
$q_6 =  p_1 - p_2 + p_3 - p_4 - p_5 + p_6 - p_7 + p_8$,\\
$q_7 =  p_1 + p_2 - p_3 - p_4 - p_5 - p_6 + p_7 + p_8$,
$q_8 =  p_1 - p_2 - p_3 + p_4 - p_5 + p_6 + p_7 - p_8$.
\end{model}

\begin{model}\citep*[$P_{comb}$, Example 15]{SAB2005}\label{ex:l9}
\[q_3-q_5,  \;\;\;q_2-q_5, \;\;\; q_4-q_6, \;\;\;q_5q_7-q_1q_8=0, \;\;\;
p_1 + p_2 + p_3 + p_4 + p_5 + p_6 + p_7 + p_8=1\]
where\\
$q_1 = p_1 + p_2 + p_3 + p_4 + p_5 + p_6 + p_7 + p_8$, 
$q_2 = p_1 - p_2 + p_3 - p_4 + p_5 - p_6 + p_7 - p_8$,\\
$q_3 =  p_1 + p_2 - p_3 - p_4 + p_5 + p_6 - p_7 - p_8$,
$q_4 =  p_1 - p_2 - p_3 + p_4 + p_5 - p_6 - p_7 + p_8$,\\
$q_5 =  p_1 + p_2 + p_3 + p_4 - p_5 - p_6 - p_7 - p_8$,
$q_6 =  p_1 - p_2 + p_3 - p_4 - p_5 + p_6 - p_7 + p_8$,\\
$q_7 =  p_1 + p_2 - p_3 - p_4 - p_5 - p_6 + p_7 + p_8$,
$q_8 =  p_1 - p_2 - p_3 + p_4 - p_5 + p_6 + p_7 - p_8$.
\end{model}

%Add filename; convention: use name of corresponding .tex file.
% \bibliography{./}

\end{document}